\newcommand{\Rmnum}[1]{\expandafter\@slowromancap\romannumeral #1@}
\newcommand{\F}{\mathbb{F}}
\newcommand {\C}{{\mathcal{C}}}
\newcommand {\ccc}{{\mathbf{c}}}
\DeclareMathOperator{\wt}{wt_H}
\newtheorem{open}[theorem]{Open Problem}
\begin{document}

\title{Cyclic codes from low differentially uniform functions}


\titlerunning{Cyclic codes from functions with low differential uniformity}

\author{Sihem Mesnager \inst{1}
\and Minjia Shi\ \inst{2}
\and Hongwei Zhu\inst{2}
}
\authorrunning{S. Mesnager, M. Shi, and H. Zhu}

\institute{
Department of Mathematics, University of Paris VIII, F-93526 Saint-Denis, University Sorbonne Paris Cit\'e, LAGA, UMR 7539, CNRS, 93430 Villetaneuse and Telecom Paris, Polytechnic Institute of Paris, 91120 Palaiseau, France.\\
\email{smesnager@univ-paris8.fr}
\and School of Mathematical Sciences, Anhui University, 230601, Hefei, China.\\
 \email{smjwcl.good@163.com, zhwgood66@163.com}
}

\maketitle

\date{\today}

\begin{abstract}
 Cyclic codes have many applications in consumer electronics, communication and data storage systems due to their efficient encoding and decoding algorithms. An efficient approach to constructing cyclic codes is the sequence approach. In their articles [Discrete Math. 321, 2014] and [SIAM J. Discrete Math. 27(4), 2013],
 Ding and Zhou constructed several classes of
cyclic codes from almost perfect nonlinear (APN) functions and planar functions over
finite fields and presented some open problems on cyclic codes from highly nonlinear functions. This article focuses on these exciting works by investigating new insights in this research direction. Specifically, its objective is twofold. The first is to provide a complement with some former results and present correct proofs and statements on some known ones on the cyclic codes from the APN functions. The second is studying the cyclic codes from some known functions processing low differential uniformity. Along with this article, we shall provide answers to some open problems presented in the literature. The first one concerns Open Problem 1, proposed by Ding and Zhou in Discrete Math. 321, 2014. The two others are Open Problems 5.16 and 5.25, raised by Ding in [SIAM J. Discrete Math. 27(4), 2013].
\end{abstract}

\noindent
 {\it Keywords.}  Cyclic code, Linear span, Sequence, Differential uniformity, Differential Cryptanalysis, S-Box,  APN function. \\
{\bf Mathematics Subject Classification: } 94 B15, 94 B05, 94 A55, 11B83

\section{Introduction}\label{introduction}
Let $q$ be a power of a prime $p$. A linear code over $\F_q$ with parameters $[n,k,d]$ is a $k$-dimensional subspace of $\F_q^n$ with minimum Hamming  distance $d$.
Let $\tau(x_0,x_1,\cdots,x_{n-1})$ denote the cyclic shift of  the vector $(x_0,x_1,\ldots,x_{n-1})$
 by the coordinates $i\mapsto i+1 {~\rm mod~} n$ (that is, the vector $(x_{n-1},x_0, \ldots,x_{n-2})$ obtained from $(x_0,x_1,\ldots,x_{n-1})$) by $\tau$).
A linear $[n,k]$ code $\C$ over $\F_q$ is called cyclic if $\ccc\in \C$ implies $\tau(\ccc)\in \C.$
 By identifying  a vector $(c_0,c_1,\ldots,c_{n-1})\in\F_q^n$ with
 $\sum_{i=0}^{n-1}c_0x^i\in \frac{\F_q[x]}{(x^n-1)},$
 a linear code $\C$ of length $n$ over $\F_q$ corresponds to a subset of the residue class $\frac{\F_q[x]}{(x^n-1)}.$ The linear code $\C$ is cyclic if and only if the corresponding subset in $\frac{\F_q[x]}{(x^n-1)}$ is an ideal of the ring $\frac{\F_q[x]}{(x^n-1)}.$
 It is well known that every ideal of $\frac{\F_q[x]}{(x^n-1)}$ is principal.
  To distinguish the principal ideal $(g(x))$ of $\F_q[x]$ from that ideal in $\frac{\F_q[x]}{(x^n-1)}$, we use the notation $\langle g(x)\rangle$ for the principal ideal of $\frac{\F_q[x]}{(x^n-1)}$ generator by $g(x)$.
  Let $\C=\langle g(x)\rangle$ be a cyclic code, where $g(x)$ is monic and has the least degree. Let $h(x)=\frac{x^n-1}{g(x)}$. Then $g(x)$ and $h(x)$ are called the generator polynomial and check polynomial, respectively. The dual code, denoted by $\C^{\perp}$, of $C$ has generator polynomial $h^*(x)$, which is the reciprocal of $h(x)$.
Since cyclic codes have efficient encoding and decoding algorithms, they have wide applications in storage and communication systems \cite{Chien,Forney,Prange}. One way of constructing cyclic codes over $\F_q$ with length $n$ is to use the generator polynomial
\begin{equation}\label{gene}
  \frac{x^n-1}{\gcd(S(x),x^n-1)},
\end{equation}
where $S(x)=\sum_{i=0}^{n-1}s_ix^i\in \F_q[x]$ and $s^{\infty}=(s_i)_{i=0}^{\infty}$ is a sequence of period $n$ over $\F_q$.
We call the cyclic code $\C_s$ with the generator polynomial of (\ref{gene}) the code defined by the sequence $s^{\infty}$, and the sequence $s^{\infty}$ the defining sequence of the cyclic code $\C_s$. In the last decade, impressive progress has been made in constructing cyclic codes using this approach \cite{DIT1,DSIAM,DFFA1,DB1,DZ,TQX}.

 In \cite{DSIAM,DZ}, Ding and Zhou constructed some cyclic codes and posed some open problems. So far, some open problems have been solved or partially solved. Notably,  Tang {\it {et al.}} \cite{TQX} solved the Open problem 4 proposed in \cite{DZ} and the Open problem 5.29 proposed in \cite{DSIAM};
  Li {\it {et al.}} \cite{LZLK} partially solved the Open problem 3 proposed in \cite{DZ};
Rajabi and Khashyarmanesh \cite{RK} solved the Open problems 5.26 and 5.30 proposed in \cite{DSIAM}.

This article deals with cyclic codes from functions introduced in symmetric cryptography. An excellent book devoted to Boolean functions for cryptography and coding theory is due to Carlet \cite{Book-Carlet}. A survey on linear codes from cryptographic functions, including open problems, is \cite{Li-Mesnager-2020}.
A recent handbook's chapter covering general linear codes from functions, including several developments and constructions, is \cite{Mesnager-Handbook}.

 More specifically, we follow fascinating articles from Ding and Zhou by providing more achievements in this theme through new results, including refinements and corrections of some known results on cyclic codes from functions. We shall focus more precisely on  cyclic codes designed  by employing specific functions listed below.

 \begin{itemize}
             \item Cyclic codes from the Gold function (we complement some results, see Subsection \ref{Gold});
             \item Binary cyclic codes from the Kasami function (we  provide some correct results on the parameters of the cyclic codes given by \cite{DZ} and give some answers to Open Problem 1 proposed in \cite{DZ}, see Subsection  \ref{Kasami});
             \item Binary cyclic codes from Bracken-Leander function (we give some answers to Open problems 5.16 and 5.25 raised  in \cite{DSIAM} and complements to \cite[Corollary 5.15]{DSIAM}, see Subsection \ref{Bracken-Leander}).

           \end{itemize}

The paper is organized as follows. Section \ref{Preliminaries} fixes our notation and recalls some basic concepts on functions and linear codes accompanied with some related results needed in the paper. In Section \ref{Generic-Construction}, we start by presenting  in Subsection \ref{Known-results}  nice design method  initiated by Ding  and fascinating derived  results concerning  the constructions of cyclic codes  and related sequences from functions. Next, in Subsections \ref{Gold}, \ref{Kasami} and \ref{Bracken-Leander}, we study more in-depth former results dealing with design cyclic codes from Gold, Kasami, and  Bracken-Leander functions, respectively, as in the lines described above. Section \ref{Conclusion}  concludes the paper by summiting our main contributions briefly.

\section{Notation and preliminaries}\label{Preliminaries}
In this paper, we shall employ several (different but somehow related) ingredients from coding theory, sequences, and the theory of vectorial (Boolean) functions (namely, highly nonlinear functions or almost nonlinear functions). We first present some basic notation employed in subsequent sections and state preliminary results on elements from the cyclotomic field theory, coding theory, linear feedback shift registers structures and related sequences, and specific vectorial functions. All these ingredients will be helpful throughout this paper.

\subsection{Some notation fixed throughout this paper}\label{notation}
Throughout this paper, we assume that $\gcd(n,q)=1$ and  adopt the following notation unless otherwise stated.
\begin{itemize}
\item For any finite  $E$, $|E|$  will
denote the cardinality of $E$.
  \item $\mathbb {F}_{q}$ denotes the Galois field of order $q$.
\item Given a finite field $\mathbb{F}_q$,  $\mathbb{F}_q^{\star}$ denotes the multiplicative group $(\mathbb{F}_q \setminus\{0\},\times)$.
  \item Let $m$ and $s$ be two positive integers and let $p$ be a prime. We set $q:=p^s$, $r:=q^m$, and $n:=r-1$.
  \item $\mathbb{Z}_n=\{0,1,2,\ldots,n-1\}$, the ring of integers modulo $n$.
  \item $\mathbb{N}_q(x)$ is a function defined by $\mathbb{N}_q(x)=0$ if $x\equiv0({\rm{mod~}}q)$ and $\mathbb{N}_q(x)=1$, otherwise.
  \item $\alpha$ is a primitive element of $\F_{r}$.
  \item $m_a(x)$ denotes the minimal polynomial of $a\in \F_{r}$ over $\F_q$  (where $r$ is a divisor of $q$, both prime powers of the same prime).
  \item ${\rm{Tr}}$ denotes the (absolute) trace function from $\F_{2^m}$ to $\F_2$.
  \item ${\rm{Tr}}_{r/q}$ denotes the (relative) trace function from $\F_{r}$ to $\F_q$.
  \item $C_i$ denotes the $q$-cyclotomic coset modulo $n$ containing $i$.
  \item $\Gamma$ is the set of all coset leaders of the $q$-cyclotomic cosets modulo $n$.
    \item ${A\choose t}$ is the set of all $t$-subsets of $A$, where $A$ is a set and $t$ is a non-negative integer.
\end{itemize}
 By Database we mean the collection of the tables of best linear codes known maintained by Markus Grassl at http://www.codetables.de/.

\subsection{On the $q$-cyclotomic cosets modulo $n$}
A $q$-cyclotomic coset $C_s$ modulo $n$ is defined to be
$$C_s=\{s,sq,sq^2,\ldots,sq^{l_s}\} {\rm{mod~}}n \subset \mathbb{Z}_n$$
where $l_s$ is the smallest positive integer such that $sq^{l_s}\equiv s({\rm mod~}n)$, and is called the size of $C_s$.
The smallest integer in $C_j$ is called the coset leader of $C_j$. Let $\Gamma$ denote the set of all coset leaders.
 It is well known that
$$\prod_{j\in C_i}(x-\alpha^j)$$
is an irreducible polynomial of degree $l_i$ over $\F_q$ and is the minimal polynomial of $\alpha^i$ over $\F_q$.
 Moreover, we have
$$\bigcup_{j \in \Gamma}C_j=\mathbb{Z}_n$$
and
$$x^n-1=\prod_{i\in \Gamma}\prod_{j\in C_i}(x-\alpha^j).$$
For a non-negative integer $s$, the $q$-adic expansion of $s$ is defined as
$$s=i_0+i_1q+\cdots+i_{m-1}q^{m-1}$$
with $0\leq i_0,i_1,\ldots,i_{m-1}\leq q-1$, and can be simplified to the sequence of the form
$$s=(i_0,i_1,\ldots,i_{m-1}).$$

\subsection{On linear feedback shift register}
Let $s^L=s_0s_1\cdots s_{{L-1}}$ be a sequence over $\F_q$. The linear span of $s^{L}$ is defined to be the smallest positive integer $l$ such that there are constants $c_0=1, c_1,\ldots,c_l\in \F_q$ such that
$$-c_0s_i=c_1s_{i-1}+c_2s_{i-2}+\cdots+c_ls_{i-l}$$
for all $l\leq i\leq L-1.$

A shift register is converted into a code generator by including a feedback loop, which computes a new term for the left-most stage based on the $n$ previous terms. The $n$ $q$-ary storage elements $a_i$ are called the stages of the shift register, and their contents
$\xi_i=(a_i, a_{i+1}, \ldots, a_{i+n-1})$ are called the states of the shift register. The shift register is run by an external clock which generates a timing signal every $t_0$ seconds. A delay element stores one bit (from some alphabet) for one clock cycle, after which the bit is pushed out and replaced by another. A linear shift register is a series of delay elements; a bit enters at one end of the shift register and moves to the next delay element with each new clock cycle. A linear feedback shift register (LFSR for short) is a linear shift register in which the output is fed back into the shift register as part of the input. The polynomial
$$c(x)=\sum_{i=0}^{l}c_ix^i$$
is called the feedback polynomial of the shortest LFSR that generates $s^L$. Such an integer always exists for finite sequences. Moreover, any feedback polynomial of $s^{\infty}$ is called a characteristic polynomial. The characteristic polynomial with the smallest degree is the minimal polynomial of the periodic sequence $s^{\infty}$.

The following lemma gives a way to determine the related linear span and minimal polynomial of a periodic sequence $s^{\infty}$ \cite{AB}.
\begin{lemma}\label{lem1.0.1}
Any sequence $s^{\infty}$ over $\F_q$ of period $q^m-1$ has a unique expansion of the form
$$s_t=\sum_{i=0}^{q^m-2}c_i\alpha^{it}$$
for all $t\geq 0$, where $c_i\in\F_{q^m}.$ Let index set be $I=\{i:c_i\neq 0\}$, then the minimal polynomial $\mathbb{M}_s(x)=\prod_{i\in I}(1-\alpha^ix),$ and the linear span of $s^{\infty}$ is $|I|.$
\end{lemma}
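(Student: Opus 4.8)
The plan is to prove the two parts — existence and uniqueness of the expansion, and the identification of the minimal polynomial (hence the linear span) — separately, both resting on the nonsingularity of Vandermonde systems in the pairwise distinct powers $1,\alpha,\alpha^2,\dots,\alpha^{n-1}$ with $n=q^m-1$. For the expansion, first note that since $\alpha^{n}=1$ the function $t\mapsto\sum_{i=0}^{n-1}c_i\alpha^{it}$ is automatically $n$-periodic, so it is enough to match the first $n$ terms $s_0,\dots,s_{n-1}$. Viewing this as a linear system in the unknowns $(c_0,\dots,c_{n-1})$ over $\F_{q^m}$, its coefficient matrix is $\bigl(\alpha^{it}\bigr)_{0\le i,t\le n-1}$, a Vandermonde matrix in the distinct nodes $\alpha^{0},\dots,\alpha^{n-1}$; hence it is invertible and the coefficients $c_i\in\F_{q^m}$ exist and are unique. (Applying $x\mapsto x^{q}$ to $s_t=\sum_i c_i\alpha^{it}$ and invoking uniqueness gives $c_{qi\bmod n}=c_i^{q}$, so $I$ is a union of $q$-cyclotomic cosets modulo $n$; this is what will make the polynomial below lie in $\F_q[x]$, as it must, but it is not otherwise needed for the statement.)

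For the minimal polynomial, the key step is the identity, valid for every $a(x)=\sum_{j=0}^{l}a_j x^{j}$ and every $t\ge 0$,
$$\sum_{j=0}^{l}a_j\,s_{t-j}\;=\;\sum_{i\in I}c_i\,\alpha^{it}\,a(\alpha^{-i}),$$
obtained by substituting the expansion of $s$ and interchanging the sums (indices being read modulo $n$, which is harmless since $\alpha^{n}=1$). Since the functions $t\mapsto\alpha^{it}$ for distinct $i$ modulo $n$ are $\F_{q^m}$-linearly independent (again a Vandermonde argument) and $c_i\ne 0$ exactly for $i\in I$, the recurrence $\sum_{j=0}^{l}a_j s_{t-j}=0$ holds for all $t$ \emph{if and only if} $a(\alpha^{-i})=0$ for every $i\in I$. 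Reading this equivalence in one direction shows that the characteristic polynomials of $s^{\infty}$ are exactly the multiples of $\prod_{i\in I}(x-\alpha^{-i})$; reading it in the other direction shows that this product is itself a characteristic polynomial. Hence it is, up to a nonzero scalar, the minimal polynomial; rescaling it to have constant term $1$ — using $x-\alpha^{-i}=-\alpha^{-i}(1-\alpha^{i}x)$ — turns it into $\prod_{i\in I}(1-\alpha^{i}x)=\mathbb{M}_s(x)$, whose degree $|I|$ is by definition the linear span of $s^{\infty}$.

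The argument is elementary throughout; the only points that require care are bookkeeping rather than mathematics: keeping straight the normalization of the feedback/characteristic polynomial (constant term $1$ versus monic), the direction of the exponents ($\alpha^{i}$ versus $\alpha^{-i}$) when passing between the roots of $a$ and the factored form in the statement, and making sure the ``if and only if'' is genuinely used in \emph{both} directions, so that one actually exhibits a valid recurrence of length $|I|$ and not merely a lower bound on the linear span. I do not anticipate any substantial obstacle beyond this.
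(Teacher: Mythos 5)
Your proof is correct. Note that the paper does not actually prove this lemma: it is quoted from Antweiler--B\"omer \cite{AB} as a known fact, so there is no in-paper argument to compare against. What you give is the standard discrete-Fourier-transform (Blahut-type) proof, and it is complete: the Vandermonde invertibility in the distinct nodes $\alpha^0,\dots,\alpha^{n-1}$ gives existence and uniqueness of the $c_i$; the identity $\sum_j a_j s_{t-j}=\sum_{i\in I}c_i\,\alpha^{it}a(\alpha^{-i})$ together with linear independence of the functions $t\mapsto\alpha^{it}$ characterizes the characteristic polynomials exactly as the multiples of $\prod_{i\in I}(x-\alpha^{-i})$; and your parenthetical Frobenius remark ($c_{qi\bmod n}=c_i^{q}$, so $I$ is a union of $q$-cyclotomic cosets) is precisely what guarantees this product lies in $\F_q[x]$, so that the minimum is attained among admissible feedback polynomials with coefficients in $\F_q$ and the linear span equals $|I|$. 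The normalization $x-\alpha^{-i}=-\alpha^{-i}(1-\alpha^{i}x)$ correctly reconciles your monic product with the form $\prod_{i\in I}(1-\alpha^{i}x)$ stated in the lemma.
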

\subsection{Some  background from coding theory}
The \emph{Hamming weight} of a vector ${\bar a}=(a_0,a_1, \cdots,a_{n-1})$ of $\mathbb{F}_q$, denoted by $\wt({\bar a})$, is the cardinality of its \emph{support} defined as $$Supp({\bar a})=\{0\leq i\leq n-1: a_i\neq 0\}.$$ An $[n, k, d]$ linear code $\mathcal C$ over $\mathbb{F}_p$ is a $k$-dimensional vector subspace of the $n$-dimensional vector space $\mathbb{F}_{p^n}$ with minimum (nonzero) Hamming weight $d$.  An element of $\mathcal C$ is said to be a \emph{codeword}.
A linear code $\mathcal C$ of length $n$ over $\mathbb F_{q}$ its (Euclidean) dual code is denoted by $\mathcal C^ {\perp}$ and defined as
$$\mathcal C^ {\perp}=\left\{  (b_0, b_1, \cdots, b_{n-1})\in \mathbb F_{q}^n: \sum _{i=0}^{n-1} b_i c_i=0, \forall (c_0, c_1, \cdots, c_{n-1}) \in \mathcal C\right\}.
$$

\subsection{Differential uniformity and the related perfect nonlinear and planar functions}
Let $q$ be prime power, $m$  be a positive integer,  and $F$ be a function from $\F_q^m$ to itself and $a,b \in \F_{q^m}$. The {\it Difference Distribution Table} (DDT for short) of $F$ is given by $q^m\times q^m$ table ${\rm{DDT}}_f$, in which the entry for the $(a,b)$ position is given by
$${\rm{DDT}}_F(a,b)=\#\{x\in\F_{q^m}|F(x+a)-F(x)=b\}.$$
For any $a,b\in \F_{q^m}, a\neq 0,$ the value
$$\delta_F=\max_{a,b\in\F_{q^m}, a\neq 0}\#\{x\in \F_{q^m}|F(x+a)-F(x)=b\}$$
is called the {\it differential uniformity} of $f(x)$. In this case, we say that $F$ an $\delta_F$-uniform DDT function.
When $F$ is used as an S-box inside a cryptosystem, the smaller the value $\delta_F$ is the better $F$ to the resistance against differential attack. Typically, the optimal functions  $F$ satisfy $\delta_F=2$ are called {\it almost perfect nonlinear} (APN for short). Besides, if $\delta_F=1$ then $F$  is  said to be a {\it perfect nonlinear} or {\it planar}. No perfect nonlinear function exists on $\F_{2^m}$. Both planar and APN functions over $\F_q^m$ for odd $q$ exist.

In this paper, we shall concentrate  on the classical functions which are listed in the following:
\begin{itemize}
  \item[(i)] Inverse function $F_1(x)=x^{-1}$ over $\F_{2^m}$;
  \item[(ii)] Gold function $F_2(x)=x^{q^h+1}$ over $\F_{2^m}$;
  \item[(iii)] Welch function $F_3(x)=x^{2^h+3}$ over $\F_{2^m}$;
  \item[(iv)] Kasami function $F_4(x)=x^{2^{2h}-2^h+1}$ over $\F_{2^{m}}$;
  \item[(v)] Niho-1 function $F_5(x)=x^{2^h+2^{\frac{h}{2}}-1}$ over $\F_{2^m}$;
  \item[(vi)] Niho-2 function $F_6(x)=x^{2^h+2^{\frac{3h+1}{2}}-1}$ over $\F_{2^m}$;
  \item[(vii)] Dobbertin function $F_7(x)=x^{2^{4h}+2^{3h}+2^{2h}+2^h-1}$ over $\F_{2^m}$;
  \item[(viii)] Bracken-Leander function $F_8(x)=x^{2^{2h}+2^{h}+1}$ over $\F_2$.
\end{itemize}
These functions have low differential uniformity under some restrictions:
\begin{itemize}
  \item The inverse function $F_1(x)$ is APN if $m$ is odd; $F_1(x)$ is a $4$-uniform DDT function if $m$ is even (\cite{BD,Gold,Ny}).
  \item In the binary case, the Gold function $F_2(x)=x^{2^h+1}$ is APN if $m$ is odd and $\gcd(h,m)=1$; $F_2(x)$ is a $4$-uniform DDT function if $m\equiv2({\rm mod~}4)$ and $\gcd(h,m)=2$. In the non-binary case, the Gold function $F_2(x)=x^{q^h+1}$ is planar if both  $\frac{m}{\gcd(m,h)}$ and $q$ are odd (\cite{Gold,Ny}).
  \item The Welch function $F_3(x)=x^{2^{h}+3}$ is APN if $m$ is odd and $h=\frac{m-1}{2}$(\cite{Dobb2}). The Niho-1 function $F_5(x)$ is APN if $h$ is even and $m=2h+1$; The Niho-2 function is APN if $h$ is odd and $m=2h+1$\cite{Dobb3}. The Dobbertin function $F_7(x)$ is APN if $m$ is odd and $m=5h$ (\cite{Dobb4}). The above contributions are all due to  Dobbertin.
  \item The Kasami function $F_4(x)$ is APN if $m$ is odd and $\gcd(m,h)=1$ (\cite{Kasami1}).
  \item The Bracken-Leander function $F_8(x)$ is a $4$-uniform DDT function if $m=4h$ and $h$ is odd (\cite{BL,Dobb}).
\end{itemize}
For more literature about  these cryptographic concepts and related constructions  over finite fields of even characteristic, readers can refer to  the book \cite{Book-Carlet}.

\section{On a crucial generic construction of cyclic codes  as sequences from polynomials}\label{Generic-Construction}

Given any polynomial $f(x)$ over $\F_{q^m}$, we define its associated sequence $s^{\infty}$ by
\begin{equation}\label{infty}
 s_i={\rm{Tr}}_{q^m/q}(f(\alpha^i+1))
\end{equation}
for all $i\geq 0$. The code $\C_s$ defined by the sequence $s^{\infty}$ in (\ref{infty}) is called the code from the polynomial $f(x)$ for simplicity.
\subsection{The known results}\label{Known-results}
\begin{itemize}
  \item Ding studied the binary cyclic codes $\C_s$ from inverse function for any $m$ in \cite{DSIAM}. Note that in the non-binary case, the dimension of the code $\C_s$ over $\F_q$ was settled in \cite{TQX}.
  \item Ding and Zhou \cite{DZ} studied the binary cyclic codes from the Welch function $F_3(x)=x^{2h+3}$, where $m=2h+1\geq 7$.
  \item Ding and Zhou \cite{DZ} studied the binary cyclic codes from the Kasami function $F_4(x)$, where $\gcd(m,h)=1$ and
      \begin{equation*}\label{restriction1}
  1\leq h
 \leq \left\{
    \begin{array}{ll}
       \frac{m-1}{4}, & \hbox{if $m\equiv1 ({\rm{mod~}}4)$;} \\
       \frac{m-2}{4}, & \hbox{if $m\equiv2 ({\rm{mod~}}4)$;} \\
       \frac{m-3}{4}, & \hbox{if $m\equiv3 ({\rm{mod~}}4)$;} \\
       \frac{m-4}{4}, & \hbox{if $m\equiv0 ({\rm{mod~}}4)$.}
    \end{array}
  \right.
\end{equation*}
  \item Ding and Zhou \cite{DZ} considered the binary cyclic codes $\C_s$ from the Niho-1 APN function; Li {\it{et al.}} \cite{LZLK} considered the binary cyclic codes $\C_s$ from the Niho-2 APN function.
  \item Tang {\it{et al.}} \cite{TQX} studied the binary cyclic codes from the Dobbertin APN function.
\end{itemize}

\subsection{Binary cyclic codes from the Gold function $F_2(x)$}\label{Gold}
This subsection considers the binary cyclic codes from the Gold function $F_2(x)$.
Let $s^{\infty}$ be the sequence of (\ref{infty}), where $f(x)=F_2(x)$. Then
\begin{eqnarray}\label{eee0}
 \nonumber s_t &=& {\rm{Tr}}\left((\alpha^t+1)^{2^h+1}\right) \\
 \nonumber  ~ &=& {\rm{Tr}}\left((\alpha^t)^{2^h+1}+(\alpha^t)^{2^h}+\alpha^t+1\right) \\
  ~ &=& {\rm{Tr}}\left((\alpha^t)^{2^h+1}\right)+{\rm{Tr}}(1).
\end{eqnarray}
\begin{lemma}\label{lem1.1}
Let $m$ and $h$ be two positive integers and let $q$ be a prime power. Then we have
\begin{equation*}\label{lemeq1.1}
  \gcd\left(q^{h}+1, \frac{q^m-1}{q^{\gcd(m,h)}-1}\right)
  =\frac{\gcd\left(q^{2h}-1,q^m-1\right)}
  {\gcd\left(q^h-1,q^m-1\right)}.
\end{equation*}
\end{lemma}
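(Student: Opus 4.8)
The plan is to put both sides into closed form in terms of $d:=\gcd(m,h)$ and the parity of $m/d$, and then to match them. For the right-hand side I would invoke the elementary identity $\gcd(q^{a}-1,q^{b}-1)=q^{\gcd(a,b)}-1$ (valid for every integer $q\ge 2$): the numerator becomes $q^{\gcd(2h,m)}-1$ and the denominator $q^{\gcd(h,m)}-1=q^{d}-1$. Writing $h=dh'$ and $m=dm'$ with $\gcd(h',m')=1$, a short computation gives $\gcd(2h,m)=d\cdot\gcd(2h',m')$, where $\gcd(2h',m')=1$ when $m'$ is odd (any common divisor of $m'$ and $2h'$ is odd, hence divides $h'$) and $\gcd(2h',m')=2$ when $m'$ is even (in which case $h'$ is necessarily odd). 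Hence the right-hand side equals $1$ if $m/d$ is odd and $\frac{q^{2d}-1}{q^{d}-1}=q^{d}+1$ if $m/d$ is even.

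It then remains to show that $g:=\gcd\bigl(q^{h}+1,\tfrac{q^{m}-1}{q^{d}-1}\bigr)$ takes the same value, treating the two parities separately and using throughout that $\tfrac{q^{m}-1}{q^{d}-1}=\sum_{j=0}^{m'-1}(q^{d})^{j}$. If $m/d$ is odd I would show $g=1$ by contradiction: a prime $p\mid g$ divides $q^{h}+1\mid q^{2h}-1$ as well as $\tfrac{q^{m}-1}{q^{d}-1}\mid q^{m}-1$, hence $p\mid\gcd(q^{2h}-1,q^{m}-1)=q^{d}-1$; modulo such a $p$ the geometric sum collapses to $m'$, so $p\mid m'$ (making $p$ odd), while $q^{h}+1=(q^{d})^{h'}+1\equiv 2\pmod p$ forces $p\mid 2$ --- a contradiction.

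If $m/d$ is even (so $h'$ is odd) I would establish the two divisibilities $q^{d}+1\mid g$ and $g\mid q^{d}+1$. The first is routine: $q^{d}+1\mid q^{h}+1$ since $h'$ is odd, and $q^{d}+1$ divides $\sum_{j=0}^{m'-1}(q^{d})^{j}$ because the alternating sum of an even number of terms vanishes modulo $q^{d}+1$. For the converse I would first note $g\mid q^{2h}-1$ and $g\mid q^{m}-1$, whence $g\mid q^{\gcd(2h,m)}-1=q^{2d}-1$; then $q^{2d}\equiv 1\pmod g$ together with $h'$ odd gives $q^{h}\equiv q^{d}\pmod g$, and since $q^{h}\equiv -1\pmod g$ we get $q^{d}\equiv -1\pmod g$, i.e. $g\mid q^{d}+1$. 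Combining the two divisibilities yields $g=q^{d}+1$, which completes the proof.

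I expect the only delicate point to be the reverse divisibility $g\mid q^{d}+1$ in the even case: the crude bound $g\mid\gcd(q^{h}+1,q^{m}-1)$ is not sharp enough (that gcd can strictly exceed $q^{d}+1$), so one genuinely needs the refinement $g\mid q^{2d}-1$ --- which rests on the value $\gcd(2h,m)=2d$ --- combined with the oddness of $h'$ to collapse $q^{h}$ to $q^{d}$ modulo $g$. Everything else is a direct consequence of $\gcd(q^{a}-1,q^{b}-1)=q^{\gcd(a,b)}-1$ and of elementary reductions of geometric series modulo $q^{d}\pm 1$.
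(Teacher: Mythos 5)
Your proof is correct, but it goes by a different route than the paper. The paper's argument is a direct gcd manipulation with no case analysis: writing $d=\gcd(m,h)$, it uses $\gcd(q^h-1,q^m-1)=q^{d}-1$ to factor $\gcd(q^{2h}-1,q^m-1)=(q^{d}-1)\gcd\bigl((q^h+1)\cdot\frac{q^h-1}{q^{d}-1},\,\frac{q^m-1}{q^{d}-1}\bigr)$ and then cancels the factor $\frac{q^h-1}{q^{d}-1}$, implicitly using that it is coprime to $\frac{q^m-1}{q^{d}-1}$ (which follows, e.g., by reducing the two geometric sums modulo a putative common prime divisor to $h/d$ and $m/d$, which are coprime). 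You instead evaluate both sides in closed form --- showing each equals $1$ when $m/d$ is odd and $q^{d}+1$ when $m/d$ is even --- and match them case by case. Your version is longer, and its delicate steps (the reduction $g\mid q^{2d}-1$ plus $q^h\equiv q^{d}\pmod g$ in the even case) are exactly the work the paper's cancellation trick avoids; on the other hand, it makes the coprimality-type reasoning explicit rather than tacit, and it delivers the explicit value of $\gcd\bigl(q^h+1,\frac{q^m-1}{q^{d}-1}\bigr)$, which is precisely what the paper needs downstream (Lemmas \ref{lem01} and \ref{lem1} only use that this gcd equals $\frac{q^{\gcd(2h,m)}-1}{q^{\gcd(h,m)}-1}=1$ in their settings, and the parity dichotomy you derive is the content of Lemma \ref{lem31}). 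Both arguments are sound; yours trades brevity for explicitness.
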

\begin{proof}
It is easily seen that
$\gcd(q^m-1,q^h-1)=q^{\gcd(m,h)-1}.$
Then we have the following
\begin{eqnarray*}
  \gcd(q^{2h}-1,q^m-1)
   &=& (q^{\gcd(m,h)}-1)\gcd\left((q^h+1)\cdot\frac{q^h-1}{q^{\gcd(m,h)}-1},
  \frac{q^m-1}{q^{\gcd(m,h)}-1}\right) \\
  ~ &=& \gcd(q^h-1,q^m-1)\gcd\left((q^h+1),
  \frac{q^m-1}{q^{\gcd(m,h)}-1}\right).
\end{eqnarray*}
 This completes the proof.
\end{proof}
If $F_2(x)$ is APN, we have
\begin{lemma}\label{lem01}
Let $s^{\infty}$ be the sequence of (\ref{infty}), where $f(x)=F_2(x)$, $m$ is odd and $\gcd(m,h)=1.$ Then the linear span $\mathbb{L}_s$ of $s^{\infty}$ is equal to $m+1$ and the minimal polynomial $\mathbb{M}_s(x)$ of $s^{\infty}$ is given by
\begin{equation*}\label{Ms2.0}
  \mathbb{M}_s(x)=(x-1)m_{\alpha^{-(2^h+1)}}(x).
\end{equation*}
\end{lemma}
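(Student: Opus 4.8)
The plan is to apply Lemma~\ref{lem1.0.1} directly to the sequence $s^\infty$ whose terms are, by the computation in~(\ref{eee0}), $s_t = {\rm Tr}\big((\alpha^t)^{2^h+1}\big) + {\rm Tr}(1)$. Since $m$ is odd, ${\rm Tr}(1) = 1$, so $s_t = {\rm Tr}\big(\alpha^{(2^h+1)t}\big) + 1$. First I would expand ${\rm Tr}\big(\alpha^{(2^h+1)t}\big) = \sum_{j=0}^{m-1}\alpha^{2^j(2^h+1)t}$ and write $1 = \alpha^{0\cdot t}$, so that $s_t = \sum_{i} c_i \alpha^{it}$ with $c_0 = 1$ (the constant term) and $c_i = 1$ precisely for $i$ in the cyclotomic coset $C_{2^h+1}$ of $2^h+1$ modulo $n = 2^m-1$, and $c_i = 0$ otherwise. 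By Lemma~\ref{lem1.0.1}, the linear span equals $|I| = 1 + |C_{2^h+1}|$ and the minimal polynomial is $(1 - \alpha^0 x)\prod_{i\in C_{2^h+1}}(1-\alpha^i x) = (x-1)\, m_{\alpha^{2^h+1}}(x)$ up to the usual normalization.

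The two things that must be checked are: (a) that $0 \notin C_{2^h+1}$, so the constant term genuinely contributes a new factor and the span is $|C_{2^h+1}| + 1$ rather than $|C_{2^h+1}|$; and (b) that $|C_{2^h+1}| = m$, i.e. the coset of $2^h+1$ has full size. Point (a) is immediate since $2^h+1 \not\equiv 0 \pmod{2^m-1}$ for the relevant range of $h$. For point (b), I would show $\gcd(2^h+1,\, 2^m-1) = 1$ when $m$ is odd and $\gcd(m,h)=1$: by Lemma~\ref{lem1.1} with $q=2$, $\gcd(2^h+1, 2^m-1) = \gcd(2^{2h}-1, 2^m-1)/\gcd(2^h-1,2^m-1) = (2^{\gcd(2h,m)}-1)/(2^{\gcd(h,m)}-1)$; since $m$ is odd, $\gcd(2h,m) = \gcd(h,m) = 1$, so this ratio is $1$. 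Hence multiplication by $2^h+1$ is a bijection on $\Z_n$, the coset $C_{2^h+1}$ has the same size as $C_1$, and that size is $m$ because $2^j(2^h+1)\equiv 2^h+1 \pmod{2^m-1}$ forces $2^j \equiv 1$, i.e. $m \mid j$. Finally, $m_{\alpha^{2^h+1}}(x) = m_{\alpha^{-(2^h+1)}}(x)$ because $2^h+1$ and $-(2^h+1) \equiv 2^m-1-(2^h+1)$ lie in the same cyclotomic coset (indeed $-1 \equiv 2^m - 1$... more carefully, one uses that $-(2^h+1) = 2^{m-h}\cdot(2^h+1)\cdot(\text{unit})$; concretely $2^{m-h}(2^h+1) = 2^m + 2^{m-h} \equiv 1 + 2^{m-h} = 2^{m-h}(1 + 2^h) $, so the negation is realized within the $2$-power orbit after accounting for the factor — this is the standard fact that for binary codes the minimal polynomial of $\beta$ and of $\beta^{-1}$ coincide when... ), so the stated form of $\mathbb{M}_s(x)$ follows, giving $\mathbb{L}_s = m+1$.

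The main obstacle I anticipate is the identification $m_{\alpha^{2^h+1}}(x) = m_{\alpha^{-(2^h+1)}}(x)$: this is not automatic and requires showing $-(2^h+1)$ and $2^h+1$ are in the same $2$-cyclotomic coset modulo $2^m-1$. This holds here because $2^{m-h}(2^h+1) \equiv 2^m + 2^{m-h} \equiv 1 + 2^{m-h} \pmod{2^m-1}$ and $1 + 2^{m-h} = 2^{m-h}+1$, while $-(2^h+1) \equiv 2^m - 2 - 2^h \pmod{2^m-1}$; one verifies these agree, or more cleanly invokes that $F_2(x)=x^{2^h+1}$ together with $\gcd(2^h+1,2^m-1)=1$ makes $2^h+1$ a unit whose inverse modulo $2^m-1$ lies in $C_{2^h+1}$ precisely when $m$ is odd. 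I would present this last step carefully, since the whole point of writing the minimal polynomial as $m_{\alpha^{-(2^h+1)}}(x)$ (rather than $m_{\alpha^{2^h+1)}}(x)$) is presumably to match the check-polynomial/generator-polynomial conventions used downstream in Subsection~\ref{Gold}.
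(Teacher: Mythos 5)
Your core computation coincides with the paper's proof: since $m$ is odd, ${\rm Tr}(1)=1$, so the expansion of $s_t$ has nonzero coefficients exactly at the indices $I=\{0\}\cup C_{2^h+1}$, and the coset size $|C_{2^h+1}|=m$ follows from Lemma~\ref{lem1.1} via $\gcd(2^h+1,2^m-1)=\frac{2^{\gcd(2h,m)}-1}{2^{\gcd(h,m)}-1}=1$ (this gcd argument is verbatim the paper's). Up to that point the proposal is correct and essentially identical to the paper.

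However, your final step contains a genuine error. You first identify $\prod_{i\in C_{2^h+1}}(1-\alpha^i x)$ with $m_{\alpha^{2^h+1}}(x)$ and then try to repair the discrepancy with the stated answer by claiming that $2^h+1$ and $-(2^h+1)$ lie in the same $2$-cyclotomic coset modulo $2^m-1$, so that $m_{\alpha^{2^h+1}}(x)=m_{\alpha^{-(2^h+1)}}(x)$. That claim is false in general: take $m=3$, $h=1$ (which satisfies the hypotheses), so $2^h+1=3$ and $n=7$; then $C_3=\{3,5,6\}$ while $-3\equiv 4\in C_1=\{1,2,4\}$, and $m_{\alpha^{3}}(x)$ and $m_{\alpha^{-3}}(x)$ are the two distinct (reciprocal) primitive cubics over $\F_2$. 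Your sketched verification ($2^{m-h}(2^h+1)\equiv 2^{m-h}+1$) only shows $2^{m-h}+1\in C_{2^h+1}$, which is not $-(2^h+1)$. The point you anticipated as the ``main obstacle'' is in fact a non-issue created by a sign slip: in Lemma~\ref{lem1.0.1} the minimal polynomial is $\prod_{i\in I}(1-\alpha^{i}x)$, whose roots are $\alpha^{-i}$ for $i\in I$. Hence the factor indexed by $C_{2^h+1}$ is, up to a nonzero scalar, $\prod_{i\in C_{2^h+1}}(x-\alpha^{-i})=m_{\alpha^{-(2^h+1)}}(x)$, and the index $0$ contributes $(x-1)$; the stated form $\mathbb{M}_s(x)=(x-1)m_{\alpha^{-(2^h+1)}}(x)$ and $\mathbb{L}_s=m+1$ follow immediately, with no need to compare the cosets of $2^h+1$ and $-(2^h+1)$.
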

\begin{proof}
Since $m$ is odd, ${\rm{Tr}}(1)=1$.
According to Lemma \ref{lem1.1}, the size of $C_{1+2^h}$ equals $m$ since
\begin{eqnarray*}
  \gcd(2^m-1,2^h+1) = \gcd\left(2^h+1,\frac{2^m-1}
{2^{\gcd(m,h)}-1}\right)=
\frac{2^{\gcd(2h,m)}-1}{2^{\gcd(h,m)}-1}
=1.
\end{eqnarray*} Therefore, the desired conclusions follows.
\end{proof}
If $F_2(x)$ is a $4$-uniform DDT function, then we have
\begin{lemma}\label{lem1}
Let $s^{\infty}$ be the sequence of (\ref{infty}), where $f(x)=F_2(x)$, $m\equiv2({\rm mod~}4)$ and $\gcd(m,h)=2.$ Then the linear span $\mathbb{L}_s$ of $s^{\infty}$ is equal to $m$ and the minimal polynomial $\mathbb{M}_s$ of $s^{\infty}$ is given by
\begin{equation*}\label{Ms2}
  \mathbb{M}_s(x)=m_{\alpha^{-(2^h+1)}}(x).
\end{equation*}
\end{lemma}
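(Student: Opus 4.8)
The plan is to follow the proof of Lemma~\ref{lem01} almost verbatim, the one genuine change being that here $m$ is even, so the constant term in (\ref{eee0}) disappears. Since $m\equiv 2\pmod 4$, we have ${\rm Tr}(1)=m\bmod 2=0$, hence by (\ref{eee0}), $s_t={\rm Tr}\!\big((\alpha^t)^{2^h+1}\big)={\rm Tr}(\alpha^{(2^h+1)t})$ for all $t\ge 0$. Expanding the trace as a sum of Frobenius conjugates gives $s_t=\sum_{j=0}^{m-1}\big(\alpha^{2^j(2^h+1)}\big)^t$, which is already in the normal form of Lemma~\ref{lem1.0.1} with all nonzero coefficients equal to $1$. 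Thus, \emph{provided} the residues $2^j(2^h+1)\bmod n$ are pairwise distinct for $0\le j\le m-1$, the index set is $I=C_{2^h+1}$, the $2$-cyclotomic coset modulo $n=2^m-1$ containing $2^h+1$; otherwise each residue would be repeated an even number of times and the coefficients would vanish in characteristic $2$. So the real content is to prove $|C_{2^h+1}|=m$.

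To this end I would argue exactly as in Lemma~\ref{lem01}: because $2^h+1$ is odd it is coprime to $2^{\gcd(m,h)}-1\mid 2^h-1$, whence $\gcd(2^m-1,2^h+1)=\gcd\!\big(\tfrac{2^m-1}{2^{\gcd(m,h)}-1},\,2^h+1\big)$, and Lemma~\ref{lem1.1} with $q=2$ gives $\gcd(2^m-1,2^h+1)=\tfrac{2^{\gcd(2h,m)}-1}{2^{\gcd(h,m)}-1}$. The key elementary step---what I expect to be the main (and essentially only) obstacle---is to check that the hypotheses $\gcd(h,m)=2$ and $m\equiv 2\pmod 4$ force $\gcd(2h,m)=2$: writing $h=2a$, $m=2b$ one gets $\gcd(a,b)=1$ and $b$ odd, so $\gcd(2h,m)=2\gcd(2a,b)=2\gcd(a,b)=2$. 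Hence $\gcd(2^m-1,2^h+1)=1$, so $2^h+1\not\equiv 0\pmod n$, and the congruence $(2^\ell-1)(2^h+1)\equiv 0\pmod{2^m-1}$ defining the coset size forces $2^m-1\mid 2^\ell-1$, i.e.\ $m\mid\ell$; therefore $|C_{2^h+1}|=m$ and in particular the $m$ residues above are distinct.

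Combining the two parts: by Lemma~\ref{lem1.0.1}, $\mathbb{L}_s=|I|=|C_{2^h+1}|=m$ and $\mathbb{M}_s(x)=\prod_{i\in C_{2^h+1}}(1-\alpha^i x)$. Its roots are the elements $\alpha^{-i}$ with $i\in C_{2^h+1}$, and since $i\mapsto -i$ carries $C_{2^h+1}$ bijectively onto $C_{-(2^h+1)}$, these are precisely the $\F_2$-conjugates of $\alpha^{-(2^h+1)}$, which number $|C_{-(2^h+1)}|=|C_{2^h+1}|=m$. So, up to the normalization already used in Lemma~\ref{lem01}, $\mathbb{M}_s(x)=m_{\alpha^{-(2^h+1)}}(x)$, a polynomial of degree $m$, which also re-confirms $\mathbb{L}_s=m$. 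Everything outside the gcd computation is a direct application of Lemmas~\ref{lem1.0.1} and~\ref{lem1.1} together with the bookkeeping that no cancellation occurs in $\F_2$ once the coset size is known to be $m$.
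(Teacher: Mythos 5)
Your proposal is correct and follows essentially the same route as the paper: use $m\equiv 2\pmod 4$ to kill ${\rm Tr}(1)$, apply Lemma \ref{lem1.1} (after noting $\gcd(2^h+1,2^{\gcd(m,h)}-1)=1$) to get $\gcd(2^h+1,2^m-1)=\frac{2^{\gcd(2h,m)}-1}{2^{\gcd(h,m)}-1}=1$, deduce $|C_{2^h+1}|=m$, and conclude via Lemma \ref{lem1.0.1}. You merely make explicit some steps the paper leaves implicit (the verification $\gcd(2h,m)=2$ and the identification of $\prod_{i\in C_{2^h+1}}(1-\alpha^i x)$ with $m_{\alpha^{-(2^h+1)}}(x)$), which is fine; only your parenthetical "otherwise each residue would be repeated an even number of times" is imprecise (the multiplicity $m/|C_{2^h+1}|$ need not be even), but that case never arises in your argument.
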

\begin{proof}
Since $m$ is even and $\gcd(m,h)=2$, we have ${\rm{Tr}}(1)=0$, $2^{\gcd(m,h)}-1=3$.
 Combining Lemma \ref{lem1.1} and the fact that $\gcd(2^h+1,2^{\gcd(m,h)}-1)=\gcd(2^h+1,3)=1$, we have
\begin{eqnarray*}
  \gcd(2^h+1,2^m-1) &=& \gcd\left(2^h+1,
\frac{2^m-1}{3}\right) \\
  ~ &=& \gcd\left(2^h+1,
\frac{2^m-1}{2^{\gcd(m,h)}-1}\right) \\
  ~ &=& \frac{2^{\gcd(2h,m)}-1}{2^{\gcd(h,m)}-1}\\
  ~ &=& 1.
\end{eqnarray*}

Therefore, the size of the $q$-cyclotomic coset containing $2^h+1$ is $m$. The desired results follow from Lemma \ref{lem1.0.1} and Equation (\ref{eee0}).
\end{proof}
\begin{theorem}
The code $\C_s$ defined by the sequence of Lemma \ref{lem01} has parameters $[2^m-1,2^m-2-m,4]$, which is equivalent to the binary extended Hamming code $\widehat{\mathcal{H}_m}$.
\end{theorem}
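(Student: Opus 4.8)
The plan is to compute the parameters of $\C_s$ directly from the description of its defining sequence via the minimal polynomial determined in Lemma \ref{lem01}, and then to identify the resulting code with the extended Hamming code. Recall that by Lemma \ref{lem01} we have $\mathbb{M}_s(x) = (x-1)m_{\alpha^{-(2^h+1)}}(x)$, and the generator polynomial of $\C_s$ is exactly the reciprocal-type object obtained from (\ref{gene}); more precisely, under the correspondence used throughout (and in \cite{DSIAM,DZ}), the check polynomial of $\C_s$ is $\mathbb{M}_s(x)$, so that $\dim \C_s = \deg \mathbb{M}_s(x) = 1 + m$ is \emph{not} what we want — rather the generator polynomial is $(x^n-1)/\gcd(S(x),x^n-1)$ and has degree $n - \mathbb{L}_s = (2^m-1) - (m+1) = 2^m - 2 - m$, giving length $n = 2^m-1$ and dimension $k = 2^m - 2 - m$. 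This settles two of the three parameters immediately from Lemma \ref{lem01}.

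The substantive part is the minimum distance and the identification with $\widehat{\mathcal{H}_m}$. First I would pin down the generator polynomial of $\C_s$ explicitly: since $x^n - 1 = \prod_{i\in\Gamma}m_{\alpha^i}(x)$ over $\F_2$, and $\mathbb{M}_s(x) = (x-1)m_{\alpha^{-(2^h+1)}}(x)$, the generator polynomial $g(x)$ of $\C_s$ is the product of all the remaining irreducible factors, i.e. $g(x) = (x^n-1)/\mathbb{M}_s(x)$ up to the usual reciprocal convention. The key observation is that $-(2^h+1) \equiv n - 2^h - 1 \pmod n$, and since $\gcd(2^h+1, n) = 1$ (shown in the proof of Lemma \ref{lem01}), the element $\alpha^{-(2^h+1)}$ is a primitive element of $\F_{2^m}$, so $m_{\alpha^{-(2^h+1)}}(x)$ is a primitive polynomial of degree $m$. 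Therefore $g(x) = (x-1)\,p(x)$ where $p(x) = (x^n-1)/((x-1)m_{\alpha^{-(2^h+1)}}(x))$ — wait, that is backwards; instead $h(x) = \mathbb{M}_s(x)$ has as roots $1$ and the conjugates of $\alpha^{-(2^h+1)}$, so $g(x)$ has as roots all $\alpha^j$ with $j$ outside $C_0 \cup C_{-(2^h+1)}$. Because the classical binary Hamming code $\mathcal{H}_m$ of length $2^m-1$ has generator polynomial the minimal polynomial of a primitive element $\beta$, and its dual-of-extended structure is governed by exactly which single cyclotomic coset is "missing," I would argue that $\C_s$ is obtained from the cyclic code with generator polynomial $m_{\alpha^{-(2^h+1)}}(x)$ — which is a $[2^m-1, 2^m-1-m, 3]$ Hamming code, being equivalent to $\mathcal{H}_m$ since $\alpha^{-(2^h+1)}$ is primitive — by further multiplying the generator by $(x-1)$, i.e. intersecting with the even-weight subcode. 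Adjoining the factor $(x-1)$ to the generator polynomial of a cyclic code is precisely the operation that yields the even-weight subcode; the even-weight subcode of a Hamming code is well known to be equivalent to the extended Hamming code $\widehat{\mathcal{H}_m}$, with parameters $[2^m-1, 2^m-2-m, 4]$.

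So the argument structure is: (1) dimension $= n - \mathbb{L}_s = 2^m - 2 - m$ from Lemma \ref{lem01}; (2) $\alpha^{-(2^h+1)}$ is primitive because $\gcd(2^h+1, 2^m-1) = 1$, hence the cyclic code with check polynomial $m_{\alpha^{-(2^h+1)}}(x)$ is equivalent to the Hamming code $\mathcal{H}_m$; (3) the extra factor $(x-1)$ in $\mathbb{M}_s(x)$ restricts $\C_s$ to the even-weight subcode of that Hamming code; (4) the even-weight subcode of a $[2^m-1,2^m-1-m,3]$ Hamming code is equivalent to the $[2^m-1, 2^m-2-m, 4]$ (punctured) extended Hamming code, which completes both the determination of $d = 4$ and the claimed equivalence. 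I expect step (2)–(3), namely carefully matching the convention (generator vs. check polynomial, reciprocal polynomials) so that "$\mathbb{M}_s(x)$ is the check polynomial" is used correctly and the factor $(x-1)$ really does produce the even-weight subcode rather than something else, to be the main point requiring care; the minimum distance claim then follows from the standard fact about even-weight subcodes of Hamming codes, which I would cite rather than reprove. One should also note $m$ odd is used only via $\mathrm{Tr}(1)=1$ in Lemma \ref{lem01} and does not otherwise enter here, since the equivalence to $\widehat{\mathcal{H}_m}$ holds for the code structure regardless.
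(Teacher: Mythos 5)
Your intended route is genuinely different from the paper's and, once one fixes the convention issue you yourself flag, it does work: since $\gcd(2^h+1,2^m-1)=1$, the element $\alpha^{-(2^h+1)}$ is primitive, the cyclic code with \emph{generator} polynomial $m_{\alpha^{-(2^h+1)}}(x)$ is a Hamming code, and adjoining the factor $(x-1)$ to the generator cuts out its even-weight subcode, which has minimum distance $4$. The paper argues differently: it takes the dimension from Lemma \ref{lem01}, gets $2\leq d\leq 4$ from the Sphere Packing and BCH bounds, and then uses the parity-check matrix (an all-ones row plus the row of powers of the primitive element $\alpha^{2^h+1}$) to force $d$ even and $d\neq 2$, hence $d=4$. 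Your structural identification is arguably cleaner and is the only one of the two arguments that actually substantiates the ``equivalent to $\widehat{\mathcal{H}_m}$'' clause, which the paper's proof never addresses (and which is stated loosely there, since strictly the even-weight subcode is the expurgated/shortened extended Hamming code of length $2^m-1$).

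The genuine gap is that your write-up never resolves, and in places resolves incorrectly, whether $\mathbb{M}_s(x)$ is the generator or the check polynomial of $\C_s$. You state that the generator polynomial has degree $n-\mathbb{L}_s$ (it has degree $\mathbb{L}_s=m+1$; $n-\mathbb{L}_s$ is the dimension), you adopt mid-proof the reading ``$h(x)=\mathbb{M}_s(x)$, so $g(x)$ has as roots all $\alpha^j$ with $j$ outside $C_0\cup C_{-(2^h+1)}$'', and your summary step (2) speaks of ``the cyclic code with check polynomial $m_{\alpha^{-(2^h+1)}}(x)$'', which is the simplex code, not the Hamming code. Under that reading $\C_s$ would be a $[2^m-1,\,m+1]$ code and the whole even-weight-subcode argument, and $d=4$, would fail; so the choice of convention is not a cosmetic matter. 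The fix is the standard fact underpinning the entire paper (and stated in all its theorems, e.g.\ Theorem \ref{thm19}): the minimal polynomial of the periodic sequence satisfies $\mathbb{M}_s(x)=\frac{x^n-1}{\gcd(S(x),x^n-1)}$, i.e.\ $\mathbb{M}_s(x)$ \emph{is} the generator polynomial of $\C_s$, so $\dim\C_s=n-\mathbb{L}_s=2^m-2-m$ and $\C_s=\{c(x):c(1)=c\bigl(\alpha^{-(2^h+1)}\bigr)=0\}$ is exactly the even-weight subcode of the Hamming code defined by $\alpha^{-(2^h+1)}$; with that in place your steps (2)--(4) go through. One further small slip: $m$ odd is not used only via ${\rm Tr}(1)=1$; it also enters the primitivity of $\alpha^{-(2^h+1)}$, since $\gcd(2^h+1,2^m-1)=\frac{2^{\gcd(2h,m)}-1}{2^{\gcd(h,m)}-1}=1$ requires $\gcd(2h,m)=\gcd(h,m)$, i.e.\ $m$ odd when $\gcd(m,h)=1$.
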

\begin{proof}
The dimension of $\C_s$ follows from Lemma \ref{lem01} and the definition of the code $\C_s$. According to the Sphere Packing bound (see, e.g.
\cite{HufPless-2003}) and BCH bound (see, e.g. \cite{HufPless-2003,KT69}), we know that $2\leq d(\C_s)\leq 4.$ Since the parity-check matrix of $\C_s$ is
$$\left(
    \begin{array}{cccc}
      1 & 1 & \cdots & 1 \\
      \alpha^{2^h+1} & (\alpha^{2^h+1})^2 & \cdots & (\alpha^{2^h+1})^{2^m-1}
    \end{array}
  \right),
$$
the minimum distance $d(\C_s)$ has to be even and $d(\C_s)\neq 2$. Therefore, $d(\C_s)=4.$
\end{proof}
\begin{theorem}
The code $\C_s$ defined by the sequence of Lemma \ref{lem1} has parameters $[2^m-1,2^m-1-m,3]$, which is equivalent to the binary Hamming code $\mathcal{H}_m$.
\end{theorem}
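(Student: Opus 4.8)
The plan is to follow the template of the preceding theorem (the Hamming/extended-Hamming dichotomy). By Lemma~\ref{lem1}, the defining sequence $s^{\infty}$ has linear span $\mathbb{L}_s=m$ and minimal polynomial $\mathbb{M}_s(x)=m_{\alpha^{-(2^h+1)}}(x)$. Feeding this into Lemma~\ref{lem1.0.1} and the construction of $\C_s$ from its defining sequence — the same translation used (implicitly) in the previous theorem — shows that $\C_s$ is the cyclic code of length $n=2^m-1$ whose generator polynomial is the reciprocal of $\mathbb{M}_s(x)$, i.e.\ $m_{\alpha^{2^h+1}}(x)$ up to a nonzero scalar; equivalently, the zeros of $\C_s$ are exactly $\alpha^{2^h+1}$ and its $\F_2$-conjugates. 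The proof of Lemma~\ref{lem1} shows $|C_{2^h+1}|=m$, so $\dim_{\F_2}\C_s=n-m=2^m-1-m$, which is the claimed dimension.

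It then remains to identify $\C_s$ with $\mathcal{H}_m$. Since the zeros of $\C_s$ are the conjugates of $\alpha^{2^h+1}$, a parity-check matrix of $\C_s$ is
$$H=\left(\ (\alpha^{2^h+1})\ \ (\alpha^{2^h+1})^{2}\ \ \cdots\ \ (\alpha^{2^h+1})^{2^m-1}\ \right),$$
viewed as an $m\times(2^m-1)$ matrix over $\F_2$ after expanding each $\F_{2^m}$-entry in a fixed $\F_2$-basis. The key point — already established inside the proof of Lemma~\ref{lem1} — is that $\gcd(2^h+1,2^m-1)=1$, so $j\mapsto (2^h+1)j\bmod(2^m-1)$ is a permutation of $\Z_n$; hence the columns of $H$ run through $\{\alpha^j:0\le j\le 2^m-2\}=\F_{2^m}^{\star}$, every nonzero vector of $\F_2^m$ occurring exactly once (recall $\alpha$ is primitive). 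Thus $H$ is, up to a permutation of its columns, the standard parity-check matrix of the binary Hamming code $\mathcal{H}_m$, whence $\C_s$ is permutation-equivalent to $\mathcal{H}_m$ and has parameters $[2^m-1,\,2^m-1-m,\,3]$.

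One can also pin down the minimum distance without invoking the classification of Hamming codes: no two distinct columns of $H$ are $\F_2$-linearly dependent (two distinct nonzero vectors over $\F_2$ cannot sum to $\0$), so $d(\C_s)\ge 3$; and since the columns of $H$ exhaust $\F_{2^m}^{\star}$, one can pick distinct nonzero $u,v$ with $u+v\notin\{0,u,v\}$, giving three distinct columns $u,v,u+v$ that sum to $\0$ and hence a codeword of Hamming weight $3$, so $d(\C_s)\le 3$. The only points that need care are the (routine) passage from $\mathbb{M}_s(x)$ to the zeros and parity-check matrix of $\C_s$, and the reuse of $\gcd(2^h+1,2^m-1)=1$ from Lemma~\ref{lem1}; I do not expect a genuine obstacle, as the argument runs exactly parallel to the previous two theorems.
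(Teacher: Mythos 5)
Your proof is correct, but it takes a different route from the paper's. The paper argues via parameters: it gets the dimension from Lemma~\ref{lem1}, squeezes $2\leq d(\C_s)\leq 3$ out of the BCH and Sphere Packing bounds, rules out weight-$2$ codewords, and then invokes the well-known fact that any binary $[2^m-1,2^m-1-m,3]$ code is equivalent to $\mathcal{H}_m$. You instead identify the code structurally: since the generator polynomial is a single minimal polynomial of degree $m$ and $\gcd(2^h+1,2^m-1)=1$ (both facts coming from the proof of Lemma~\ref{lem1}), the columns of the parity-check matrix exhaust $\F_{2^m}^{\star}$ exactly once, so $\C_s$ is permutation-equivalent to the Hamming code, and the parameters (including $d=3$) fall out without appealing to bounds or to the classification of $[2^m-1,2^m-1-m,3]$ codes. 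Your route is more self-contained and actually exhibits the equivalence rather than citing it; the paper's is shorter because it leans on standard results. One small point of care: with the paper's conventions the generator polynomial of $\C_s$ is $\mathbb{M}_s(x)=m_{\alpha^{-(2^h+1)}}(x)$ itself, not its reciprocal, so the zeros are $\alpha^{-(2^h+1)}$ and its conjugates; this does not affect your argument, since $\gcd\bigl(-(2^h+1),2^m-1\bigr)=\gcd(2^h+1,2^m-1)=1$ and the column set of the parity-check matrix is the same $\F_{2^m}^{\star}$ either way, but the sentence about the reciprocal should be fixed for consistency with Lemma~\ref{lem1.0.1} and Theorem~\ref{thm19}.
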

\begin{proof}
The dimension of $\C_s$ follows from Lemma \ref{lem1} and the definition of the code $\C_s$. Combining the BCH and the Sphere Packing bounds, we have $2\leq d(\C_s)\leq 3$. It is easy to check that there is no codeword $\mathbf{c}$ in $\C_s$ satisfies $w_H(\mathbf{c})=2$, then the minimum distance of $\C_s$ equals $3$. It is well-known that the binary code with parameters $[2^m-1,2^m-m-1,3]$ is equivalent to the binary Hamming code $\mathcal{H}_m$.
\end{proof}
\begin{remark}
Let $s^{\infty}$ be the sequence of (\ref{infty}), where $f(x)=F_2(x)=x^{q^h+1}$ over $\F_q$, where both $m$ and $q$ are odd. Ding \cite{DSIAM} considered the cyclic code $\C_s$ over $\F_q$ form monomial $F_2(x)=x^{q^h+1}$ and gave the linear span $\mathbb{L}_s$ and minimal polynomial $\mathbb{M}_s(x)$ of $s^{\infty}$ in \cite[Lemma 5.1]{DSIAM}. In the proof of \cite[Lemma 5.1]{DSIAM}, the formula
$$\gcd(q^{\kappa}+1,q^m-1)=\frac{\gcd(q^{2\kappa}-1,q^m-1)}
{\gcd(q^{\kappa}-1,q^m-1)}=
\frac{q^{\gcd(2\kappa,m)}-1}{q^{\gcd(\kappa,m)}-1}=2
$$
should be replaced by
$$\gcd(q^{\kappa}+1,q^m-1)=2,$$
which follows directly from Lemma \ref{lem31} as $q$ and $m/\gcd(m,\kappa)$ are odd. From \cite[Lemma 2.1]{DingTor}, the size of $C_{q^{\kappa}+1}$ is $m$. In fact, the result in \cite[Lemma 5.1]{DSIAM} holds when $q$ and $m/\gcd(m,\kappa)$ are odd. In other words, $m$ can be even.
\end{remark}

\subsection{Binary cyclic codes from the Kasami function $F_4(x)$}\label{Kasami}
As is well-known, $\delta_{F_4}=2$ if $\gcd(m,h)=1$; $\delta_{F_4}=4$ if $m\equiv2({\rm mod~}4)$ and $\gcd(m,h)=2$. In this subsection, we consider the cyclic codes defined by the monomial $F_4(x)$ over $\F_2$ and assume that \begin{equation*}\label{consub6}
  \left\{
    \begin{array}{ll}
      1\leq h\leq\frac{m-1}{2}, & \hbox{if $m\equiv1 ({\rm{mod~}}4)$;} \\
      1\leq h\leq\frac{m-2}{2}, & \hbox{if $m\equiv2 ({\rm{mod~}}4)$;} \\
      1\leq h\leq\frac{m-3}{2}, & \hbox{if $m\equiv3 ({\rm{mod~}}4)$;} \\
      1\leq h\leq\frac{m-4}{2}, & \hbox{if $m\equiv0 ({\rm{mod~}}4)$.}
    \end{array}
  \right.
\end{equation*}

Note that this paper has no restriction on the value of $\gcd(m,h)$.

We define the following notation, which will be used in subsequent results and their respective proofs. For any integer $j$ with $0\leq i\leq 2^m-1$, the $2$-weight of $i$, denoted by $wt(i)$, is defined as the number of nonzero coefficients in its $2$-adic expansion:
$$i=a_0+a_1\cdot2+\cdots+a_{m-1}\cdot 2^{m-1},$$
where $a_i\in \F_2.$
Let $A$ and $B$ be the two sets defined as follows, respectively.
\begin{eqnarray}\label{A1A}
  \nonumber A &=&  \{0,1,2,\ldots,2^h-1\}\\
  ~ &=& \left\{i=(a_0,a_1,\ldots,a_{h-1},\underbrace{0,\ldots,0}_{m-h}):a_i\in\F_2 {\rm{~and~~}}0\leq i\leq h-1\right\}
\end{eqnarray}
and
\begin{eqnarray}\label{B1B}
  \nonumber B &=&  2^{m-h}+A\\
  ~ &=& \left\{i=(a_0,a_1,\ldots,a_{h-1},\underbrace{0,\ldots,0}_{m-2h},1,\underbrace{0,\ldots,0}_{h-1}):a_i\in\F_2; 0\leq i\leq h-1\right\}.
\end{eqnarray}

For convenience, we use the vector form more frequently to represent the elements of $A$ and $B$ in subsequent proofs.
 Moreover,  given a positive integer $t$, we define the associate sets $A_t$ and $B_t$ as follows.
$$A_t=\{i: i\in A{\rm{~and~}}wt(i)=t\}$$
and
$$B_t=\{j: j\in B{\rm{~and~}}wt(j)=t\}.$$
Besides, let $u_{(i)}$ be the least non-negative integer such that $i\leq 2^{u_{(i)}}-1$ and $v_{(i)}$ be the largest non-negative integer such that $2^{v_{(i)}}|i$. Clearly, one has  $u_{(i)}> v_{(i)}.$ Next, we shall follow the notation given in \cite{DZ}, which will be used to determine the linear span of $s^{\infty}$.
Let $t$ be a positive integer.
We define $T=2^t-1$.
For any odd $a\in\{1,2,3,\ldots,T\}$, define
\begin{equation*}
  \epsilon_a^{(t)}=\left\{
                     \begin{array}{ll}
                       1, & \hbox{if $a=2^h-1$,} \\
                       \left\lceil\log_2\frac{T}{a}\right\rceil, & \hbox{if $1<a<2^h-1$,}
                     \end{array}
                   \right.
\end{equation*}
and $\kappa_a^{(t)}=\epsilon_a^{(t)} ({\rm mod~}2).$
The following lemma will be useful in the sequel.
\begin{lemma}\label{DZlem9}\cite{DZ}
Let $N_t$ denote the total number of odd $\epsilon_a^{(t)}$ when $a$ ranges over all odd numbers in the set $\{1,2,\ldots,T\}.$ Then $N_1=1$ and $N_t=\frac{2^t+(-1)^{t-1}}{3}$
for all $t\geq 2$.
\end{lemma}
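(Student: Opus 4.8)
The plan is to first replace the piecewise definition of $\epsilon_a^{(t)}$ by a single closed formula. Put $T=2^t-1$, and read the threshold appearing in the definition of $\epsilon_a^{(t)}$ as $2^t-1=T$. I claim that for every odd $a\in\{1,2,\dots,T\}$,
$$\epsilon_a^{(t)}=t-\lfloor\log_2 a\rfloor.$$
For $a=T$ the right-hand side is $t-(t-1)=1$, which is precisely the special value fixed in the definition. For $1\le a<T$ note that $T/a$ is never an exact power of $2$: if $T/a=2^k$ then $a=T$ because $T$ is odd. Hence $\epsilon_a^{(t)}=\lceil\log_2(T/a)\rceil=k$ is equivalent to $2^{k-1}<T/a<2^k$, i.e.\ to $2^{t-k}-2^{-k}<a<2^{t-k+1}-2^{1-k}$. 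Since $0<2^{-k}<1$ and $0<2^{1-k}\le1$ for $k\ge1$, for an integer $a$ this reduces to $2^{t-k}\le a<2^{t-k+1}$ (when $k=1$ one also invokes $a<T$ to kill the endpoint), that is, to $\lfloor\log_2 a\rfloor=t-k$. This proves the claim.

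With this in hand, $N_t$ becomes a counting problem. The parity of $\epsilon_a^{(t)}=t-\lfloor\log_2 a\rfloor$ depends only on the ``octave'' $j:=\lfloor\log_2 a\rfloor$, which ranges over $\{0,1,\dots,t-1\}$ as $a$ ranges over $\{1,\dots,T\}$; it is odd exactly when $t-j$ is odd, i.e.\ when $j\equiv t-1\ (\mathrm{mod}\ 2)$. Moreover the number of odd integers $a$ with $\lfloor\log_2 a\rfloor=j$ is $1$ if $j=0$ and $2^{j-1}$ if $j\ge1$ (half of the $2^j$ integers in $[2^j,2^{j+1})$). Therefore
$$N_t=\sum_{\substack{0\le j\le t-1\\ j\equiv t-1\ (\mathrm{mod}\ 2)}} c_j,\qquad c_0=1,\quad c_j=2^{j-1}\ (j\ge1).$$

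Finally I would evaluate this sum according to the parity of $t$. If $t$ is odd, $j$ runs through $\{0,2,4,\dots,t-1\}$ and $N_t=1+2^1+2^3+\cdots+2^{t-2}=1+\frac{2^t-2}{3}=\frac{2^t+1}{3}$; if $t$ is even, $j$ runs through $\{1,3,\dots,t-1\}$ and $N_t=2^0+2^2+2^4+\cdots+2^{t-2}=\frac{2^t-1}{3}$. In both cases $N_t=\frac{2^t+(-1)^{t-1}}{3}$, while $t=1$ gives $N_1=c_0=1$ directly. I expect the only genuinely delicate step to be the first paragraph, namely the careful handling of the ceiling function and of the top octave $a=T$; once $\epsilon_a^{(t)}=t-\lfloor\log_2 a\rfloor$ is established, what remains is a routine geometric-series computation.
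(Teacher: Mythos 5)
Your proof is correct, and a line-by-line comparison with the paper is not possible here: the paper states this lemma as a quotation from [DZ] and gives no proof of it, so your argument stands as a self-contained verification. The two steps both check out. The closed formula $\epsilon_a^{(t)}=t-\lfloor\log_2 a\rfloor$ is established carefully: you rightly use that $T/a$ is never a power of $2$ for $1\le a<T$ (since $T$ is odd), so $\lceil\log_2(T/a)\rceil=k$ is equivalent to the open inequality $2^{k-1}<T/a<2^k$, and the endpoint issue at $k=1$ is correctly disposed of by excluding $a=T$, which the definition treats separately and which your formula reproduces ($\epsilon_T^{(t)}=1$). You also, sensibly and tacitly, read the threshold in the paper's displayed definition as $2^t-1=T$ rather than the misprinted $2^h-1$, and you absorb $a=1$ (not literally covered by the paper's case distinction) into the general formula via $\epsilon_1^{(t)}=\lceil\log_2 T\rceil=t$; both readings are the intended ones from [DZ]. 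The counting step — odd $\epsilon_a^{(t)}$ exactly when $\lfloor\log_2 a\rfloor\equiv t-1\pmod 2$, with $1$ odd integer in the octave $j=0$ and $2^{j-1}$ in each octave $j\ge1$ — and the two geometric sums give $\frac{2^t+1}{3}$ for odd $t$ and $\frac{2^t-1}{3}$ for even $t$, i.e.\ $\frac{2^t+(-1)^{t-1}}{3}$, with $N_1=1$ checked directly. This is a clean route; whether it coincides with the argument in [DZ] cannot be judged from the present paper, but nothing further is needed.
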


Observe that
\begin{eqnarray*}
  \nonumber{\rm{Tr}}(F_4(x+1)) &=& {\rm{Tr}}\left((x+1)(x+1)^{\sum_{i=0}^{h-1}2^h+i}\right) \\
  \nonumber~ &=& {\rm{Tr}}\left((x+1)\prod_{i=0}^{h-1}(x^{2^{h+i}}+1)\right) \\
  \nonumber~ &=& {\rm{Tr}}\left(\sum_{i=0}^{2^h-1}x^{i\cdot2^h+1}
  +\sum_{j=1}^{2^h-1}x^j\right)+ {\rm{Tr}}(1).
\end{eqnarray*}
The sequence $s^{\infty}$ of (\ref{infty}) defined by the monomial $F_4(x)$ is then given by
\begin{equation}\label{st1}
  s_t={\rm{Tr}}\left(\sum_{i=0}^{2^h-1}
  (\alpha^t)^{i+2^{m-h}}+\sum_{j=1}^{2^h-1}(\alpha^t)^j\right)+ {\rm{Tr}}(1)
\end{equation} for all $t\geq 0$.

Ding and Zhou (\cite{DZ}) have studied the binary cyclic codes from $F_4(x)$ when $\gcd(m,h)=1$ and $h$ satisfies one of the following conditions:
\begin{equation}\label{restriction1}
  1\leq h
 \leq \left\{
    \begin{array}{ll}
       \frac{m-1}{4}, & \hbox{if $m\equiv1 ({\rm{mod~}}4)$;} \\
       \frac{m-2}{4}, & \hbox{if $m\equiv2 ({\rm{mod~}}4)$;} \\
       \frac{m-3}{4}, & \hbox{if $m\equiv3 ({\rm{mod~}}4)$;} \\
       \frac{m-4}{4}, & \hbox{if $m\equiv0 ({\rm{mod~}}4)$.}
    \end{array}
  \right.
\end{equation}
\begin{lemma}\cite[Lemma 22]{DZ}\label{lem4}
Let $h$ be a positive  integer  satisfying Condition {\rm {(\ref{restriction1})}}. Also, let $s^{\infty}$ be the sequence of {\rm (\ref{st1})}. Then the linear span $\mathbb{L}_s$ of $s^{\infty}$ is given by
\begin{equation*}\label{Ls1}
  \mathbb{L}_s=\left\{
                 \begin{array}{ll}
                   \frac{m(2^{h+2}+(-1)^{h-1})+{\color{blue}{3}}}{3}, & \hbox{if $h$ is even;} \\
                   \frac{m(2^{h+2}+(-1)^{h-1}-6)+{\color{blue}{3}}}{3}, & \hbox{if $h$ is odd.}
                 \end{array}
               \right.
\end{equation*}
Moreover, we have
\begin{equation*}\label{Ms1}
  \mathbb{M}_s(x)=\left\{
                    \begin{array}{ll}
                      {\color{blue}{(x-1)}}\prod\limits_{i=0}^{2^h-1}m_{\alpha^{-i-2^{m-h}}}(x)
                       \prod\limits_{{\tiny\begin{array}{c}
                      {1}\leq 2j+1\leq 2^h-1\\
                      \kappa_{2j+1}^{(h)}=1\end{array}
                      }}m_{\alpha^{-2j-1}}(x),& \hbox{if $h$ is even;} \\
                      {\color{blue}{(x-1)}}\prod\limits_{i=1}^{2^h-1}m_{\alpha^{-i-2^{m-h}}}(x)
                       \prod\limits_{{\tiny\begin{array}{c}
                      {3}\leq 2j+1\leq 2^h-1\\
                      \kappa_{2j+1}^{(h)}=1\end{array}
                      }}m_{\alpha^{-2j-1}}(x),& \hbox{if $h$ is odd.}
                    \end{array}
                  \right.
\end{equation*}
\end{lemma}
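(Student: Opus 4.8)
The plan is to follow the method of \cite{DZ}: by Lemma~\ref{lem1.0.1} it suffices to determine the index set $I=\{i:c_i\neq 0\}$ in the expansion $s_t=\sum_{i=0}^{2^m-2}c_i\alpha^{it}$, since then $\mathbb{L}_s=|I|$ and $\mathbb{M}_s(x)=\prod_{i\in I}(1-\alpha^ix)$. Starting from~(\ref{st1}), and using $\mathrm{Tr}(y)=\sum_{k=0}^{m-1}y^{2^k}$ together with $\F_2$-linearity and the Frobenius invariance $\mathrm{Tr}(y^2)=\mathrm{Tr}(y)$ of the trace, I would write $s_t=\mathrm{Tr}(1)+s_t^{(B)}+s_t^{(A)}$ with $s_t^{(B)}=\mathrm{Tr}\bigl(\sum_{i=0}^{2^h-1}(\alpha^t)^{i+2^{m-h}}\bigr)$ and $s_t^{(A)}=\mathrm{Tr}\bigl(\sum_{j=1}^{2^h-1}(\alpha^t)^j\bigr)$, and then expand each trace over $\F_{2^m}$. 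Every exponent that occurs is a cyclic shift of a nonzero binary word of Hamming weight at most $h+1$, hence is never $\equiv 0\pmod{2^m-1}$; so $c_0=\mathrm{Tr}(1)$, which (as $m$ is odd here) equals $1$ and supplies the factor $(x-1)$ of $\mathbb{M}_s(x)$. For $e\neq 0$, after collecting the exponents into $2$-cyclotomic cosets, the Frobenius invariance shows that the $\F_2$-multiplicity of $\alpha^{et}$ contributed by a block $\sum_{a\in S}\mathrm{Tr}((\alpha^t)^a)$ equals $|S\cap C_e|\cdot(m/|C_e|)\bmod 2$, so $C_e\subseteq I$ if and only if the sum of the contributions of $s_t^{(B)}$ and $s_t^{(A)}$ is odd.

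The combinatorial core is to understand, via the necklace picture (where multiplication by $2$ is a cyclic shift and the relevant invariants are the weight and the cyclic gap pattern), the cosets $C_{b_i}$ with $b_i:=2^{m-h}+i$ for $0\le i\le 2^h-1$ and the cosets $C_a$ with $a$ odd, $1\le a\le 2^h-1$. Condition~(\ref{restriction1}) forces $m\ge 4h+1$, and I would prove: (a) for $i\ge 1$ the word $b_i$ has a unique longest run of zeros, of length $(m-h)-p-1\ge m-2h$, where $p\le h-1$ is the position of the highest set bit of $i$, and this is strictly larger than every other gap (each of which is $\le 2h-2<m-2h$); hence $|C_{b_i}|=m$, the cosets $C_{b_i}$ with $1\le i\le 2^h-1$ are pairwise distinct, and none of them contains an integer in $\{1,\dots,2^h-1\}$ because the shortest arc of consecutive positions carrying all bits of $b_i$ has length $\ge h+1$; (b) $C_{b_0}=C_{2^{m-h}}=C_1$; (c) for $a$ odd with $1\le a\le 2^h-1$, the zero run of $a$ in positions $u_{(a)},\dots,m-1$ has length $m-u_{(a)}>h$ and is the unique longest, so $|C_a|=m$, the set $C_a\cap\{1,\dots,2^h-1\}$ has exactly $\epsilon_a^{(h)}$ elements of which only $a$ itself is odd, and consequently distinct odd $a$ give distinct cosets $C_a$, all different from every $C_{b_i}$ except that $C_1=C_{b_0}$ is also the coset of $a=1$. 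I expect (a)--(c) to be the main obstacle: one has to make the gap/necklace counting fully rigorous and verify that each residue class of $m\bmod 4$ in~(\ref{restriction1}) really rules out all of these collisions, the small cases included (with $h=1$ absorbed into the $N_1=1$ clause of Lemma~\ref{DZlem9}).

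Granting (a)--(c), the multiplicities follow at once. For $1\le i\le 2^h-1$ only the summand $a=b_i$ of $s_t^{(B)}$ meets $C_{b_i}$, with multiplicity $1\cdot(m/m)=1$, and $s_t^{(A)}$ contributes nothing there, so $C_{b_i}\subseteq I$. For $C_1=C_{b_0}$: the summand $2^{m-h}$ of $s_t^{(B)}$ contributes $1$ and the weight-one summands $j=1,2,\dots,2^{h-1}$ of $s_t^{(A)}$ contribute $\epsilon_1^{(h)}=h$, so $C_1\subseteq I$ if and only if $h+1$ is odd, i.e.\ if and only if $h$ is even. For odd $a$ with $3\le a\le 2^h-1$: grouping the terms of $s_t^{(A)}$ by odd part and using $\mathrm{Tr}((\alpha^t)^{2^{\ell}a})=\mathrm{Tr}((\alpha^t)^a)$ rewrites $s_t^{(A)}$ over $\F_2$ as $\sum_{1\le a'\le 2^h-1,\ a'\text{ odd}}\kappa_{a'}^{(h)}\mathrm{Tr}((\alpha^t)^{a'})$, whence $C_a\subseteq I$ if and only if $\kappa_a^{(h)}=1$. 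Therefore $I$ is the disjoint union of $\{0\}$, of $C_1$ when $h$ is even, of the $C_{b_i}$ for $1\le i\le 2^h-1$, and of the $C_a$ for odd $a\in\{3,\dots,2^h-1\}$ with $\kappa_a^{(h)}=1$. Rewriting each $C_j$ as the factor $m_{\alpha^{-j}}(x)$ (via $\prod_{e\in C_j}(1-\alpha^e x)=m_{\alpha^{-j}}(x)$ up to a unit) gives exactly the stated $\mathbb{M}_s(x)$: when $h$ is even $\kappa_1^{(h)}=0$, so the $A$-product automatically omits $a=1$ while the $i=0$ factor of the $B$-product supplies $C_1$; when $h$ is odd $\kappa_1^{(h)}=1$ but $C_1\notin I$ (the extra contribution from $s_t^{(B)}$ cancels it), which is precisely why the $A$-product must start at $a=3$ and the $B$-product at $i=1$.

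Finally $\mathbb{L}_s=|I|$. By Lemma~\ref{DZlem9} the number of odd $a\in\{1,\dots,2^h-1\}$ with $\kappa_a^{(h)}=1$ is $N_h=\frac{2^h+(-1)^{h-1}}{3}$ (and $N_1=1$); removing the $a=1$ term when $h$ is odd, and nothing when $h$ is even, the $A$-cosets contribute $N_h m$ respectively $(N_h-1)m$, the $C_{b_i}$ contribute $2^h m$ respectively $(2^h-1)m$, and $\{0\}$ contributes $1$. Adding these up, and using $(-1)^{h-1}=-1$ for $h$ even and $(-1)^{h-1}=1$ for $h$ odd, gives $\mathbb{L}_s=\frac{m(2^{h+2}-1)+3}{3}$ and $\mathbb{L}_s=\frac{m(2^{h+2}-5)+3}{3}$ respectively, which are the two claimed values.
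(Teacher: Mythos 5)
There is a genuine gap, and it sits exactly where the statement itself is defective. In your second sentence of the multiplicity analysis you write that $c_0=\mathrm{Tr}(1)$, ``which (as $m$ is odd here) equals $1$ and supplies the factor $(x-1)$'' --- but nothing in the hypotheses forces $m$ to be odd: Condition (\ref{restriction1}) explicitly includes the cases $m\equiv 0\pmod 4$ and $m\equiv 2\pmod 4$. When $m$ is even, $\mathrm{Tr}(1)=0$, so $c_0=0$, the factor $(x-1)$ does not divide $\mathbb{M}_s(x)$, and $\mathbb{L}_s$ is one less than the value you (and the quoted Lemma \ref{lem4}) assert. This is not a patchable oversight in your write-up: the statement as quoted from \cite{DZ} is simply false for even $m$, which is precisely the point of this part of the paper. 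The paper does not prove Lemma \ref{lem4} at all; it cites it and immediately corrects it in Lemma \ref{lem18}, replacing $(x-1)$ by $(x-1)^{\mathbb{N}_2(m)}$ and the additive constant $3$ by $3\cdot\mathbb{N}_2(m)$ in $\mathbb{L}_s$. So a ``proof'' of the statement as given can only succeed by silently restricting to odd $m$, which is what your parenthetical does.

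Apart from this, your outline of the coset combinatorics is sound and is essentially the argument of \cite{DZ} that the paper reuses (compare your items (a)--(c) with Lemma \ref{192021} and the $3h\leq m+1$ case of Lemma \ref{lem15}, and your parity count over odd parts with Lemma \ref{DZlem9}): under (\ref{restriction1}) one has $m\geq 4h+1$, the cosets $C_{i+2^{m-h}}$ for $1\leq i\leq 2^h-1$ have full size $m$, are pairwise disjoint, and meet no $C_a$ with $a\in\{1,\ldots,2^h-1\}$, while $C_{2^{m-h}}=C_1$ receives multiplicity $h+1$, explaining the $h$ even/odd dichotomy in the two products. If you rerun your final count with $c_0=\mathrm{Tr}(1)=\mathbb{N}_2(m)$ instead of $1$, you recover exactly the corrected formulas of Lemma \ref{lem18}, i.e.\ $\mathbb{L}_s=\frac{m(2^{h+2}-1)+3\mathbb{N}_2(m)}{3}$ for $h$ even and $\frac{m(2^{h+2}-5)+3\mathbb{N}_2(m)}{3}$ for $h$ odd, rather than the claimed values; that is the version you should be proving.
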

Unfortunately, the results in Lemma \ref{lem4} are not rigorous. When $m$ is even, its minimal polynomial does not contain the factor $(x-1)$. The following are our corrected results. Correspondingly, we also correct the lower bound in \cite{DZ} about the minimum distance of its corresponding cyclic code. Since the proofs of the corrected results are exactly the same as that of \cite{DZ}, we omit its here.
\begin{lemma}\label{lem18}
Let $s^{\infty}$ be the sequence of {\rm (\ref{st1})} and let $h$ satisfy Condition {\rm {(\ref{restriction1})}}.
 Then the linear span $\mathbb{L}_s$ of $s^{\infty}$ is given by
\begin{equation*}\label{Ls1}
  \mathbb{L}_s=\left\{
                 \begin{array}{ll}
                   \frac{m(2^{h+2}-1)+
                   {\color{blue}{3\cdot\mathbb{N}_2(m)}}}{3}, & \hbox{if $h$ is even;} \\
                   \frac{m(2^{h+2}-5)+
                   {\color{blue}{3\cdot\mathbb{N}_2(m)}}}{3}, & \hbox{if $h$ is odd.}
                 \end{array}
               \right.
\end{equation*}
Moreover, we have

\begin{small}\begin{equation*}\label{Ms1.1.1}
  \mathbb{M}_s(x)=\left\{
                    \begin{array}{ll}
                      {\color{blue}{(x-1)^{\mathbb{N}_2(m)}}}\prod\limits_{i=0}^{2^h-1}m_{\alpha^{-i-2^{m-h}}}(x)
                       \prod\limits_{{\tiny\begin{array}{c}
                      {3}\leq 2j+1\leq 2^h-1\\
                      \kappa_{2j+1}^{(h)}=1\end{array}
                      }}m_{\alpha^{-2j-1}}(x),& \hbox{if $h$ is even;} \\
                      {\color{blue}{(x-1)^{\mathbb{N}_2(m)}}}\prod\limits_
                      {i=1}^{2^h-1}m_{\alpha^{-i-2^{m-h}}}(x)
                       \prod\limits_{{\tiny\begin{array}{c}
                      {3}\leq 2j+1\leq 2^h-1\\
                      \kappa_{2j+1}^{(h)}=1\end{array}
                      }}m_{\alpha^{-2j-1}}(x),& \hbox{if $h$ is odd.}
                    \end{array}
                  \right.
\end{equation*}
\end{small}
\end{lemma}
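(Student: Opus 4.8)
The plan is to follow the cyclotomic‑coset analysis behind \cite[Lemma~22]{DZ} almost verbatim, the only substantive change being the treatment of the constant coefficient of the sequence. First I would rewrite {\rm(\ref{st1})} by expanding each trace via ${\rm{Tr}}(\beta)=\sum_{k=0}^{m-1}\beta^{2^k}$, obtaining
$$s_t=m\cdot 1+\sum_{j\in (A\setminus\{0\})\cup B}\ \sum_{k=0}^{m-1}\bigl(\alpha^{j2^k}\bigr)^{t}\qquad (t\ge 0),$$
with $A$ and $B$ as in {\rm(\ref{A1A})}--{\rm(\ref{B1B})}; note that Condition~{\rm(\ref{restriction1})} forces $m>4h$, so in particular $A\setminus\{0\}$ and $B$ are disjoint and every exponent involved lies strictly between $0$ and $2^m-1$. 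Matching this against the unique expansion of Lemma~\ref{lem1.0.1}, the coefficient of $(\alpha^i)^t$ is $c_0=m\bmod 2=\mathbb{N}_2(m)$ when $i=0$ (no nonzero $j<2^m-1$ satisfies $j2^k\equiv 0$), and for $i\ge 1$
$$c_i\equiv \frac{m}{|C_i|}\cdot\#\Bigl(\bigl((A\setminus\{0\})\cup B\bigr)\cap C_i\Bigr)\pmod 2 .$$
Hence $\mathbb{M}_s(x)=(x-1)^{\mathbb{N}_2(m)}\prod_{e}m_{\alpha^{-e}}(x)$, the product running over the coset leaders $e$ with $c_e\neq 0$, and $\mathbb{L}_s=\mathbb{N}_2(m)+\sum_e|C_e|$. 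This already records the correction to Lemma~\ref{lem4}: the factor $x-1$ appears with multiplicity $\mathbb{N}_2(m)$, i.e. exactly when $m$ is odd, and never with larger multiplicity.

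The second step is to treat the cosets $C_{i+2^{m-h}}$, $0\le i\le 2^h-1$, coming from the ``$B$‑part''. Writing the exponents in their $m$‑bit binary form and using that $m>4h$ (so the run of $m-h>3h$ zeros above the top bit of each $i+2^{m-h}$ is strictly longer than any other gap), a rotation argument shows that each such coset has size $m$ and that these $2^h$ cosets are pairwise distinct. For $i\ge 1$ the coset $C_{i+2^{m-h}}$ is disjoint from $A\setminus\{0\}$ and meets $B$ only in $i+2^{m-h}$, so $c_{i+2^{m-h}}=1$. The exception is $i=0$: here $C_{2^{m-h}}=C_1$ is the coset of all powers of $2$, which meets $A\setminus\{0\}$ in $\{1,2,\dots,2^{h-1}\}$ ($h$ elements) and $B$ only in $\{2^{m-h}\}$, so $c_1\neq 0$ precisely when $h+1$ is odd, i.e. when $h$ is even. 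This accounts for the degree‑$m$ factors $m_{\alpha^{-i-2^{m-h}}}(x)$ with $i$ ranging over $0\le i\le 2^h-1$ if $h$ is even and over $1\le i\le 2^h-1$ if $h$ is odd.

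Third, the remaining contributing cosets are those meeting $A\setminus\{0\}$ but not $B$. Each such coset is $C_{j_0}$ for a unique odd $j_0\in\{3,5,\dots,2^h-1\}$ (distinct odd integers below $2^h$ lie in distinct cosets, again because $m>2h$), and one checks that $C_{j_0}$ does not meet $B$ and that $|C_{j_0}|=m$ (the long run of $m-u_{(j_0)}>3h$ zeros forces the period of the pattern to exceed $m/2$, hence to equal $m$). Counting the rotations of $j_0$ that stay inside $\{1,\dots,2^h-1\}$ gives $\#\bigl((A\setminus\{0\})\cap C_{j_0}\bigr)=h-u_{(j_0)}+1=\epsilon_{j_0}^{(h)}$, so $c_{j_0}\neq 0$ if and only if $\kappa_{j_0}^{(h)}=1$; this is why the product over the odd exponents is taken over those $2j+1$ with $\kappa_{2j+1}^{(h)}=1$, and starting it at $2j+1=3$ is exactly what excludes the already‑treated coset $C_1$. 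Finally, by Lemma~\ref{DZlem9} there are $N_h$ odd values among $\epsilon_a^{(h)}$, $a$ odd in $\{1,\dots,2^h-1\}$; the value $a=1$ corresponds to $C_1$ and its removal is vacuous when $h$ is even and deletes one value when $h$ is odd. Adding up the degree‑$m$ contributions from the $B$‑part and the odd‑exponent part together with the degree $\mathbb{N}_2(m)$ of $(x-1)^{\mathbb{N}_2(m)}$, and substituting $N_h=\tfrac{2^h+(-1)^{h-1}}{3}$, yields both stated formulas for $\mathbb{L}_s$ and $\mathbb{M}_s(x)$.

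The main obstacle is the combinatorics of the second and third steps: proving that all the relevant cosets have full size $m$, that they are pairwise distinct, and that the intersection numbers $\#\bigl(((A\setminus\{0\})\cup B)\cap C_\bullet\bigr)$ are exactly as claimed. This is precisely where Condition~{\rm(\ref{restriction1})} is used with full force — it guarantees that in the $m$‑bit cyclic representation of every exponent occurring there is a run of zeros strictly longer than every other gap, so that these cyclic words can be manipulated as ordinary binary strings. All of this is carried out in \cite{DZ}; the genuinely new ingredient is only the elementary identity $c_0=m\bmod 2=\mathbb{N}_2(m)$, which supplies the corrected power of $x-1$ and, through it, the extra term $3\,\mathbb{N}_2(m)$ in $\mathbb{L}_s$.
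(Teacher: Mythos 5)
Your proposal is correct and follows essentially the same route as the paper, which omits its own proof precisely because it coincides with the cyclotomic-coset analysis of \cite{DZ}: all relevant cosets $C_{i+2^{m-h}}$ ($i\in A$) and $C_{2j+1}$ have size $m$ and are pairwise distinct under Condition (\ref{restriction1}), the intersection multiplicities reproduce $\epsilon_a^{(h)}$ and hence $\kappa_a^{(h)}$, and the only new ingredient is the constant coefficient $c_0={\rm Tr}(1)=\mathbb{N}_2(m)$, which yields the corrected factor $(x-1)^{\mathbb{N}_2(m)}$ and the term $3\,\mathbb{N}_2(m)$ in $\mathbb{L}_s$; your bookkeeping of the $B$-part ($2^h$ or $2^h-1$ cosets) and of the odd part ($N_h$ or $N_h-1$ cosets via Lemma \ref{DZlem9}) reproduces exactly the stated formulas. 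Only a cosmetic slip: for $i\geq 1$ the distinguished long run of zeros in $i+2^{m-h}$ has length $m-h-u_{(i)}\geq m-2h$ (not $m-h$), which is still strictly longer than every other gap under (\ref{restriction1}), so the rotation argument goes through unchanged.
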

\begin{theorem}\label{thm19}
	Let $h$ be a positive  integer  satisfying Condition {\rm {(\ref{restriction1})}}.
	The binary code $\C_s$ defined by the sequence of {\rm{(\ref{st1})}} has parameters $[2^m-1,2^m-1-\mathbb{L}_s,d]$ and generator polynomial $\mathbb{M}_s(x)$, where $\mathbb{L}_s$ and $\mathbb{M}_s(x)$ are given in Lemma \ref{lem18}. In addition,  the minimum  Hamming distance $d$ of $\C_s$ is bounded as follows
	according to the parity of  $h$ and $m$, respectively.
	\begin{equation*}
		d\geq\left\{
		\begin{array}{ll}
			2^h+2, & \hbox{if $h$ is even and $m$ is odd;} \\
			2^h+1, & \hbox{if $h$ is even and $m$ is even;}\\
			2^h,   & \hbox{if $h$ is odd.}
		\end{array}
		\right.
	\end{equation*}
\end{theorem}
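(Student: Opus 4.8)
The plan is to obtain the length, dimension and generator polynomial directly from Lemma~\ref{lem18} together with the definition of a sequence-defined code, and then to lower-bound the minimum distance by feeding the zero set of $\C_s$ into the BCH bound and sharpening the outcome, when $m$ is odd, via an even-weight argument. For the elementary part: the sequence $s^\infty$ of (\ref{st1}) has period $n=2^m-1$, so $\C_s$ has length $2^m-1$; by the definition of $\C_s$ and Lemma~\ref{lem1.0.1} the polynomial $\mathbb{M}_s(x)$ of Lemma~\ref{lem18} is the generator polynomial of $\C_s$, and its degree equals the linear span $\mathbb{L}_s$, so the dimension of $\C_s$ is $n-\mathbb{L}_s=2^m-1-\mathbb{L}_s$.

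For the distance, I would read the zeros of $\C_s$ off the factor $\prod_i m_{\alpha^{-i-2^{m-h}}}(x)$ of $\mathbb{M}_s(x)$. If $h$ is even, the index $i$ runs over $0,1,\ldots,2^h-1$, so $\alpha^j$ is a zero of $\C_s$ for each of the $2^h$ consecutive residues $j\in\{-2^{m-h}-2^h+1,\ldots,-2^{m-h}\}$ modulo $n$ (these residues are distinct and form a single block because $m-h>h$ under Condition~(\ref{restriction1}), hence $2^{m-h}+2^h-1<n$), and the BCH bound yields $d\geq 2^h+1$. If $h$ is odd, the index runs only over $1,\ldots,2^h-1$, which gives a block of $2^h-1$ consecutive zeros and so $d\geq 2^h$. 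It then remains to use the factor $(x-1)^{\mathbb{N}_2(m)}$: when $m$ is odd one has $\mathbb{N}_2(m)=1$, so $x=1$ is a root of the generator polynomial, whence $c(1)=0$, i.e.\ every codeword $c(x)$ has even Hamming weight and $d$ is even; combined with the previous estimate this forces $d\geq 2^h+2$ when $h$ is even and $m$ is odd (since $2^h+1$ is odd), while for $m$ even the bound $d\geq 2^h+1$ stands, and for $h$ odd the parity of $m$ contributes nothing since $2^h$ is already even. These are exactly the three cases of the statement.

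The argument involves no real difficulty beyond careful bookkeeping: one must extract the correct length ($2^h$ when $h$ is even, $2^h-1$ when $h$ is odd) of the block of consecutive zeros from the case-split expression for $\mathbb{M}_s(x)$ in Lemma~\ref{lem18}, and track how the presence of the root $x=1$ (controlled by $\mathbb{N}_2(m)$) interacts with the parity of $h$. In particular, no step uses any hypothesis on $\gcd(m,h)$, in agreement with the remark preceding the theorem.
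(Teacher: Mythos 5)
Your proposal is correct and follows essentially the same route as the paper: read the length, dimension and generator polynomial off Lemma~\ref{lem18}, then apply the BCH bound to the block of $2^h$ (resp.\ $2^h-1$) consecutive roots $\alpha^{-i-2^{m-h}}$, exactly as in the paper's proofs of the analogous distance bounds. Your explicit even-weight argument from the factor $(x-1)^{\mathbb{N}_2(m)}$ to upgrade $2^h+1$ to $2^h+2$ when $h$ is even and $m$ is odd is precisely the refinement the paper leaves implicit.
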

\begin{example}
	Let $(m,h)=(9,2)$ and $\alpha$ be a generator of $\F_{2^m}$ with $\alpha^9+\alpha^4+1=0$. Then the generator polynomial of the cyclic code $\C_s$ is $\mathbb{M}_s(x)=x^{46} + x^{45} + x^{42} + x^{41} + x^{40} + x^{39} + x^{38} + x^{37} + x^{36} + x^{35} + x^{31} +
	x^{28} + x^{27} + x^{24} + x^{23} + x^{22} + x^{21} + x^{20} + x^{18} + x^{15} + x^{14} + x^{12} +
	x^9 + x^6 + x^3 + x^2 + x + 1$ and $\C_s$
	is a $[511,465,8]$ binary cyclic code.
\end{example}

An interesting open problem proposed by Ding and Zhou in \cite{DZ} was stated as follows by  considering other restrictions on $h$.

\begin{open}\cite{DZ}\label{DZop1}
Determine the dimension and the minimum distance of the code $\C_s$ when $h$ satisfies
\begin{equation}\label{op1}
  \left\{
    \begin{array}{ll}
      \frac{m+3}{4}\leq h\leq\frac{m-1}{2}, & \hbox{if $m\equiv1 ({\rm{mod~}}4)$;} \\
      \frac{m+2}{4}\leq h\leq\frac{m-2}{2}, & \hbox{if $m\equiv2 ({\rm{mod~}}4)$;} \\
      \frac{m+1}{4}\leq h\leq\frac{m-3}{2}, & \hbox{if $m\equiv3 ({\rm{mod~}}4)$;} \\
      \frac{m}{4}\leq h\leq\frac{m-4}{2}, & \hbox{if $m\equiv0 ({\rm{mod~}}4)$.}
    \end{array}
  \right.
\end{equation}
\end{open}
In the following, we shall  answer to Open Problem \ref{DZop1}. To this end, we need the following lemma, which comes from \cite{DZ}. Although the proof of the following lemma has already been given in \cite{DZ}, we still provide another proof to facilitate understanding subsequent proofs.
\begin{lemma}\label{192021}\cite{DZ}
Let $h$ be a positive integer  satisfying Condition {\rm{(\ref{restriction1})}}. Then  the following facts concerning  the   $2$-cyclotomic coset  $C_i$ modulo $n$ containing $i$ follow.
\begin{itemize}
  \item For any $j \in B$, $|C_j|=m;$
  \item For any pair of distinct $i$ and $j$ in $B$, $C_i\cap C_j=\emptyset.$
\end{itemize}
\end{lemma}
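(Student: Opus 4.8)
Here is a plan for proving Lemma~\ref{192021}.

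The plan is to recast both assertions in terms of binary cyclic words. Identify each residue $i\in\mathbb{Z}_n$ (where $n=2^m-1$) with the length-$m$ binary vector $(a_0,\dots,a_{m-1})$ of its $2$-adic expansion; since $n=2^m-1$, multiplication by $2$ modulo $n$ acts on this vector as a cyclic shift, so $|C_j|$ is the least period of the word of $j$, and $C_i\cap C_j\neq\emptyset$ exactly when the words of $i$ and $j$ are cyclic shifts of one another. Everything will hinge on the fact that Condition~(\ref{restriction1}) forces $m\ge 4h+1$, hence $m-2h\ge 2h+1$. For $j=2^{m-h}+b\in B$ we have $b\in\{0,1,\dots,2^h-1\}$, so the $1$'s of the word of $j$ sit at position $m-h$ together with the $1$-positions of $b$, all of which lie in $\{0,\dots,h-1\}$. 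When $b\ne 0$, writing $p=\lfloor\log_2 b\rfloor\le h-1$, the positions $p+1,\dots,m-h-1$ are all $0$, a maximal run of $0$'s of length $m-h-1-p\ge m-2h\ge 2h+1$; one checks that every other maximal run of $0$'s in this word has length at most $2h-2$ (the run wrapping past position $m-h$ has length at most $2h-2$, and a run internal to $\{0,\dots,h-1\}$ has length at most $h-2$). Hence that long run is the \emph{unique} maximal $0$-run of maximal length, and it is immediately followed, reading indices cyclically in increasing order, by the distinguished $1$ at position $m-h$; the case $b=0$, where the word carries a single $1$, is even simpler.

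For the first item I would prove $|C_j|=m$ by contradiction. If the word of $j$ had least period $d$ with $d\mid m$ and $d<m$, then the long $0$-run $R$ above and its cyclic shift $R+d$ would both be maximal $0$-runs of the same length; they cannot coincide, since $R$ is a nonempty arc of length $<m$ and a nontrivial cyclic shift moves such an arc, so $R$ and $R+d$ are disjoint. This forces at least $2(m-2h)$ zeros into the length-$m$ word of $j$, whereas this word has at least one $1$ and hence at most $m-1$ zeros; but $m-1<2(m-2h)$ because $m\ge 4h+1$, a contradiction.

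For the second item, let $i=2^{m-h}+a$ and $j=2^{m-h}+b$ be distinct elements of $B$, and suppose $2^s i\equiv j\pmod n$; this means a cyclic shift $\phi\colon k\mapsto k+c\pmod m$ sends the word of $i$ to the word of $j$. Since $\phi$ preserves Hamming weight, $wt(a)=wt(b)$, so $a=0\iff b=0$; if both vanish then $i=j$, which is excluded, so assume $a,b\ne 0$. As $\phi$ carries maximal $0$-runs to maximal $0$-runs bijectively and length-preservingly, it must send the unique longest $0$-run of $i$ onto the unique longest $0$-run of $j$; being a translation, $\phi$ then also sends the element immediately following the longest $0$-run of $i$ to the element immediately following the longest $0$-run of $j$, that is, it sends position $m-h$ to position $m-h$. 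Hence $c\equiv 0\pmod m$, so $\phi$ is the identity and $i=j$, contradicting $i\ne j$.

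The only genuinely nontrivial point is the structural claim that the long $0$-run is \emph{strictly} the longest maximal run in the word of every element of $B$; this is precisely where Condition~(\ref{restriction1}) is used, through the inequality $m-2h>2h-2$. Once this ``unique landmark'' is in hand, both items follow quickly: any cyclic shift identifying the words of two elements of $B$ is forced to fix that landmark and hence to be trivial, and the low-weight degenerate cases ($a$ or $b$ equal to $0$) are dispatched by a one-line weight comparison.
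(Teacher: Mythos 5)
Your proposal is correct and follows essentially the same route as the paper: both identify residues with length-$m$ binary words on which multiplication by $2$ acts as a cyclic shift, and both exploit the long zero block of length at least $m-2h>2h-1$ forced by Condition (\ref{restriction1}) to rule out any nontrivial shift fixing an element of $B$ or mapping one element of $B$ to another. Your ``unique longest zero-run landmark'' formulation merely makes explicit the details the paper's terser contradiction argument leaves implicit, so no further comparison is needed.
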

\begin{proof}
According to the definition of $B$, for any integer $j$ in $B$, there exist $a_0,a_1,\ldots,a_{h-1}\in\F_2$ such that
\begin{equation*}\label{form1}
  j=(a_0,a_1,\ldots,a_{h-1},\underbrace{0,
  \ldots,0}_{m-2h},\underline{1},\underbrace{0,\ldots,0}
  _{h-1}).
\end{equation*}
Since $h$ satisfies the conditions of (\ref{restriction1}), we have
\begin{equation}\label{Inq1}
  m-2h>(h-1)+h.
\end{equation}
Recall that $\tau(x_0,x_1,\ldots,x_{n-1})$ denotes the vector $(x_{n-1},x_0, \ldots,x_{n-2})$ obtained from $(x_0,x_1,\ldots,x_{n-1})$ by the cyclic shift of the coordinates $i\mapsto i+1 ({\rm mod~} n)$. Then we obtain
$$2^s\cdot j=\tau^s(a_0,a_1,\ldots,a_{h-1},
\underbrace{0,\ldots,0}_{m-2h},
\underline{1},\underbrace{0,\ldots,0}
  _{h-1}),$$
where $s$ is a non-negative integer. Assume that there exists some $j\in B$ such that $|C_{j}|=m_0<m$, i.e.,
\begin{eqnarray*}
  2^{m_0}\cdot j &=& \tau^{m_0}(a_0,a_1,\ldots,a_{h-1},
\underbrace{0,\ldots,0}_{m-2h},
\underline{1},\underbrace{0,\ldots,0}
  _{h-1}) \\
  ~ &=& (a_0,a_1,\ldots,a_{h-1},
\underbrace{0,\ldots,0}_{m-2h},
\underline{1},\underbrace{0,\ldots,0}
  _{h-1}).
\end{eqnarray*}
  This contradicts Inequality (\ref{Inq1}). Therefore, the first statement holds.\\

Assume now that there exist distinct $i$ and $j$ in $B$ such that $C_i\cap C_j\neq \emptyset$. Then there exists an integer $m_1$ such that
\begin{eqnarray}\label{eq2}
  \nonumber i &=& 2^{m_1}\cdot j=\tau^{m_1}(a_0,a_1,\ldots,a_{h-1},
\underbrace{0,\ldots,0}_{m-2h},
\underline{1},\underbrace{0,\ldots,0}
  _{h-1}) \\
  ~ &=& (b_0,b_1,\ldots,b_{h-1},
\underbrace{0,\ldots,0}_{m-2h},
\underline{1},\underbrace{0,\ldots,0}
  _{h-1}),
\end{eqnarray}
where $b_i \in\F_2.$ The element $i$ has the form (\ref{eq2}) only if $m-2h\leq (h-1)+h$, which contradicts  Inequality (\ref{Inq1}). Therefore, the second statement holds.
\end{proof}
We shall define a set $B^{*}$ constructed from $u_{(i)}$ and $v_{(i)}$ as follows.
\begin{equation*}\label{Bstar}
  B^*=\{i+2^{m-h}: i \in A, v_{(i)}\geq m-2h+1 {\rm{~~and~~}} u_{(i)}\leq 3h-m-2\}.
\end{equation*}
We claim the following.
\begin{lemma}\label{lem13}
If $h$ satisfies the  following conditions
$$m+2\leq 3h\leq
\left\{
    \begin{array}{ll}
      \frac{3(m-1)}{2}, & \hbox{if $m\equiv1 ({\rm{mod~}}4)$;} \\
      \frac{3(m-2)}{2}, & \hbox{if $m\equiv2 ({\rm{mod~}}4)$;} \\
      \frac{3(m-3)}{2}, & \hbox{if $m\equiv3 ({\rm{mod~}}4)$;} \\
      \frac{3(m-4)}{2}, & \hbox{if $m\equiv0 ({\rm{mod~}}4)$,}
    \end{array}
  \right.
$$
then for any $i\in A$ and odd $j\in A$, we have $C_{i+2^{m-h}}\cap C_j\neq \emptyset$ only if $i+2^{m-h}\in B^*$.
\end{lemma}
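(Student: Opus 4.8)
The plan is to translate the condition $C_{i+2^{m-h}}\cap C_j\neq\emptyset$ into the language of cyclic binary words. Since multiplication by $2$ modulo $n=2^m-1$ acts as a cyclic shift of the $m$-bit binary representation, two integers lie in the same $2$-cyclotomic coset modulo $n$ exactly when their $m$-bit binary words are rotations of one another; equivalently, they share the same cyclic sequence, up to rotation, of ``gap lengths'', where a gap length is the length of a maximal run of $0$'s between two cyclically consecutive $1$'s. I would fix $i\in A$ with $i\neq 0$ (the case $i=0$ gives $b:=i+2^{m-h}=2^{m-h}$, whose coset is $C_1$, forcing $j=1$; here $u_{(0)}=0\le 3h-m-2$ by hypothesis, so $2^{m-h}\in B^*$ and this case is settled), and let $v=v_{(i)}$ be the position of the least set bit of $i$ and $q=u_{(i)}-1$ that of its greatest set bit, so that $0\le v\le q\le h-1$.

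Next I would read off the two relevant gap structures directly from the vector forms in (\ref{A1A}) and (\ref{B1B}). The binary word of $b=i+2^{m-h}$ has its $1$'s in the support of $i$, contained in $\{0,\dots,h-1\}$, together with position $m-h$; hence its cyclic gap sequence consists of the internal gaps of $i$, each at most $q-v-1\le h-2$, a gap of length $m-h-1-q$ between position $q$ and position $m-h$, and a gap of length $h-1+v$ between position $m-h$ and position $v$. The binary word of an odd $j\in A$ has its $1$'s in $\{0,\dots,h-1\}$, with bit $0$ equal to $1$ and greatest set bit at some $k\le h-1$; its cyclic gap sequence consists of internal gaps, each at most $k-1\le h-2$, and one gap of length $m-1-k\ge m-h$. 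The conditions imposed on $h$ give $2h\le m-1$, hence $m-h>h-1$, so $m-1-k$ is strictly the unique longest gap of $j$ (since $i\neq 0$ forces $wt(j)=wt(i)+1\ge 2$, so $j$ has at least two gaps), while $m-h-1-q<m-h$ shows that the only gap of $b$ that can have length $\ge m-h$ is the one of length $h-1+v$.

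Matching the unique longest gaps of $b$ and $j$ then yields at once $h-1+v\ge m-h$, i.e.\ $v_{(i)}=v\ge m-2h+1$ — the first defining inequality of $B^*$ — together with $h-1+v=m-1-k$, so $k=m-h-v$. Reading both cyclic gap sequences starting just after the longest gap forces, entry by entry, the bit pattern of $j$ on positions $0,\dots,k$ to coincide with the pattern read off from $b$ on positions $v,\dots,m-h$; equivalently $j=i/2^{v}+2^{m-h-v}$. It then remains to derive the second defining inequality $u_{(i)}\le 3h-m-2$ of $B^*$; this is where the real work lies, combining the explicit form of $j$ just obtained with the chain $m-2h+1\le v\le q\le h-1$ and the constraint that $j$ be an admissible odd element of $A$, and I expect it to be the main obstacle, the gap analysis above being essentially forced. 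The hypotheses $m+2\le 3h$ and $3h\le\tfrac{3(m-1)}{2}$ (in the form appropriate to the residue of $m$ modulo $4$) enter only to guarantee $2h<m$, $m-2h\ge 1$, and $m-2h+1\le h-1$, which is precisely what makes the ``unique longest gap'' argument valid.
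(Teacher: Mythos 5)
Your window/gap analysis is sound as far as it goes, and it is essentially a rigorous version of the comparison of the vector forms of $A$ and $B$ that the paper gestures at; but the proposal does not prove the lemma. Membership of $i+2^{m-h}$ in $B^*$ requires \emph{two} inequalities, $v_{(i)}\geq m-2h+1$ and $u_{(i)}\leq 3h-m-2$, and you obtain only the first, explicitly deferring the second as ``the main obstacle''. So what you have shown is only that $i+2^{m-h}$ lies in the larger set $\{i+2^{m-h}: i\in A,\ v_{(i)}\geq m-2h+1\}$. Worse, your own forced analysis shows why the deferred step cannot be carried out: once the unique zero run of length $\geq m-h$ in $i+2^{m-h}$ is matched with that of $j$, the collision $C_{i+2^{m-h}}\cap C_j\neq\emptyset$ (for $i\neq 0$, odd $j\in A$) holds exactly when $v_{(i)}\geq m-2h+1$, with $j$ forced to equal $i/2^{v_{(i)}}+2^{m-h-v_{(i)}}$; nothing in this bounds $u_{(i)}$ from above. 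Concretely, $(m,h)=(10,4)$ satisfies the hypothesis ($m\equiv 2\ ({\rm mod}\ 4)$ and $m+2=12\leq 3h=12\leq\frac{3(m-2)}{2}=12$), and $i=8$ gives $i+2^{m-h}=72=2^{3}\cdot 9$, hence $C_{72}=C_{9}$ with $9$ an odd element of $A$, while $u_{(8)}=4>3h-m-2=0$, so $72\notin B^*$. The second defining inequality of $B^*$ is therefore not a consequence of the collision hypothesis, and no completion of your argument along these lines can produce it.

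For comparison, the paper's own proof of this lemma consists of the observations $m-2h+1\leq h-1$ and $B^*\subseteq B$ followed by the bare assertion that combining the binary expansions from the definitions of $A$ and $B$ yields both conditions $v_{(i)}\geq m-2h+1$ and $u_{(i)}\leq 3h-m-2$; no derivation of the second condition is given there either. So you have not overlooked a trick that the paper supplies: your argument stalls precisely where the paper is silent, and it stalls because the inequality is not implied (the same example $(m,h,i,j)=(10,4,8,9)$ also contradicts the subsequent lemma in its case $3h\geq m+2$, $5h<2m+3$). The constructive outcome of your analysis is the corrected statement it actually proves, namely that for $i\neq 0$ one has $C_{i+2^{m-h}}\cap C_j\neq\emptyset$ for some odd $j\in A$ if and only if $v_{(i)}\geq m-2h+1$, with $j=i/2^{v_{(i)}}+2^{m-h-v_{(i)}}$ the unique such $j$; but as submitted the proposal does not establish the lemma as stated.
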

\begin{proof}

According to the assumption, one gets $m-2h+1\leq h-1$ and $B^*\subseteq B$. Combining the sequences (\ref{A1A}) and (\ref{B1B}), we deduce that there exists a positive integer  $i\in A$ and an  odd integer $j\in A$  provided that  $v_{(i)}\geq m-2h+1 {\rm{~~and~~}} u_{(i)}\leq 3h-m-2$. This completes the proof.
\end{proof}
\begin{remark}
Note that the assumption  made that $m-2h+1\leq h-1$ is necessary to have $C_{i+2^{m-h}}\cap C_j\neq \emptyset$. However, if $m-2h+1> h-1$, then both conditions that $i \in A$  and $v_{(i)}\geq m-2h+1$ cannot be satisfied simultaneously.
\end{remark}
The following lemma is useful, which extends the condition in \cite[Lemma 21]{DZ} made on $h$, that is, $h$ satisfies Condition  {\rm {(\ref{restriction1})}} by assuming that $h$ satisfies  $5\leq 5h< 2m+3$ (which is a less restrictive assumption).  Therefore, we have the following result.
\begin{lemma}\label{lem15}
Let $h$ be a positive integer  satisfying $5\leq 5h< 2m+3$. Then for any $i+2^{m-h}\in B$ and an odd integer $j\in A$ we have
\begin{equation*}\label{lem21}
  C_{i+2^{m-h}}\cap C_j=\left\{
                          \begin{array}{ll}
                            C_j, & \hbox{if $(i,j)=(0,1)$;} \\
                            \emptyset, & \hbox{otherwise.}
                          \end{array}
                        \right.
\end{equation*}
\end{lemma}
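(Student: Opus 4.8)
The plan is to exploit the explicit vector (2-adic) representation of the exponents, exactly as in the proof of Lemma~\ref{192021}. Write $i+2^{m-h}$ in vector form; since $i\in A$ it occupies the first $h$ coordinates, so
$$i+2^{m-h}=(a_0,a_1,\ldots,a_{h-1},\underbrace{0,\ldots,0}_{m-2h},\underline{1},\underbrace{0,\ldots,0}_{h-1}),$$
and every cyclic shift $\tau^s$ of it (i.e.\ every element $2^s(i+2^{m-h})\bmod n$) still contains that isolated $\underline 1$ in position $m-h$ together with at most $h$ further ones confined to a block of length $h$. In particular the 2-weight of any element of $C_{i+2^{m-h}}$ is $wt(i)+1\ge 2$. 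On the other side, an odd $j\in A$ has the shape $(1,a_1,\ldots,a_{h-1},0,\ldots,0)$ with $wt(j)=wt(i')$ for the corresponding $i'$, and its shifts are the $\tau^s(j)$.

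First I would dispose of the case $wt(i)\ge 1$: if $C_{i+2^{m-h}}\cap C_j\neq\emptyset$ then some shift of $i+2^{m-h}$ equals $j$, which forces the single $1$ in position $m-h$ of $i+2^{m-h}$ to be absorbed into the length-$h$ window occupied by $j$ together with the $wt(i)\ge1$ ones coming from $A$; a counting of gaps between consecutive $1$'s shows this is impossible unless the two ``blocks'' of $i+2^{m-h}$ (the $A$-part in positions $0,\ldots,h-1$ and the singleton in position $m-h$) can be cyclically packed into a window of length $h$. Writing out the largest gap between consecutive support elements of $i+2^{m-h}$ and comparing with the span of $j$ (which is $\le h$) yields a contradiction precisely because the hypothesis $5h<2m+3$ guarantees $m-2h+1> \big\lceil \tfrac{m-h}{?}\big\rceil$-type inequalities separating the two blocks; concretely $5h<2m+3$ is equivalent to the statement that the gap $m-2h$ between the end of the $A$-block and the singleton, and the gap $h-1$ wrapping around, cannot both be made small enough to fit inside one window of length $h$. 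So the only surviving possibility is $wt(i)=0$, i.e.\ $i=0$, and then $j$ odd with $wt(j)=1$ forces $j=1$.

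It then remains to check that when $(i,j)=(0,1)$ one indeed has $C_{2^{m-h}}\cap C_1=C_1$, equivalently $C_{2^{m-h}}=C_1$: this is immediate since $2^{m-h}=2^{m-h}\cdot 1$ is a power-of-two multiple of $1$, so $2^{m-h}\in C_1$ and the two cosets coincide. Hence $C_{i+2^{m-h}}\cap C_j$ equals $C_j$ in the single case $(i,j)=(0,1)$ and is empty otherwise, as claimed.

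The main obstacle is the ``gap-counting'' argument in the second paragraph: one must show rigorously that under $5\le 5h<2m+3$ the support of any cyclic shift of $i+2^{m-h}$ with $wt(i)\ge1$ cannot be contained in an interval of $m$ consecutive positions in which the nonzero entries all lie within a sub-window of length $h$ (the shape forced by $j\in A$). The cleanest way I see is to argue by contradiction on the positions of two designated $1$'s — the singleton at $m-h$ and the lowest-order $1$ of the $A$-part — and observe that cyclically they are at distance either $\le m-h$ or $\le h$ apart in the two directions; fitting both into a window of length $h$ forces $m-h\le h-1$, i.e.\ $m\le 2h-1$, which together with $5h<2m+3$ gives $5h<2(2h-1)+3=4h+1$, hence $h<1$, a contradiction. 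One has to be a little careful about the wrap-around and about the degenerate sub-case $h=1$, but the inequality $5h<2m+3$ is exactly what makes the packing impossible, so the bookkeeping goes through.
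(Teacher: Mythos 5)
Your reduction to the case $i=0$ is where the proof breaks. The two ``designated ones'' of $i+2^{m-h}$ sit at positions $v_{(i)}$ (the lowest-order one of the $A$-part) and $m-h$, so their short cyclic distance is $m-h-v_{(i)}$, not $m-h$; fitting both into a window of length $h$ therefore forces only $v_{(i)}\ge m-2h+1$, not $m\le 2h-1$. Your derivation of ``$m-h\le h-1$'' silently assumes $v_{(i)}=0$, i.e.\ that $i$ itself is odd, which is not given (it is the shifted element $j$, not $i$, that must be odd). Consequently no contradiction with $5h<2m+3$ is obtained. Your packing argument does succeed exactly when $3h\le m+1$, because then $m-2h+1\ge h>h-1\ge v_{(i)}$ — and that is precisely the first case of the paper's proof, which argues via the run of $m-2h$ zeros as in Lemma~\ref{192021}. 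For the remaining range $m+2\le 3h$ (still compatible with $5h<2m+3$) the paper does something you have no counterpart for: it invokes Lemma~\ref{lem13} and the auxiliary set $B^*$, and argues that $B^*=\emptyset$ because its defining conditions would force $v_{(i)}>u_{(i)}$.

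Moreover, the difficulty you deferred as ``bookkeeping'' is real and cannot be closed along your lines: in the regime $m+2\le 3h$ the packing you claim impossible actually happens. Take $(m,h)=(10,4)$, which satisfies $5\le 5h=20<23=2m+3$ and the standing constraints of the subsection, and let $i=2^3$, so $v_{(i)}=3=m-2h+1$ and $i+2^{m-h}=2^3+2^6=72$. Then $2^7\cdot 72\equiv 2^0+2^3=9 \pmod{2^{10}-1}$, so $C_{i+2^{m-h}}=C_9$ with $j=9$ an odd element of $A$ and $(i,j)\neq(0,1)$. This shows your impossibility claim (and hence the whole second paragraph of your argument) fails; it is also in tension with the printed statements of Lemma~\ref{lem13} and of the present lemma in that parameter range, so any repair would have to engage seriously with the set $B^*$ and the case $m+2\le 3h$ rather than with a two-ones distance count. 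The easy parts of your proposal ($C_{2^{m-h}}=C_1$, weight preservation, the reduction ``some shift of $i+2^{m-h}$ equals $j$'') are fine.
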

\begin{proof}
It is not difficult to check that $C_{i+2^{m-h}}\cap C_{j}=C_j$ if $(i,j)=(0,1)$.
Let us first prove the desired results hold if $3\leq 3h\leq m+1$.
 Assume that there are $i$ and odd $j$ such that $C_{i+2^{m-h}}\cap C_{j}\neq \emptyset$, where $i+2^{2^{m-h}}\in B$, $j\in A$, and $(i,j)\neq (0,1)$.
According to the definition of $A$ and $B$, $i$ and $j$ are of the form
\begin{equation*}
  i+2^{m-h}=(a_0,a_1,\ldots,a_{h-1},\underbrace{0,\ldots,0}_{m-2h},1,\underbrace{0,\ldots,0}
  _{h-1})
\end{equation*}
and
\begin{equation*}
  j=(a_0,a_1,\ldots,a_{h-1},
  \underbrace{0,\ldots,0}_{m-h}),
\end{equation*}
respectively. Then, $C_{i+2^{m-h}}\cap C_{j}\neq \emptyset$ only if there exist at least $m-2h$ successive zeros between $a_0$ and $a_{h-1}$, i.e., $h-2\geq m-2h$. It contradicts  the assumption made. Therefore, the conclusion follows when $3\leq 3h\leq m+1$.

If $3h\geq m+2$ and $5h< 2m+3$, we have $m-2h+1\leq h-1$ and $m-2h+1>3h-m-2$. From the definition of $B^*$,  the  former set is empty if $m-2h+1>3h-m-2$, i.e., $v_{(i)}>u_{(i)}$. The desired conclusion follows then from Lemma \ref{lem13}.
\end{proof}

We claim that when $h$ satisfies Condition {\rm{(\ref{op1})}}, the size of $C_{i+2^{m-h}}$ can not equal $m$.  The following lemma is needed.
\begin{lemma}\label{lem16}
Let $h$  be a positive integer satisfying Condition  {\rm{(\ref{op1})}}.
Then $|C_{i+2^{m-h}}|\in \left\{\frac{m}{3},\frac{m}{2},m\right\}.$ Moreover, we have the following results on the size of $C_{i+2^{m-h}}$ for any element $i\in A$.
\begin{itemize}
\item If $m$ satisfies one of the following conditions
\begin{itemize}
  \item[{\rm (i)}] $\gcd(6,m)=1$;
  \item[{\rm (ii)}] $\gcd(6,m)=2$ and $2h\leq\frac{m}{2}$;
  \item[{\rm (iii)}] $\gcd(6,m)=3$ and $h\leq\frac{m}{3}$;
  \item[{\rm (iv)}] $\gcd(6,m)=6$ and $2h\leq\frac{m}{2}$,
\end{itemize}
 then $|C_{i+2^{m-h}}|=m$ for any $i\in A$.

  \item If $\gcd(6,m)=2$ and $2h\geq \frac{m}{2}+1$, then $|C_j|=\frac{m}{2}$ if and only if $j=2^{\frac{m}{2}-h}+2^{m-h}$, and
 $|C_j|=m$, otherwise.
  \item If $\gcd(6,m)=3$ and $2h\geq\frac{2m}{3}+1$, then $|C_j|=\frac{m}{3}$ if and only if $j=2^{\frac{m}{3}-h}+
 2^{\frac{2m}{3}-h}+2^{m-h}$, and $|C_j|=m$, otherwise.
  \item If $\gcd(6,m)=6$ and $2h\geq\frac{2m}{3}+1$, then $|C_j|=\frac{m}{2}$ if and only if $j=2^{\frac{m}{2}-h}+2^{m-h}$, and $|C_j|=\frac{m}{3}$ if and only if $j=2^{\frac{m}{3}-h}+ 2^{\frac{2m}{3}-h}+2^{m-h}$, and $|C_j|=m$, otherwise.
 \item If $\gcd(6,m)=6$ and $\frac{m}{2}+1\leq 2h\leq\frac{2m}{3}$, then $|C_j|=\frac{m}{2}$ if and only if $j=2^{\frac{m}{2}-h}+2^{m-h}$, and $|C_j|=m$, otherwise.
\end{itemize}
\end{lemma}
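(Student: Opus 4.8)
The plan is to reinterpret $|C_{i+2^{m-h}}|$ as the least period of a very sparse cyclic binary word and then squeeze it using the bounds on $h$ coming from Condition~(\ref{op1}). Write $j:=i+2^{m-h}$ and $i=a_0+a_12+\cdots+a_{h-1}2^{h-1}$; since $h<m/2$ by Condition~(\ref{op1}), the length-$m$ binary vector of $j$ is
$$w=(a_0,\dots,a_{h-1},\underbrace{0,\dots,0}_{m-2h},1,\underbrace{0,\dots,0}_{h-1}),$$
so its support $S:=\{t:w_t=1\}$ lies in $\{0,1,\dots,h-1\}\cup\{m-h\}$, and $m-h$ is the unique element of $S$ that is $\ge h$. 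Since $0<j<2^m-1$, the integer $|C_j|$ is the least $d>0$ with $2^dj\equiv j\pmod{2^m-1}$, i.e.\ the least period of $w$ under the cyclic shift $\tau$; hence $d\mid m$ and, with $e:=m/d$, the set $S$ is invariant under $t\mapsto t+d\pmod m$.

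First I would prove $e\le 3$, which is exactly the first assertion $|C_j|\in\{m/3,m/2,m\}$. Assume $d<m$, so $e\ge 2$, and let $O$ be the orbit of $m-h$ under $t\mapsto t+d\pmod m$; it consists of $e$ points equally spaced at distance $d$, and $O\subseteq S$. As $m-h$ is the only element of $S$ exceeding $h-1$, the other $e-1$ points of $O$ lie in $\{0,\dots,h-1\}$. After deleting one of $e$ equally spaced points the largest cyclic gap among the remaining $e-1$ is at most $2d$, so the shortest arc containing them has length $\ge m-2d$; being contained in $\{0,\dots,h-1\}$ that arc has length $\le h-1$, whence $m-2d\le h-1$, i.e.\ $e=m/d\le 2m/(m-h+1)$. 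Since $h<m/2$ gives $m-h+1>m/2$, this yields $e<4$, so $e\le 3$.

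Next I would determine exactly which words realize $d=m/2$ and $d=m/3$. If $e=2$ (so $2\mid m$), then for each $p\in S\cap\{0,\dots,h-1\}$ we have $p+\tfrac m2\in S$, and since $p+\tfrac m2\ge\tfrac m2>h-1$ this forces $p+\tfrac m2=m-h$, i.e.\ $p=\tfrac m2-h$. Thus $S\cap\{0,\dots,h-1\}$, which is nonempty because $m-h\in S$ produces $\tfrac m2-h\in S$, must equal $\{\tfrac m2-h\}$; this requires $\tfrac m2-h\le h-1$, i.e.\ $2h\ge\tfrac m2+1$, and then $j=2^{\frac m2-h}+2^{m-h}$, whose word has least period exactly $\tfrac m2$ because its length-$\tfrac m2$ period-block carries a single $1$. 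If $e=3$ (so $3\mid m$), the orbit of $m-h$ under $t\mapsto t+\tfrac m3$ is $\{m-h,\ \tfrac{2m}3-h,\ \tfrac m3-h\bmod m\}$; when $h\le\tfrac m3$ one has $\tfrac{2m}3-h\ge\tfrac m3\ge h$, and when $h>\tfrac m3$ one has $(\tfrac m3-h)\bmod m=\tfrac{4m}3-h>\tfrac{5m}6>h-1$, so in either case an orbit point lies outside $\{0,\dots,h-1\}\cup\{m-h\}$, contradicting $O\subseteq S$. Hence $d=m/3$ never occurs; accordingly, in the sub-cases where the statement records the value $m/3$ one has $2h\ge\tfrac{2m}3+1$, so $\tfrac m3-h<0$ and the displayed element $2^{\frac m3-h}+2^{\frac{2m}3-h}+2^{m-h}$ is not of the form $i+2^{m-h}$ with $i\in A$, leaving that clause vacuous (the ``otherwise $|C_j|=m$'' part then covering all $j$).

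Finally I would read off the list from the trichotomy just established: $d=m/2$ occurs only when $2\mid m$ and $2h\ge\tfrac m2+1$, and then only for $j=2^{\frac m2-h}+2^{m-h}$; $d=m/3$ requires $3\mid m$ but in fact never occurs; and otherwise $d=m$. Splitting on $\gcd(6,m)\in\{1,2,3,6\}$ and, within each, on whether $2h\ge\tfrac m2+1$ and whether $2h\ge\tfrac{2m}3+1$ — using $h<m/2$ to guarantee that $\tfrac m2-h$ is the unique candidate exponent — reproduces each displayed case. I expect the only genuinely delicate points to be this last bookkeeping (checking that the listed cases are exhaustive for every residue of $m$ modulo $12$, and that the parity constraints are consistent with Condition~(\ref{op1})) together with the boundary $h=\tfrac m3$ in the $e=3$ computation.
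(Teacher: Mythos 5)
Your argument is sound, and for the parts of the lemma that the paper actually proves in detail it proceeds in the same spirit: the paper also works with the length-$m$ binary word of $j=i+2^{m-h}$, gets $|C_j|\in\{m/3,m/2,m\}$ from a lower bound $|C_j|\ge m/3$ (there by exhibiting two explicit subsets $V_1,V_2\subseteq C_{i+2^{m-h}}$ of sizes $h$ and $m-h-u_{(i)}$, rather than your cyclic-gap estimate $m-2d\le h-1$, which rules out $e\ge 4$ more transparently), and it settles the period-$m/2$ case exactly as you do: invariance under the shift by $m/2$ mirrors the bit at $m-h$ to $m/2-h$, which must fall among the first $h$ positions, forcing $2h\ge \frac m2+1$ and $j=2^{\frac m2-h}+2^{m-h}$.

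Where you genuinely diverge is the $m/3$ case, which the paper dispatches with ``the rest of the conclusions can be proved similarly'' and records as occurring precisely for $j=2^{\frac m3-h}+2^{\frac{2m}3-h}+2^{m-h}$. Your orbit computation is correct and shows this cannot happen for any $j\in B$ under Condition~(\ref{op1}): containment of the orbit of $m-h$ in $\{0,\dots,h-1\}\cup\{m-h\}$ would require $2m/3-h\le h-1$ together with (when $h>m/3$) $4m/3-h\le h-1$, i.e.\ $h\ge 2m/3+\frac12$, incompatible with $h<m/2$; equivalently, the displayed element lies in $B$ only if $h\le m/3$ and $2h\ge 2m/3+1$ hold simultaneously, which is impossible. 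A concrete check: for $(m,h)=(9,4)$, which satisfies (\ref{op1}) with $\gcd(6,m)=3$ and $2h\ge \frac{2m}{3}+1$, coset size $3$ modulo $511$ forces $73\mid j$, and no element of $B=\{32,\dots,47\}$ is a multiple of $73$, so every $j\in B$ has $|C_j|=9=m$. Hence your reading that the $m/3$ clauses are vacuous is the only way to make the stated bullets literally true; as written they are not supported by the paper's (omitted) argument, and the discrepancy propagates to the later linear-span computation in Lemma~\ref{lem21}, where $m/3$ is subtracted whenever $3\mid m$. It would strengthen your write-up to present this explicitly as a correction of the statement rather than a vacuity, but the mathematics of your proof is right and, on this point, goes beyond what the paper establishes.
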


\begin{proof}
 Recall that $u_{(i)}$ is the least non-negative integer such that $i\leq 2^{u_{(i)}}-1$, where $i\in A$. Let $V_1$ and $V_2$ be two sets defined as
\begin{equation*}\label{V1}
  V_1:=\{i+2^{m-h},2\cdot(i+2^{m-h}),\ldots, 2^{h-1}\cdot(i+2^{m-h})\}
\end{equation*}
and
\begin{equation*}\label{V2}
  V_2:=\{i\cdot2^h+1,2\cdot (i\cdot2^h+1),\ldots,2^{m-h-u_{(i)}}
  \cdot (i\cdot2^h+1)\},
\end{equation*}
respectively. It is easy to check that
\begin{equation*}
  \left\{
    \begin{array}{ll}
      |V_1|=h, \\
      |V_2|=m-h-u_{(i)}, \\
      V_1\subseteq C_{i+2^{m-h}}, \\
      V_1\subseteq C_{i+2^{m-h}}.
    \end{array}
  \right.
\end{equation*}
Then we have
$$|C_{i+2^{m-h}}|\geq \max\{h,m-h-u_{(i)}\}\geq \frac{m}{3}.$$
According to the definition of $2$-cyclotomic coset modulo $2^m-1$, we have $$|C_{i+2^{m-h}}|\in \left\{\frac{m}{3},\frac{m}{2},m\right\}.$$
From the assumption that $\gcd(6,m)=1$, we straightforwardly obtain $$|C_{i+2^{m-h}}|=m.$$
  When $\gcd(6,m)=2$ and $\left|C_{i+2^{m-h}}\right|=\frac{m}{2}$, the sequence form of $i+2^{m-h}$ can be divided into two equal parts, each part being the same. It implies that
$$i+2^{m-h}=(\underbrace{0,\ldots,0}
_{\frac{m}{2}-h},\underline{1}_*,\underbrace{0,\ldots,0}
_{\frac{m}{2}-1},1,\underbrace{0,
\ldots,0}_{h-1})$$
and $i=2^{\frac{m}{2}-h}.$
Note that $\underline{1}_*$ has to be the element in the first $h$ positions if $i+2^{m-h}\in B$, i.e., $\frac{m}{2}-h\leq h-1$ if $i+2^{m-h}\in B$. The rest of the conclusions can be proved similarly.
\end{proof}
Determining the linear span of $s^{\infty}$ also requires figuring out how many elements in set $B$ are in the same $q$-cyclotomic coset modulo $n$. We shall define three sets $C_j^{\prime}$,  $B_2^{\prime}$,  and $B_2^{\prime\prime}$ as follows.
 $$C_j^{\prime}:=\{j^{\prime}:j^{\prime}\in C_j {\rm{~~and~~}}
j^{\prime}\in B\},$$
$$
B_2^{\prime}:=
\left\{
  \begin{array}{ll}
    \left\{2^{m-3h+i}+2^{m-h}: i \in \{1,2,\ldots,4h-m-1\}\right\}, & \hbox{if $m$ is odd,} \\
    \left\{2^{m-3h+i}+2^{m-h}: i \in \{1,2,\ldots,4h-m-1\}\right\}\setminus
    \{2^{2h-
      \frac{m}{2}}+2^{m-h}\}, & \hbox{if $m$ is even,}
  \end{array}
\right.
$$
and
$$
B_2^{\prime\prime}:=
\left\{
  \begin{array}{ll}
    \left\{2^{i}+2^{m-h}: i \in \{0,1,\ldots,m-2h\}\right\}, & \hbox{if $m$ is odd,} \\
    \left\{2^{i}+2^{m-h}: i \in \{0,1,\ldots,m-2h\}\right\}\setminus
    \{2^{
      \frac{m}{2}-h}+2^{m-h}\}, & \hbox{if $m$ is even,}
  \end{array}
\right.
$$
respectively. The  sets defined above will play a crucial role. Indeed, it is easy to verify that
$|B_2^{\prime}|=4h-m-1-\mathbb{N}_2(m+1)$
and $|B_2^{\prime\prime}|
=m-2h-\mathbb{N}_2(m+1).$
Next, we have
\begin{lemma}\label{lem17}
Let $h$ satisfy the conditions of {\rm{(\ref{op1})}}. Then
\begin{itemize}
  \item[{\rm{(i)}}] For any $\{j_1,j_2\}\in{B\backslash B_2
\choose 2}$, $C_{j_1}\cap C_{j_2}=\emptyset$;
  \item[{\rm{(ii)}}] If $|C_j^{\prime}|>1$,  then $wt(j)=2$;
  \item[{\rm{(iii)}}]  If $|C_j^{\prime}|>1$, then $|C_j^{\prime}|=2$;
  \item[{\rm{(iv)}}] $|C_j^{\prime}|=2$ if and only if one of the following holds:
  \begin{itemize}
    \item[{\rm{(1)}}]when $m\geq3h$, $j \in B_2^{\prime}$;
    \item[{\rm{(2)}}] when $m<3h$, $j \in B_2^{\prime\prime}.$
  \end{itemize}
\end{itemize}
\end{lemma}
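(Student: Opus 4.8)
The plan is to reduce all four claims to a single question about cyclic shifts. Fix $j=i+2^{m-h}\in B$ with $i\in\{0,1,\dots,2^h-1\}$, and write $\mathrm{Supp}(j)=T\cup\{m-h\}$, where $T\subseteq\{0,\dots,h-1\}$ is the support of $i$; the question is: for which $s$ with $1\le s\le m-1$ does $2^sj\bmod(2^m-1)$ again lie in $B$? Such a shift acts on coordinates by $x\mapsto x+s\pmod m$ and maps $\mathrm{Supp}(j)$ bijectively onto $\mathrm{Supp}(2^sj)$, and membership of $2^sj$ in $B$ forces this image to have the shape $T'\cup\{m-h\}$ with $T'\subseteq\{0,\dots,h-1\}$. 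The structural fact I would lean on throughout is that under Condition \eqref{op1} one has $2h\le m-1$, so the distinguished coordinate $m-h$ lies strictly outside the block $\{0,\dots,h-1\}$ and is in fact separated from it by at least $m-2h+1\ge 1$ on one side and $h$ on the other.

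First I would settle (ii) and (iii) together. Suppose $|C_j'|\ge 2$ and pick a second element $2^sj\in C_j'$ with $1\le s\le m-1$. The image of the coordinate $m-h$ cannot equal $m-h$ (that would force $s\equiv0$), so $(m-h+s)\bmod m\in\{0,\dots,h-1\}$; dually, some $t_0\in T$ must be sent onto $m-h$, whence $s\equiv m-h-t_0\pmod m$. Combining these confines $t_0$ to a short interval and, in particular, forces $t_0\le m-2h$. Then for \emph{any further} $t\in T\setminus\{t_0\}$ a direct computation of $(t+s)\bmod m=m-h+(t-t_0)$ shows this value lies either in $\{m-h+1,\dots,m-1\}$ (if $t>t_0$) or in $\{m-2h+1,\dots,m-h-1\}$ (if $t<t_0$); since $m\ge 2h+1$, both of these sets are disjoint from $\{0,\dots,h-1\}\cup\{m-h\}$, contradicting that the image of $t$ must lie in $\mathrm{Supp}(2^sj)$. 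Hence $T=\{t_0\}$, so $wt(j)=2$, which is (ii); and since $T=\{t_0\}$ pins $s\equiv m-h-t_0$ uniquely, $C_j\cap B$ consists of $j$ and at most one further element, which is (iii). Part (i) is then immediate: two distinct members of $B\setminus B_2$ in a common coset would have equal $2$-weight $\ne2$, but the sole weight-$1$ element $2^{m-h}$ of $B$ is alone in its coset among $B$, and weight $\ge3$ is excluded by (ii).

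For (iv), by (ii)--(iii) it remains to pick out the weight-$2$ elements $j=2^k+2^{m-h}$ ($0\le k\le h-1$) that have a nontrivial partner in $B$. From the previous step the only candidate shift is $s=m-h-k$, and it carries $j$ to $2^{m-h}+2^{k'}$ with $k'\equiv m-2h-k\pmod m$; hence $|C_j'|=2$ exactly when $(m-2h-k)\bmod m\in\{0,\dots,h-1\}$ and $k'\not\equiv k$. Solving $(m-2h-k)\bmod m\in\{0,\dots,h-1\}$ for $k$ separates into the two regimes: when $m\ge 3h$ it gives $k\in\{m-3h+1,\dots,h-1\}$, i.e. $j\in B_2'$; when $m<3h$ it gives $k\in\{0,1,\dots,m-2h\}$, i.e. $j\in B_2''$. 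Finally the requirement $k'\ne k$ discards exactly the self-paired index $k=(m-2h)/2$, which is an integer precisely when $m$ is even --- this is the source of the set-differences appearing in the definitions of $B_2'$ and $B_2''$.

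The crux is the impossibility argument buried in the treatment of (ii): one must exclude \emph{every} way the support of a weight-$\ge3$ element of $B$ can be rotated back inside $\{0,\dots,h-1\}\cup\{m-h\}$, and the estimates are mildly sensitive to whether $m-2h$ is smaller or larger than $h$ (which is also what splits (iv) into its two cases). I expect the most transparent implementation is to argue with the cyclic gap multiset of $\mathrm{Supp}(j)$, which is invariant under the shift: it consists of the two ``long'' gaps flanking the coordinate $m-h$ --- of lengths $\ge m-2h+1$ and $\ge h$ --- together with $|T|-1$ ``short'' gaps (each $\le h-1$) coming from inside the block $\{0,\dots,h-1\}$; under \eqref{op1} one checks that a nonzero rotation cannot send this multiset to one of the same admissible shape unless there are no short gaps at all, i.e. $|T|\le 1$.
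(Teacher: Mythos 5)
Your proposal is correct in substance and follows essentially the same route as the paper: both arguments treat elements of $B$ through their binary supports, use the distinguished coordinate $m-h$ as an anchor, and show that any nontrivial shift landing back in $B$ must send a single support element onto $m-h$ while pushing every other support bit outside the admissible window $\{0,\dots,h-1\}\cup\{m-h\}$; your treatment of (ii) is in fact more uniform than the paper's, which writes out the weight-$3$ case and disposes of higher weights with ``similarly,'' whereas you handle all $|T|\ge 2$ at once, and your (iv) is the same partner computation $k\mapsto m-2h-k$ as in the paper. Two small points. First, in the subcase $t<t_0$ you place the image in $\{m-2h+1,\dots,m-h-1\}$ and claim disjointness from $\{0,\dots,h-1\}$ ``since $m\ge 2h+1$''; that justification fails when $2h+1\le m\le 3h-2$ (e.g.\ $m=13$, $h=6$, allowed by Condition (\ref{op1})). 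The conclusion is still right, but you must invoke the bound you already derived, $t_0\le m-2h$, which gives $m-h-(t_0-t)\ge h$, rather than the weaker $t_0-t\le h-1$. Second, your identification of the unique self-paired element as $2^{\frac{m}{2}-h}+2^{m-h}$ (exponent $\frac{m}{2}-h$) is the one consistent with Lemma \ref{lem16}; the displayed definition of $B_2^{\prime}$ in the paper removes $2^{2h-\frac{m}{2}}+2^{m-h}$, which appears to confuse the index $i$ with the exponent $m-3h+i$, so your version of (iv) proves the (corrected) intended statement.
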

\begin{proof}
If $wt(i)\neq wt(j)$, then $C_i\neq C_j$. Let $C_j^{\prime}=\{j,j_1,\ldots,j_{t}\}$, where $t\geq 1$. Then we obtain $wt(j)=wt(j_1)=\cdots=wt(j_t).$ Assume that $wt(j)=3$ and
$$j=(\underbrace{0,\ldots,0}_{i_1-1},
1,\underbrace{0,\ldots,0}_{i_2-1-i_1},1,
\underbrace{0,\ldots,0}_{h-i_2},
\underbrace{0,\ldots,0}
_{m-2h},\underline{1},\underbrace{0,\ldots,0}
  _{h-1}).$$
If $|C_j^{\prime}|>1$, then there exists some $j_1\in B$ such that
\begin{eqnarray*}
  j_1 &=& 2^{m-2h+h-i_2+1}\cdot j \\
  ~   &=& \tau^{m-h+1-i_2}(\underbrace{0,\ldots,0}_{i_1-1},
1,\underbrace{0,\ldots,0}_{i_2-1-i_1},1
\underbrace{0,\ldots,0}_{h-i_2},
\underbrace{0,\ldots,0}
_{m-2h},\underline{1},\underbrace{0,\ldots,0}
  _{h-1}).
\end{eqnarray*}
The element $j_1\in B$ only if
\begin{equation}\label{cond1}
  \left\{
     \begin{array}{ll}
       h-i_2+m-2h+1\geq h, \\
       i_1+m-2h+1+h-i_2\leq h.
     \end{array}
   \right.
\end{equation}
The two inequalities of (\ref{cond1}) cannot hold simultaneously. The case $wt(j)>3$ can be handled in a  similar manner. We are therefore in position to complete the proofs of (i) and (ii). Indeed, since $|C_j^{\prime}|>1$,  there exists some $j_t\in C_j^{\prime}$ and $j_t\neq j$.
 From (ii) of this lemma, we have $wt(j)=wt(j_t)=2$ and $|C_j^{\prime}|\leq 2$.

If $m$ is odd and $m\geq 3h$, then $B_2^{\prime}\subseteq B_2$.
For any $j=2^{m-3h+i}+2^{m-h}\in B_2^{\prime}$, we have $2^{2h-i}\cdot j\equiv2^{h-i}+2^{m-h}({\rm{mod}}~2^m-1)$ and $m-3h+1\leq h-i<h-1$.
 Note that $j\neq 2^{2h-i}j$ for all $i\in \{1,2,\ldots,4h-m-1\}$ if $m$ is odd. Then $\{j,2^{2h-i}\cdot j\}\in {B_2^{\prime}\choose 2}$. From (iii) of this lemma, $|C_j^{\prime}|=2$.
 When $m$ is even and $m\geq 3h$, we can similarly prove the condition is sufficient.
If $m$ is odd and $m< 3h$, then $B_2^{\prime\prime}\subseteq B_2$. For any $j=2^i+2^{m-h}\in B_2^{\prime\prime}$, we have
$2^{m-h-i}\cdot j\equiv2^{m-2h-i}+2^{m-h}({\rm{mod}}~2^m-1)$ and $0\leq m-2h-i\leq m-2h<h$. Note that $j\neq 2^{m-2h-i}\cdot j$ for all $i\in\{0,1,\ldots,m-2h\}$ if $m$ is odd. Then $\{j,2^{m-2h-i}\cdot j\}\in {B_2^{\prime\prime}\choose 2}.$ From (iii) of this lemma, $|C_j^{\prime}|=2$.
 When $m$ is even and $m< 3h$, we can similarly prove the condition is sufficient.

 Next, let us prove that the condition of (iv) is necessary. We shall only prove it in  the case where $m$ is odd since the case where $m$ is even can be proved similarly.
 If $m\geq 3h$, then $B_2^{\prime}\subseteq B_2$ and $B_2^{\prime\prime}=\emptyset$.   According to the definition of $B_2^{\prime}$, we know that
$$B_2\backslash B_2^{\prime}=\{2^i+2^{m-h}:
i \in \{1,2,\ldots,m-3h\}\}.$$
Assume that there exists some $j=2^i+2^{m-h}\in B_2\backslash B_2^{\prime}$ such that $|C_j^{\prime}|=2.$ Then $2^{m-h-i}\cdot j=2^{m-2h-i}+2^{m-h}\in B_2$. However, $h\leq m-2h-i\leq m-2h-1$ for any $i\in\{1,2,\ldots,m-3h\}$, which contradicts to the fact that $2^{m-h-i}\cdot j\in B_2.$ Similarly, we can prove that this condition is still necessary when $m<3h$.
This completes the proof.
\end{proof}


The following lemmas give some answers to Open Problem \ref{DZop1}.

\begin{lemma}\label{lem20}
Let $s^{\infty}$ be the sequence of {\rm (\ref{st1})} and let $h$ satisfy
\begin{equation}\label{solvepro21}
  m\geq 3h>
  \left\{
    \begin{array}{ll}
      \frac{3(m-1)}{4}, & \hbox{if $m\equiv1 ({\rm{mod~}}4)$;} \\
      \frac{3(m-2)}{4}, & \hbox{if $m\equiv2 ({\rm{mod~}}4)$;} \\
      \frac{3(m-3)}{4}, & \hbox{if $m\equiv3 ({\rm{mod~}}4)$;} \\
      \frac{3(m-4)}{4}, & \hbox{if $m\equiv0 ({\rm{mod~}}4)$.}
    \end{array}
  \right.
\end{equation}
Then
the linear span $\mathbb{L}_s$ of $s^{\infty}$ is given by
\begin{equation*}\label{Ls1}
  \mathbb{L}_s=
  \left\{
    \begin{array}{ll}
      \frac{m(2^{h+2}-2)}{3}-(4h-m)m+1, & \hbox{if both $\gcd(6,m)$ and $h$ are odd;} \\
      \frac{m(2^{h+2}+2)}{3}-(4h-m)m+1, & \hbox{if $\gcd(6,m)$ is odd and $h$ is even;} \\
      \frac{m(2^{h+2}+1)}{3}-(4h-m)m-\frac{m}{2}, & \hbox{if $\gcd(6,m)$ is even and $h$ are odd;} \\
      \frac{m(2^{h+2}+5)}{3}-(4h-m)m-\frac{m}{2}, & \hbox{if $\gcd(6,m)$ is even, $h$ is even, and $m\neq 3h$;} \\
      \frac{3h(2^{h+2}+5)}{3}-3h^2, & \hbox{if $m=3h$ and $h$ is even.}
    \end{array}
  \right.
\end{equation*}
Moreover, we have
\begin{equation*}\label{Ms1.1.2}
  \mathbb{M}_s(x)=\left\{
                    \begin{array}{ll}
                      (x-1)^{\mathbb{N}_2(m)}
                      \prod\limits_{i_1\in B\setminus \{B_2^{\prime}
                      \cup\{2^{m-h}\}\}}m_{\alpha^{-i_1}}(x)&~\\

                       \prod\limits_{{\tiny\begin{array}{c}
                      3\leq 2j+1\leq 2^h-1\\
                      \kappa_{2j+1}^{(h)}=1\end{array}
                      }}m_{\alpha^{-2j-1}}(x),&\hbox{if $h$ is odd;} \\
                      (x-1)^{\mathbb{N}_2(m)}
                      \prod\limits_{i_1\in B\setminus B_2^{\prime}
                      }m_{\alpha^{-i_1}}(x)&~\\
                       \prod\limits_{{\tiny\begin{array}{c}
                      3\leq 2j+1\leq 2^h-1\\
                      \kappa_{2j+1}^{(h)}=1\end{array}
                      }}m_{\alpha^{-2j-1}}(x),& \hbox{if $h$ is even.}
                    \end{array}
                  \right.
\end{equation*}
\end{lemma}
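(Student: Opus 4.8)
The plan is to use Lemma~\ref{lem1.0.1}: writing $s_t=\sum_{i\in\Z_n}c_i\alpha^{it}$ one has $\mathbb{L}_s=|I|$ and $\mathbb{M}_s(x)=\prod_{i\in I}(1-\alpha^ix)$, where $I=\{i:c_i\neq 0\}$, so the lemma amounts to deciding, for each $2$-cyclotomic coset, whether its (common) coefficient $c_i$ is nonzero, and then summing coset sizes. Because $B=2^{m-h}+A$ and ${\rm Tr}(1)=\mathbb{N}_2(m)$, (\ref{st1}) reads $s_t=\sum_{j\in B}{\rm Tr}(\alpha^{jt})+\sum_{j=1}^{2^h-1}{\rm Tr}(\alpha^{jt})+\mathbb{N}_2(m)$. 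Expanding ${\rm Tr}(\alpha^{jt})=\sum_{k=0}^{m-1}\alpha^{j2^kt}$, a coset $C$ lies in $I$ according to the parity of: the constant $\mathbb{N}_2(m)$ (relevant only for $C=C_0$, giving the factor $(x-1)^{\mathbb{N}_2(m)}$); the number of $j\in B$ with $C_j=C$; and the number of $j\in\{1,\dots,2^h-1\}$ with $C_j=C$. I would treat these three contributions in turn.

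For the third contribution I would repeat the computation of \cite{DZ} behind Lemma~\ref{lem4}: since ${\rm Tr}(\alpha^{jt})={\rm Tr}(\alpha^{at})$ when $j=2^va$ with $a$ odd, an odd coset leader $a$ with $1<a\le 2^h-1$ receives coefficient $\kappa_a^{(h)}=\epsilon_a^{(h)}\bmod 2$, while $a=1$ receives coefficient $h\bmod 2$ (there are exactly $h$ powers of two in $\{1,\dots,2^h-1\}$); every nonzero $a\le 2^h-1$ is aperiodic, so $|C_a|=m$, and distinct odd $a$'s lie in distinct cosets. By Lemma~\ref{DZlem9} the number of odd $a\in\{3,\dots,2^h-1\}$ with $\kappa_a^{(h)}=1$ equals $N_h-\mathbb{N}_2(h)=\tfrac{2^h+(-1)^{h-1}}{3}-\mathbb{N}_2(h)$, each of size $m$.

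For the $B$-contribution, first observe that $C_{2^{m-h}}=C_1$, so the $a=1$ term above merges with the $j=2^{m-h}$ term of $\sum_{j\in B}$; the coefficient attached to $C_1$ is therefore $1+h$, i.e.\ $C_1\subseteq I$ iff $h$ is even --- this is precisely why $m_{\alpha^{-2^{m-h}}}(x)$ is kept iff $h$ is even. For $j\in B\setminus\{2^{m-h}\}$, Lemma~\ref{lem17} identifies the cosets that $B$ meets twice, namely those with $j\in B_2^{\prime}$; these occur in pairs inside one coset and hence cancel modulo $2$, while the rest, $j\in B\setminus B_2^{\prime}$, are met once. Among the latter one must still discard every $j$ with $\alpha^j$ in a proper subfield, because then ${\rm Tr}(\alpha^{jt})$ is identically zero: if $|C_j|=m/2$ then $(2^{m/2}+1)\mid j$ and ${\rm Tr}_{\F_{2^m}/\F_2}(\alpha^{jt})={\rm Tr}_{\F_{2^{m/2}}/\F_2}\!\bigl({\rm Tr}_{\F_{2^m}/\F_{2^{m/2}}}(\alpha^{jt})\bigr)=0$; by Lemma~\ref{lem16} together with $m\ge 3h$ (which rules out the weight-three, size-$m/3$ members of $B$), these are the only reduced-size cosets met by $B$. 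Disjointness between the $B$-cosets (other than $C_1$) and the cosets of the third contribution follows from Lemma~\ref{lem15}, which applies since $5\le 5h<2m+3$ under (\ref{solvepro21}). Assembling the surviving cosets gives the two displayed forms of $\mathbb{M}_s(x)$, and then $\mathbb{L}_s=\deg\mathbb{M}_s(x)=\mathbb{N}_2(m)+m\bigl(N_h-\mathbb{N}_2(h)\bigr)+\sum|C_j|$, the last sum over the surviving $B$-cosets; there are $2^h-|B_2^{\prime}|-\mathbb{N}_2(h)$ of them (using $|B_2^{\prime}|=4h-m-1-\mathbb{N}_2(m+1)$), all of size $m$ except for at most one of size $m/2$, whose presence and fate are governed by $\gcd(6,m)$ and by whether $2^{m/2-h}+2^{m-h}\in B$.

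Finally, substituting $N_h=\tfrac{2^h+(-1)^{h-1}}{3}$ and the appropriate values of $\mathbb{N}_2(m),\mathbb{N}_2(m+1),\mathbb{N}_2(h)$ into each of the five regimes --- the parity of $\gcd(6,m)$, the parity of $h$, and the boundary case $m=3h$ --- and simplifying produces the five closed-form values of $\mathbb{L}_s$. I expect the main obstacle to be exactly this case analysis: one has to track carefully, in each regime, how many of the $B$-cosets are full-size versus half-size (and whether the half-size element even lies in $B$, cf.\ $m$ against $4h-2$), and to handle consistently the subfield cosets whose trace vanishes identically; once these are pinned down, the remainder is a mechanical combination of Lemmas~\ref{lem15}, \ref{lem16}, \ref{lem17}, \ref{DZlem9} and Lemma~\ref{lem1.0.1}.
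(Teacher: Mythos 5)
Your plan follows the same route as the paper's proof: reduce everything to Lemma~\ref{lem1.0.1}, treat the odd part of $A$ via $\kappa_a^{(h)}$ and Lemma~\ref{DZlem9}, merge the term $j=2^{m-h}$ with the odd-part-$1$ contribution so that $C_1$ survives exactly when $h$ is even, cancel the pairs inside $B_2^{\prime}$ by Lemma~\ref{lem17}, keep the remaining $B$-cosets disjoint from the odd-$A$ cosets by Lemma~\ref{lem15} (valid since $5h<2m+3$ here), and read off coset sizes from Lemma~\ref{lem16}. Up to that point the outline is sound and is essentially identical to the paper's argument.

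The gap is where the lemma's actual content lies: the five closed forms for $\mathbb{L}_s$ are never derived (you call the case analysis ``mechanical''), and at precisely that step your bookkeeping is incompatible with the statement you are supposed to prove. You discard every $j\in B$ with $|C_j|=m/2$ because ${\rm Tr}(\alpha^{jt})\equiv 0$ (indeed each residue of $C_j$ occurs $m/|C_j|=2$ times in the trace expansion, equivalently $\alpha^{j}\in\F_{2^{m/2}}$), whereas the displayed formulas subtract only $\tfrac{m}{2}$, i.e.\ they count the coset of $j_0=2^{m/2-h}+2^{m-h}$ with weight $m/2$ and keep its minimal-polynomial factor --- that is what the paper's proof does. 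Concretely, for $(m,h)=(10,3)$ your rule drops ${\rm Tr}(x^{132})$ ($132=2^2+2^7$, $|C_{132}|=5$, $\alpha^{132}\in\F_{2^5}$) and yields $\mathbb{L}_s=80$, while the stated third formula gives $85$; so carrying your (correct) observation through leads to modified formulas (a correction of $-m$ rather than $-\tfrac m2$, and no factor $m_{\alpha^{-j_0}}(x)$) in the $\gcd(6,m)$-even cases, not to the lemma as stated --- in fact it exposes a genuine inconsistency in the paper's own treatment of the size-$m/2$ coset. Your sketch is also internally ambiguous on this very point: after ``discarding'' the subfield cosets you later sum $|C_j|$ over surviving cosets ``all of size $m$ except for at most one of size $m/2$.'' You must commit to one convention and then actually perform the five-case count (including checking when $j_0\in B$ at all, i.e.\ $m\le 4h-2$, and the boundary $m=3h$); as written, the proposal neither reproduces the paper's computation nor completes the alternative one, so it does not establish the statement.
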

\begin{proof}
According to Lemma \ref{lem15}, if $h$ satisfies the conditions of (\ref{solvepro21}), then we have $C_{i+2^{m-h}}\cap C_j
=\emptyset$ when $i+2^{m-h}\in B\setminus\{0\}$ and odd $j\in A$.
From the definition of $B$, we have
$$B=\{2^{m-h}\}\cup B_2^{\prime}\cup\left\{B\setminus\{B_2^{\prime}\cup
\{2^{m-h}\}\}\right\}.$$
If $h$ is odd, then $\kappa_1^{(h)}=1$ and Eq. (\ref{st1}) can be transformed into
\begin{eqnarray*}
  s_t &=& {\rm{Tr}}\left(\sum_{i=0}^{2^h-1}
  x^{i+2^{m-h}}+\sum_{j=1}^{2^h-1}x^j\right)+ {\rm{Tr}}(1) \\
  ~   &=& {\rm{Tr}}\left(x^{2^{m-h}}+\sum_{i_1\in B\setminus\{B_2^{\prime}\cup
\{2^{m-h}\}\}}
  x^{i_1}+
  \sum_{i_2\in B_2^{\prime}}x^{i_2}
  +\sum_{j=1}^{2^h-1}x^j\right)+ {\rm{Tr}}(1) \\
  ~ &=& {\rm{Tr}}\left(\sum_{i_1\in B\setminus\{B_2^{\prime}\cup
\{2^{m-h}\}\}}
  x^{i_1}+
  \sum\limits_{{\tiny\begin{array}{c}
                      3\leq 2j+1\leq 2^h-1\\
                      \kappa_{2j+1}^{(h)}=1\end{array}
                      }}x^{2j+1}\right)+{\rm{Tr}}(1).
\end{eqnarray*}
Note that $m_{\alpha^{-1}}(x)$ is not a factor of $\mathbb{M}_s(x)$ since $\kappa_1^{(h)}=1.$ Combining Lemma \ref{lem16} and Lemma \ref{lem17}, we deduce that the minimal polynomial of $s^{\infty}$ is
\begin{equation}\label{mini1}
  (x-1)^{\mathbb{N}_2(m)}
                      \prod\limits_{i_1\in B\setminus \{B_2^{\prime}
                      \cup\{2^{m-h}\}\}}m_{\alpha^{-i_1}}(x)
                       \prod\limits_{{\tiny\begin{array}{c}
                      3\leq 2j+1\leq 2^h-1\\
                      \kappa_{2j+1}^{(h)}=1\end{array}
                      }}m_{\alpha^{-2j-1}}(x).
\end{equation}

If $h$ is even, then $\kappa_1^{(h)}=0$ and Eq. (\ref{st1}) can be transformed into
\begin{eqnarray*}
  s_t &=&  {\rm{Tr}}\left(\sum_{i_1\in B\setminus B_2^{\prime}}
  x^{i_1}+
  \sum_{i_2\in B_2^{\prime}}x^{i_2}
  +\sum_{j=1}^{2^h-1}x^j\right)+ {\rm{Tr}}(1) \\
  ~ &=& {\rm{Tr}}\left(\sum_{i_1\in B\setminus B_2^{\prime}}
  x^{i_1}+
  \sum\limits_{{\tiny\begin{array}{c}
                      3\leq 2j+1\leq 2^h-1\\
                      \kappa_{2j+1}^{(h)}=1\end{array}
                      }}x^{2j+1}\right)+{\rm{Tr}}(1).
\end{eqnarray*}
Similarly, one gets that the minimal polynomial of $s^{\infty}$ is
\begin{equation}\label{mini2}
  (x-1)^{\mathbb{N}_2(m)}
                      \prod\limits_{i_1\in B\setminus B_2^{\prime}
                      }m_{\alpha^{-i_1}}(x)
                       \prod\limits_{{\tiny\begin{array}{c}
                      3\leq 2j+1\leq 2^h-1\\
                      \kappa_{2j+1}^{(h)}=1\end{array}
                      }}m_{\alpha^{-2j-1}}(x).
\end{equation}

According to the assumption that $3h\leq m$, if $\gcd(6,m)$ is odd, then $\mathbb{N}_2(m)=1$ and the size of $C_{i+2^{m-h}}$ equals $m$ for any $i\in A$.
When $h$ is odd, from Lemma \ref{DZlem9} and (\ref{mini1}), the linear span $\mathbb{L}_s$ of $s^{\infty}$ is given as follows.
\begin{eqnarray}\label{ls1.1}
  \nonumber\mathbb{L}_s &=& \mathbb{N}_2(m)+\left(\frac{2^h+(-1)^{h-1}}{3}-1+\left|B\setminus \{B_2^{\prime}
                      \cup\{2^{m-h}\}\}\right|\right)\cdot m \\
  \nonumber~ &=& \mathbb{N}_2(m)+\frac{m(2^{h+2}+(-1)^{h-1}
                   -6)}{3}-(4h-m-1-\mathbb{N}_2(m+1))\cdot m\\
  ~ &=& \frac{m(2^{h+2}-2)}{3}-(4h-m)m+1.
\end{eqnarray}
When $h$ is even, from Lemma \ref{DZlem9} and  (\ref{mini2}), the linear span $\mathbb{L}_s$ of $s^{\infty}$ is given as follows.

\begin{eqnarray}\label{ls1.2}
  \nonumber\mathbb{L}_s &=& \mathbb{N}_2(m)+\left(\frac{2^h+(-1)^{h-1}}{3}+\left|B\setminus B_2^{\prime}
                      \right|\right)\cdot m \\
  \nonumber~ &=& \mathbb{N}_2(m)+\frac{m(2^{h+2}+(-1)^{h-1})}{3}-(4h-m-1-\mathbb{N}_2(m+1))\cdot m\\
  ~ &=& \frac{m(2^{h+2}+2)}{3}-(4h-m)m+1.
\end{eqnarray}
If $\gcd(6,m)$ is even, then $\mathbb{N}_2(m)=0$. Using Lemma \ref{lem16} and (\ref{mini1}), if $h$ is odd and $m\neq 3h$, then $2^{\frac{m}{2}-h}+2^{m-h}\in B_2^{\prime}$ and the linear span $\mathbb{L}_s$ of $s^{\infty}$ equals
\begin{eqnarray}\label{ls1.3}
  \nonumber\mathbb{L}_s &=& \mathbb{N}_2(m)+\left(\frac{2^h+(-1)^{h-1}}{3}-1+\left|B\setminus \{B_2^{\prime}
                      \cup\{2^{m-h}\}\}\right|\right)\cdot m-\frac{m}{2} \\
  \nonumber~ &=& \mathbb{N}_2(m)+\frac{m(2^{h+2}+(-1)^{h-1}
                   -6)}{3}-(4h-m-1-\mathbb{N}_2(m+1))\cdot m-\frac{m}{2}\\
  ~ &=& \frac{m(2^{h+2}+1)}{3}-(4h-m)m-\frac{m}{2}.
\end{eqnarray}
Note that if $h$ is odd, then $m\neq 3h$ since $\gcd(6,m)$ is even. If $h$ is even and $m\neq3h$, then we have
\begin{eqnarray}\label{ls1.4}
  \nonumber\mathbb{L}_s &=& \mathbb{N}_2(m)+\left(\frac{2^h+(-1)^{h-1}}{3}+\left|B\setminus B_2^{\prime}
                      \right|\right)\cdot m-\frac{m}{2} \\
  \nonumber~ &=& \mathbb{N}_2(m)+\frac{m(2^{h+2}+(-1)^{h-1})}{3}
  -(4h-m-1-\mathbb{N}_2(m+1))\cdot m-\frac{m}{2}\\
  ~ &=& \frac{m(2^{h+2}+5)}{3}-(4h-m)m-\frac{m}{2}.
\end{eqnarray}
If $m=3h$ and $h$ is even, then $2^{\frac{m}{2}-h}+2^{m-h}\notin B_2^{\prime}$  and
\begin{eqnarray}\label{ls1.5}
  \nonumber\mathbb{L}_s &=& \mathbb{N}_2(m)+\left(\frac{2^h+(-1)^{h-1}}{3}+\left|B\setminus B_2^{\prime}
                      \right|\right)\cdot m \\
  \nonumber~ &=& \mathbb{N}_2(m)+\frac{m(2^{h+2}+(-1)^{h-1})}{3}
  -(4h-m-1-\mathbb{N}_2(m+1))\cdot m\\
  ~ &=& \frac{3h(2^{h+2}+5)}{3}-3h^2.
\end{eqnarray}
Combining Eqs. (\ref{ls1.1}), (\ref{ls1.2}), (\ref{ls1.3}), (\ref{ls1.4}), and (\ref{ls1.5}), we obtain the desired results.
\end{proof}
The next theorem provides interesting information on the cyclic code $\C_s$ when $h$ satisfies
{\rm{(\ref{solvepro21})}}.
\begin{theorem}\label{thm23}
	Let $s^{\infty}$ be the sequence of {\rm (\ref{st1})} and let $h$ satisfy {\rm{(\ref{solvepro21})}}.
	The binary code $\C_s$ defined by the sequence of {\rm{(\ref{st1})}} has parameters $[2^m-1,2^m-1-\mathbb{L}_s,d]$ and generator polynomial $\mathbb{M}_s(x)$, where $\mathbb{L}_s$ and $\mathbb{M}_s(x)$ are given in Lemma \ref{lem20} and the minimum weight $d$ has the following bounds:
	\begin{equation*}
		d\geq\left\{
		\begin{array}{ll}
			2^{m-3h+1}, & \hbox{if $h$ is odd;} \\
			2^{m-3h+1}+1, & \hbox{if $h$ is even and $m$ is even;} \\
			2^{m-3h+1}+2, & \hbox{if $h$ is even and $m$ is odd.}
		\end{array}
		\right.
	\end{equation*}
\end{theorem}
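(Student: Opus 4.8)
The plan is to read off the length and the dimension from Lemma~\ref{lem20}, and to prove the stated lower bounds on $d$ by applying the BCH bound to a run of consecutive elements of the defining set, then improving it by a parity argument when $m$ is odd. For the dimension, recall from the construction at the start of Section~\ref{Generic-Construction} that the generator polynomial of $\C_s$ is the minimal polynomial $\mathbb{M}_s(x)$ of the defining sequence, whence $\dim\C_s=(2^m-1)-\deg\mathbb{M}_s(x)=(2^m-1)-\mathbb{L}_s$; since Lemma~\ref{lem20} supplies both $\mathbb{L}_s$ and $\mathbb{M}_s(x)$, this disposes of the first two parameters and of the claim on the generator polynomial.

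For the minimum distance I would first exhibit a long block of consecutive integers inside the defining set $T=\{e\in\Z_n:\mathbb{M}_s(\alpha^e)=0\}$. From the form of $\mathbb{M}_s(x)$ in Lemma~\ref{lem20}, $T$ contains the $2$-cyclotomic coset $C_{-i_1}$, and in particular the exponent $-i_1$, for every $i_1\in B\setminus B_2'$ when $h$ is even, and for every $i_1\in B\setminus(B_2'\cup\{2^{m-h}\})$ when $h$ is odd. Since $B=\{2^{m-h}+k:0\le k\le 2^h-1\}$ while $B_2'\subseteq\{2^{m-h}+2^{j}:m-3h+1\le j\le h-1\}$ by definition, no integer $k$ with $1\le k\le 2^{m-3h+1}-1$ is a power of $2$ with exponent $\ge m-3h+1$, so $2^{m-h}+k\in B\setminus B_2'$; for $h$ even the same is true at $k=0$. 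Hence the exponents $-(2^{m-h}+k)\equiv(2^m-1)-2^{m-h}-k\pmod{2^m-1}$ are pairwise distinct, do not wrap around modulo $2^m-1$, and sweep out a block of $2^{m-3h+1}-1$ consecutive elements of $\Z_n$ when $h$ is odd and of $2^{m-3h+1}$ consecutive elements when $h$ is even. The BCH bound then gives $d\ge 2^{m-3h+1}$ for $h$ odd and $d\ge 2^{m-3h+1}+1$ for $h$ even, which already matches the statement in those cases.

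To upgrade the bound to $2^{m-3h+1}+2$ when $h$ is even and $m$ is odd, note that $\mathbb{N}_2(m)=1$ in this case, so $(x-1)\mid\mathbb{M}_s(x)$; equivalently every codeword $c(x)\in\C_s$ satisfies $c(1)=0$, so $\C_s$ consists of even-weight vectors. Since the BCH bound above forces $d\ge 2^{m-3h+1}+1$, an odd number, the minimum nonzero weight must be at least $2^{m-3h+1}+2$. For $h$ odd the BCH value $2^{m-3h+1}$ is already even, so this argument yields nothing extra, consistent with the statement.

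The step I expect to be most delicate is the combinatorial bookkeeping around the consecutive block: one must verify that every $2^{m-h}+k$ with $k$ in the stated range genuinely survives in $B\setminus B_2'$ (in particular, that the lone element removed from $B_2'$ when $m$ is even only enlarges $B\setminus B_2'$ and cannot interrupt the block), that no wrap-around modulo $2^m-1$ occurs, and --- the genuinely awkward point --- that in the extremal case $h=m/4$, where $2^{m-3h+1}$ exceeds $2^h$ and the block obtained from $B\setminus B_2'$ alone is too short, a sufficiently long run is still assembled, for instance by also bringing in the cyclotomic cosets of the odd exponents $2j+1$ occurring in the support of $\mathbb{M}_s(x)$, or by a Hartmann--Tzeng-type sharpening. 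Given Lemma~\ref{lem20}, the remaining ingredients --- the dimension, the BCH estimate and the even-weight argument --- are routine.
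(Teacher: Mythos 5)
Your proposal follows essentially the same route as the paper's (very terse) proof: read the length, dimension and generator polynomial off Lemma \ref{lem20}, locate the block of consecutive exponents $-(2^{m-h}+k)$ coming from $B\setminus B_2^{\prime}$ (resp.\ $B\setminus(B_2^{\prime}\cup\{2^{m-h}\})$ for $h$ odd) inside the defining set, apply the BCH bound, and upgrade by the even-weight argument when $m$ is odd and $h$ is even; the paper compresses all of this into two sentences (``the designed distance exceeds $2^{m-3h+1}-1$ \dots\ similarly for the remaining cases''), so your write-up is in fact more explicit. Your bookkeeping is correct whenever $4h\geq m+1$: the smallest element of $B_2^{\prime}$ is $2^{m-h}+2^{m-3h+1}$, so the block has length $2^{m-3h+1}$ for $h$ even and $2^{m-3h+1}-1$ for $h$ odd, the single element deleted from $B_2^{\prime}$ when $m$ is even can only lengthen it, no wrap-around occurs, and $(x-1)\mid\mathbb{M}_s(x)$ for $m$ odd forces even weights, giving exactly the three stated bounds.

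The point you flag, the extremal case $4h=m$ (possible only for $m\equiv 0\pmod 4$ with $h=m/4$, which Condition (\ref{solvepro21}) does allow), is a genuine gap in your argument --- but it is equally a gap in the paper's own proof, whose assertion about the designed distance is unjustified there: in that case $B_2^{\prime}=\emptyset$ (since $4h-m-1<1$) and the block inside $B$ has only $2^h<2^{m-3h+1}$ elements. Moreover, your suggested repair of splicing in the odd cosets $C_{-(2j+1)}$ cannot work as stated: the exponents adjacent to the block are $-(2^{m-h}-1)$ and $-(2^{m-h}+2^h)$, and neither lies in the defining set, because $2^{m-h}-1=2^{3h}-1$ has $2$-weight $3h$, larger than the weight of any element of $B$ or of any admissible odd $2j+1\leq 2^h-1$, while $2^{m-h}+2^h=2^h(2^{2h}+1)$ lies in the coset of $2^{2h}+1$, whose only odd representative exceeds $2^h-1$ and which meets no coset $C_{i_1}$ with $i_1\in B$. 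So in that corner case one genuinely needs a Hartmann--Tzeng-type argument or must exclude $h=m/4$; note also that the count $|B_2^{\prime}|=4h-m-1-\mathbb{N}_2(m+1)$ underlying Lemma \ref{lem20} becomes negative there, so this corner case is shaky in the source itself. Apart from this shared unresolved case, your proof matches the paper's.
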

\begin{proof}
The desired conclusion on the dimension of $\C_s$ follows from Lemma \ref{lem20}. When $h$ is even, from (\ref{Ms1.1.2}) and the definition of $B_2^{\prime}$, the designed distance $\delta$ of $\C_s$ is greater than $2^{m-3h+1}-1$. According to the BCH bound, we have $d\geq \delta\geq 2^{m-3h+1}$. Similarly, we can prove the remaining cases.
\end{proof}
\begin{example}
Let $(m,h)=(7,2)$ and $\alpha$ be a generator of $\F_{2^m}$ with $\alpha^7+\alpha+1=0$. Then the generator polynomial of the cyclic code $\C_s$ is $\mathbb{M}_s(x)=x^{36}+x^{28}+x^{27}+x^{23}
+x^{21}+x^{20}+x^{18}+x^{13}+x^{12}+x^9+x^7
+x^6+x^5+1$ and $\C_s$ is a $[127,91,8]$ binary cyclic code. It is worth mentioning that this numerical example is derived from \cite[Example 8]{DZ}, but $m$ and $h$ of this example do not satisfy the restrictions on $m$ and $h$ in \cite[Theorem 23]{DZ}. At this point $\deg(\mathbb{M}_s(x))=
\frac{m(2^h+2+(-1)^{h-1})+3}{3}=36$ is just a coincidence. The correct degree of $\mathbb{M}_s(x)$  can be derived from Theorem \ref{thm23}:
$$\deg(\mathbb{M}_s(x))=
\frac{m(2^{h+2}+2)}{3}-(4h-m)m+1=36.$$
\end{example}
\begin{lemma}\label{lem21}
Let $s^{\infty}$ be the sequence of {\rm (\ref{st1})} and let $h$ satisfy $m< 3h$ and $5h<2m+3$.
Then the linear span $\mathbb{L}_s$ of $s^{\infty}$ is given by
\begin{equation*}\label{Ls1}
  \mathbb{L}_s=
  \left\{
    \begin{array}{ll}
      \frac{m(2^{h+2}-8)}{3}-(m-2h)m+1, & \hbox{if $\gcd(6,m)=1$ and $h$ is odd;} \\
      \frac{m(2^{h+2}-4)}{3}-(m-2h)m+1, & \hbox{if $\gcd(6,m)=1$ and $h$ is even;} \\
      \frac{m(2^{h+2}-5)}{3}-(\frac{3m}{2}-2h)m, & \hbox{if $\gcd(6,m)=2$ and $h$ is odd;} \\
      \frac{m(2^{h+2}-1)}{3}-(\frac{3m}{2}-2h)m, & \hbox{if $\gcd(6,m)=2$ and $h$ is even;} \\
      \frac{m(2^{h+2}-8)}{3}-(\frac{4m}{3}-2h)m+1, & \hbox{if $\gcd(6,m)=3$ and $h$ is odd;} \\
      \frac{m(2^{h+2}-4)}{3}-(\frac{4m}{3}-2h)m+1, & \hbox{if $\gcd(6,m)=3$ and $h$ is even;} \\
      \frac{m(2^{h+2}-5)}{3}-(\frac{13m}{6}-2h)m, & \hbox{if $\gcd(6,m)=6$ and $h$ is odd;} \\
      \frac{m(2^{h+2}-1)}{3}-(\frac{13m}{6}-2h)m, & \hbox{if $\gcd(6,m)=6$ and $h$ is even.}
    \end{array}
  \right.
\end{equation*}
Moreover, we have
\begin{equation*}\label{Ms1.1.2}
  \mathbb{M}_s(x)=\left\{
                    \begin{array}{ll}
                      (x-1)^{\mathbb{N}_2(m)}
                      \prod\limits_{i_1\in B\setminus \{B_2^{\prime\prime}
                      \cup\{2^{m-h}\}\}}m_{\alpha^{-i_1}}(x)&~\\
                       \prod\limits_{{\tiny\begin{array}{c}
                      3\leq 2j+1\leq 2^h-1\\
                      \kappa_{2j+1}^{(h)}=1\end{array}
                      }}m_{\alpha^{-2j-1}}(x),& \hbox{if $h$ is odd;} \\
                      (x-1)^{\mathbb{N}_2(m)}
                      \prod\limits_{i_1\in B\setminus B_2^{\prime\prime}
                      }m_{\alpha^{-i_1}}(x)&~\\
                       \prod\limits_{{\tiny\begin{array}{c}
                      3\leq 2j+1\leq 2^h-1\\
                      \kappa_{2j+1}^{(h)}=1\end{array}
                      }}m_{\alpha^{-2j-1}}(x),& \hbox{if $h$ is even.}
                    \end{array}
                  \right.
\end{equation*}
\end{lemma}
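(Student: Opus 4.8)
The plan is to follow the proof of Lemma~\ref{lem20} almost verbatim, the only structural change being that when $m<3h$ the collisions among the exponents coming from $B$ are governed by $B_2^{\prime\prime}$ rather than by $B_2^{\prime}$. Starting from the expansion~(\ref{st1}), the exponents of $x$ split into the $B$-block $\{i+2^{m-h}:0\le i\le 2^h-1\}=B$ and the $A$-block $\{1,\dots,2^h-1\}$. The $A$-block is treated exactly as in \cite{DZ}: grouping the terms $\mathrm{Tr}(x^j)$ by $2$-cyclotomic cosets and using the quantities $\epsilon_a^{(h)},\kappa_a^{(h)}$ together with Lemma~\ref{DZlem9}, after cancellation this block leaves precisely the exponents $2j+1$ with $3\le 2j+1\le 2^h-1$ and $\kappa_{2j+1}^{(h)}=1$, plus the exponent $1$ exactly when $h$ is odd (since $\kappa_1^{(h)}\equiv h\pmod 2$). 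Since the standing hypotheses of the subsection force $2h<m$, each surviving odd exponent is $<2^h<2^{m-h}$, so its $m$-digit $2$-adic expansion has a block of more than $m/2$ consecutive zeros; as every proper divisor of $m$ is at most $m/2$, such an exponent cannot be periodic with a period properly dividing $m$, hence it generates a coset of full size $m$, and a similar argument shows that distinct surviving exponents lie in distinct cosets.

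Next I analyse the $B$-block with Lemmas~\ref{lem15}, \ref{lem16} and~\ref{lem17}. The hypothesis $5h<2m+3$ lets me apply Lemma~\ref{lem15}: the only coincidence between a coset $C_{i+2^{m-h}}$ ($i\in A$) and an odd coset $C_j$ ($j\in A$) is the pair $(i,j)=(0,1)$, i.e.\ $C_{2^{m-h}}=C_1$. Hence, as in Lemma~\ref{lem20}, the term $x^{2^{m-h}}$ cancels the surviving $A$-term $x^1$ when $h$ is odd, while for $h$ even it survives and $x^1$ does not appear. Inside $B$, Lemma~\ref{lem17}(iv)(2) says that two distinct elements of $B$ share a coset precisely when both lie in $B_2^{\prime\prime}$, and the pairing $2^{a}+2^{m-h}\leftrightarrow 2^{m-2h-a}+2^{m-h}$ (multiplication by $2^{m-h-a}$) matches these up two at a time; being in characteristic $2$, each such pair of trace terms cancels, so $B_2^{\prime\prime}$ contributes nothing to $\mathbb{M}_s(x)$. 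Every remaining element of $B\setminus B_2^{\prime\prime}$ (resp.\ of $B\setminus\{B_2^{\prime\prime}\cup\{2^{m-h}\}\}$ when $h$ is odd) occupies a coset disjoint from all the others and, by Lemma~\ref{lem15}, from all the $A$-cosets; by Lemma~\ref{lem16} these cosets have size $m$, the only exceptions being the distinguished weight-$2$ and weight-$3$ elements $2^{m/2-h}+2^{m-h}$ and $2^{m/3-h}+2^{2m/3-h}+2^{m-h}$, whose coset sizes drop to $m/2$ and $m/3$ according to whether $2\mid m$, $3\mid m$ (Lemma~\ref{lem16} is applicable here since $m<3h$ forces $2h\ge\frac m2+1$, and $2h\ge\frac{2m}{3}+1$ when $3\mid m$).

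Combining the two blocks through Lemma~\ref{lem1.0.1} produces the stated form of $\mathbb{M}_s(x)$, and $\mathbb{L}_s=\deg\mathbb{M}_s(x)$ is then obtained by summing degrees: $\mathbb{N}_2(m)$ from $(x-1)^{\mathbb{N}_2(m)}$ (the factor coming from $\mathrm{Tr}(1)$); $\frac{2^h+(-1)^{h-1}}{3}\cdot m$ from the $A$-block, diminished by $m$ when $h$ is odd (removal of $x^1$); and $|B\setminus B_2^{\prime\prime}|$, resp.\ $|B\setminus\{B_2^{\prime\prime}\cup\{2^{m-h}\}\}|$, cosets from the $B$-block, with $m/2$ or $m/3$ replacing $m$ for the distinguished elements depending on $\gcd(6,m)$. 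Substituting $|B|=2^h$ and the cardinalities of $B_2^{\prime\prime}$ recorded before the lemma and simplifying separately in the eight cases $\gcd(6,m)\in\{1,2,3,6\}$ and $h$ even/odd gives the displayed formulas. I expect the genuinely delicate point to be exactly this last bookkeeping: one must track precisely which elements of $B$ fall into small cosets and which into cancelling pairs, and in particular note that $2^{m/2-h}+2^{m-h}$ is the unique fixed point of the pairing $a\mapsto m-2h-a$ and hence is excluded from $B_2^{\prime\prime}$ when $m$ is even, whereas the weight-$3$ element $2^{m/3-h}+2^{2m/3-h}+2^{m-h}$ is never in $B_2^{\prime\prime}$; these exceptional elements are precisely what distinguish the $\gcd(6,m)=1$ formulas from the remaining ones. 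Beyond this careful case analysis and the three combinatorial Lemmas~\ref{lem15}--\ref{lem17}, which are already available, no new idea is needed — once the coset picture is pinned down the computation runs exactly as in Lemma~\ref{lem20}.
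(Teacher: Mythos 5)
You follow the same route as the paper's own proof (the split of (\ref{st1}) into the $A$-block and the $B$-block, Lemmas \ref{lem15}--\ref{lem17}, and $B_2^{\prime\prime}$ playing the role of $B_2^{\prime}$), so everything hinges on the step you yourself call delicate, and that step fails: the bookkeeping you describe does not produce the displayed formulas. With your accounting (every retained coset of size $m$, except $2^{m/2-h}+2^{m-h}$ counted with $m/2$ and $2^{m/3-h}+2^{2m/3-h}+2^{m-h}$ with $m/3$), the case $\gcd(6,m)=2$, $h$ even comes out as $\frac{m(2^{h+2}-1)}{3}+2hm-m^2-\frac{m}{2}$, whereas the lemma asserts $\frac{m(2^{h+2}-1)}{3}+2hm-\frac{3m^2}{2}$; the discrepancy $\frac{m(m-1)}{2}$ is of order $m^2$, and the analogous mismatch occurs in all six cases with $\gcd(6,m)\neq 1$. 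Your correction $2m/3$ for the weight-$3$ element also deviates from the paper's proof, which subtracts only $m/3$ at the corresponding point; but no choice of $m/2$, $m/3$ or $2m/3$ corrections can generate the quadratic terms $\frac{3m^2}{2}$, $\frac{4m^2}{3}$, $\frac{13m^2}{6}$ appearing in the stated $\mathbb{L}_s$, so the assertion that ``simplifying gives the displayed formulas'' cannot be substantiated by the computation you outline.

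Two further points undercut the derivation itself. First, your own cancellation principle (pairs of equal trace terms vanish in characteristic $2$) applies to $2^{m/2-h}+2^{m-h}$: its coset has size $m/2$, so the absolute trace hits each exponent of that coset exactly twice and ${\rm{Tr}}\bigl(x^{2^{m/2-h}+2^{m-h}}\bigr)$ is identically zero; it therefore cannot contribute a degree-$m/2$ factor to $\mathbb{M}_s(x)$ as you (following the paper) count it. Second, the disjointness you import from Lemma \ref{lem15} is not actually available in the range $m<3h$: for $(m,h)=(10,4)$, which satisfies $m<3h$, $5h<2m+3$ and $h\le\frac{m-2}{2}$, one has $8\in A$ and $8+2^{m-h}=72=2^3\cdot 9$, so $C_{72}=C_9$ with $9$ an odd element of $A$ and $\kappa_9^{(4)}=1$; the two trace terms cancel, so this coset should be absent from $\mathbb{M}_s(x)$, while your argument (and the stated formula) would retain the factor $m_{\alpha^{-9}}(x)=m_{\alpha^{-72}}(x)$, indeed twice. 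So beyond the unfinished arithmetic, the coset picture taken over from Lemmas \ref{lem13}--\ref{lem17} needs repair before either $\mathbb{M}_s(x)$ or $\mathbb{L}_s$ can be pinned down in this parameter range.
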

\begin{proof}
The proof of this lemma is similar to the one given in Lemma \ref{lem20}.
According to Lemma \ref{lem15}, if $h$ satisfies $m< 3h$ and $5h<2m+3$, then we have $C_{i+2^{m-h}}\cap C_j
=\emptyset$ when $i+2^{m-h}\in B\setminus\{0\}$ and odd $j\in A$.
From the definition of $B$, we have
$$B=\{2^{m-h}\}\cup B_2^{\prime\prime}\cup\left\{B\setminus\{B_2^{\prime\prime}\cup
\{2^{m-h}\}\}\right\}.$$
Combining Eq. (\ref{st1}), Lemma \ref{DZlem9} and Lemma \ref{192021}, we deduce that the minimal polynomial of $s^{\infty}$ is
\begin{equation*}\label{Ms1.1.2}
  \mathbb{M}_s(x)=\left\{
                    \begin{array}{ll}
                      (x-1)^{\mathbb{N}_2(m)}
                      \prod\limits_{i_1\in B\setminus \{B_2^{\prime\prime}
                      \cup\{2^{m-h}\}\}}m_{\alpha^{-i_1}}(x)&~\\
                       \prod\limits_{{\tiny\begin{array}{c}
                      3\leq 2j+1\leq 2^h-1\\
                      \kappa_{2j+1}^{(h)}=1\end{array}
                      }}m_{\alpha^{-2j-1}}(x),& \hbox{if $h$ is odd;} \\
                      (x-1)^{\mathbb{N}_2(m)}
                      \prod\limits_{i_1\in B\setminus B_2^{\prime\prime}
                      }m_{\alpha^{-i_1}}(x)&~\\
                       \prod\limits_{{\tiny\begin{array}{c}
                      3\leq 2j+1\leq 2^h-1\\
                      \kappa_{2j+1}^{(h)}=1\end{array}
                      }}m_{\alpha^{-2j-1}}(x),& \hbox{if $h$ is even.}
                    \end{array}
                  \right.
\end{equation*}
If $\gcd(6,m)=1$, then $\mathbb{N}_2(m)=1$ and
$|C_{i+2^{m-h}}|=m$ for any $i\in A$. If $h$ is odd, then the linear span $\mathbb{L}_s$ of $S^{\infty}$ equals
\begin{eqnarray}\label{ls1.6}
  \nonumber\mathbb{L}_s &=& \mathbb{N}_2(m)+\left(\frac{2^h+(-1)^{h-1}}{3}-1+\left|B\setminus \{B_2^{\prime\prime}\cup\{2^{m-h}\}\}\right|\right)\cdot m\\
             \nonumber~&=&          \mathbb{N}_2(m)+\left(\frac{2^h+(-1)^{h-1}}{3}-1+2^h-1-(m-2h+1-\mathbb{N}_2(m+1))\right)\cdot m \\
  ~ &=&\frac{m(2^{h+2}-8)}{3}-(m-2h)m+1.
\end{eqnarray}
If $h$ is even, then the linear span $\mathbb{L}_s$ of $S^{\infty}$ equals
\begin{eqnarray}\label{ls1.7}
  \nonumber\mathbb{L}_s &=& \mathbb{N}_2(m)+\left(\frac{2^h+(-1)^{h-1}}{3}+\left|B\setminus B_2^{\prime\prime}\right|\right)\cdot m\\
             \nonumber~&=&          \mathbb{N}_2(m)+\left(\frac{2^h+(-1)^{h-1}}{3}+2^h-(m-2h+1-\mathbb{N}_2(m+1))\right)\cdot m \\
  ~ &=& \frac{m(2^{h+2}-4)}{3}-(m-2h)m+1.
\end{eqnarray}
If $\gcd(6,m)=2$, then we get that $\mathbb{N}_2(m)=0$, $\left|C_{2^{{\frac{m}{2}}-h}+2^{m-h}}\right|=\frac{m}{2}$ and $C_{2^{{\frac{m}{2}}-h}+2^{m-h}}\notin
B_2^{\prime\prime}$. If $h$ is odd, then
\begin{small}\begin{eqnarray}\label{ls1.8}
  \nonumber\mathbb{L}_s &=& \mathbb{N}_2(m)+\left(\frac{2^h+(-1)^{h-1}}{3}-1+\left|B\setminus \{B_2^{\prime\prime}\cup\{2^{m-h}\}\}\right|\right)\cdot m-\frac{m}{2}\\
             \nonumber~&=&          \mathbb{N}_2(m)+\left(\frac{2^h+(-1)^{h-1}}{3}-1+2^h-1-(m-2h+1-\mathbb{N}_2(m+1))-\frac{1}{2}\right)\cdot m \\
  ~ &=&\frac{m(2^{h+2}-5)}{3}-(\frac{3m}{2}-2h)m.
\end{eqnarray}
\end{small}
If $h$ is even, then
\begin{eqnarray}\label{ls1.9}
  \nonumber\mathbb{L}_s &=& \mathbb{N}_2(m)+\left(\frac{2^h+(-1)^{h-1}}{3}+\left|B\setminus B_2^{\prime\prime}\right|\right)\cdot m-\frac{m}{2}\\
  \nonumber~&=&          \mathbb{N}_2(m)+\left(\frac{2^h+(-1)^{h-1}}{3}+2^h-(m-2h+1-\mathbb{N}_2(m+1))\right)\cdot m
  -\frac{m}{2} \\
  ~ &=& \frac{m(2^{h+2}-1)}{3}-(\frac{3m}{2}-2h)m.
\end{eqnarray}
If $\gcd(6,m)=3$, then $\mathbb{N}_2(m)=1$ and $\left|C_{{2^{\frac{m}{3}}-h}+2^{{\frac{2m}{3}}-h}+2^{m-h}}\right|=\frac{m}{3}$. If $h$ is odd, then
\begin{small}
\begin{eqnarray}\label{ls1.10}
  \nonumber\mathbb{L}_s &=& \mathbb{N}_2(m)+\left(\frac{2^h+(-1)^{h-1}}{3}-1+\left|B\setminus \{B_2^{\prime\prime}\cup\{2^{m-h}\}\}\right|\right)\cdot m-\frac{m}{3}\\
             \nonumber~&=&          \mathbb{N}_2(m)+\left(\frac{2^h+(-1)^{h-1}}{3}-1+2^h-1-(m-2h+1-\mathbb{N}_2(m+1))\right)\cdot m-\frac{m}{3} \\
  ~ &=&\frac{m(2^{h+2}-8)}{3}-(\frac{4m}{3}-2h)m+1.
\end{eqnarray}
\end{small}
If $h$ is even, then
\begin{eqnarray}\label{ls1.11}
  \nonumber\mathbb{L}_s &=& \mathbb{N}_2(m)+\left(\frac{2^h+(-1)^{h-1}}{3}+\left|B\setminus B_2^{\prime\prime}\right|\right)\cdot m-\frac{m}{3}\\
  \nonumber~&=&          \mathbb{N}_2(m)+\left(\frac{2^h+(-1)^{h-1}}{3}+2^h-(m-2h+1-\mathbb{N}_2(m+1))\right)\cdot m-\frac{m}{3} \\
  ~ &=& \frac{m(2^{h+2}-4)}{3}-(\frac{4m}{3}-2h)m+1.
\end{eqnarray}
If $\gcd(6,m)=6$, then $\mathbb{N}_2(m)=0$, $\left|C_{2^{{\frac{m}{2}}-h}+2^{m-h}}\right|=\frac{m}{2}$,  $\left|C_{{2^{{\frac{m}{3}}-h}}+2^{{\frac{2m}{3}}-h}+2^{m-h}}\right|=\frac{m}{3}$ and $C_{2^{{\frac{m}{2}}-h}+2^{m-h}}\notin B_2^{\prime\prime}$. If $h$ is odd, then
\begin{small}
	\begin{eqnarray}\label{ls1.12}
  \nonumber\mathbb{L}_s &=& \mathbb{N}_2(m)+\left(\frac{2^h+(-1)^{h-1}}{3}-1+\left|B\setminus \{B_2^{\prime\prime}\cup\{2^{m-h}\}\}\right|\right)\cdot m-\frac{m}{2}-\frac{m}{3}\\
             \nonumber~&=&          \mathbb{N}_2(m)+\left(\frac{2^h+(-1)^{h-1}}{3}-1+2^h-1-(m-2h+1-\mathbb{N}_2(m+1))\right)\cdot m-\frac{7m}{6} \\
  ~ &=&\frac{m(2^{h+2}-5)}{3}-(\frac{13m}{6}-2h)m.
\end{eqnarray}
\end{small}
If $h$ is even, then
\begin{eqnarray}\label{ls1.13}
  \nonumber\mathbb{L}_s &=& \mathbb{N}_2(m)+\left(\frac{2^h+(-1)^{h-1}}{3}+\left|B\setminus B_2^{\prime\prime}\right|\right)\cdot m-\frac{m}{2}-\frac{m}{3}\\
  \nonumber~&=&          \mathbb{N}_2(m)+\left(\frac{2^h+(-1)^{h-1}}{3}+2^h-(m-2h+1-\mathbb{N}_2(m+1))\right)\cdot m-\frac{7m}{6} \\
  ~ &=& \frac{m(2^{h+2}-1)}{3}-(\frac{13m}{6}-2h)m.
\end{eqnarray}
Combining Eqs. (\ref{ls1.6}), (\ref{ls1.7}), (\ref{ls1.8}), (\ref{ls1.9}), (\ref{ls1.10}),
(\ref{ls1.11}),
(\ref{ls1.12}),
 and (\ref{ls1.13}), we obtain the desired results.
\end{proof}
The next theorem also provides interesting information on the cyclic code $\C_s$ when $h$ satisfies
$m< 3h$ and $5h<2m+3$.
\begin{theorem}\label{thm24}
Let $h$ satisfy
$m< 3h$ and $5h<2m+3$.
The binary code $\C_s$ defined by the sequence of {\rm{(\ref{st1})}} has parameters $[2^m-1,2^m-1-\mathbb{L}_s,d]$ and generator polynomial $\mathbb{M}_s(x)$, where $\mathbb{L}_s$ and $\mathbb{M}_s(x)$ are given in Lemma \ref{lem21} and the minimum  Hamming distance $d$  is  bounded as follows
\begin{equation*}
  d\geq 2^h-2^{m-2h}.
\end{equation*}
\end{theorem}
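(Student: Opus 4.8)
The plan is to read off the dimension from Lemma \ref{lem21} and to obtain the distance bound by exhibiting a long run of consecutive integers in the defining set of $\C_s$ and invoking the BCH bound, exactly in the spirit of the proof of Theorem \ref{thm23}. Since Lemma \ref{lem21} identifies $\mathbb{M}_s(x)$ as the minimal polynomial of $s^{\infty}$ and $\mathbb{L}_s$ as its linear span, hence its degree, the generic construction immediately gives that $\C_s$ has generator polynomial $\mathbb{M}_s(x)$ and parameters $[2^m-1,\ 2^m-1-\mathbb{L}_s,\ d]$; only the lower bound on $d$ requires work.

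For that bound, recall that $B=\{2^{m-h}+i:0\le i\le 2^h-1\}$ is a block of $2^h$ consecutive integers and that, by Lemma \ref{lem21}, the roots of $\mathbb{M}_s(x)$ include $\alpha^{-i_1}$ for every $i_1\in B\setminus\big(B_2^{\prime\prime}\cup\{2^{m-h}\}\big)$ (removing the larger exceptional set handles both parities of $h$ at once, since the even case only keeps more roots). Because $B_2^{\prime\prime}\subseteq\{2^{m-h}+2^{\ell}:0\le \ell\le m-2h\}$, every integer deleted from $B$ has offset lying in $\{0\}\cup\{2^{\ell}:0\le\ell\le m-2h\}$, so its offset is at most $2^{m-2h}$. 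Hence the sub-block
$$R=\big\{\,2^{m-h}+i:\ 2^{m-2h}+1\le i\le 2^h-1\,\big\}$$
lies entirely inside the root-index set; it is a set of consecutive integers, and the hypothesis $m<3h$ makes it nonempty with $|R|=2^h-2^{m-2h}-1$. Negating these indices modulo $n=2^m-1$ shows the defining set of $\C_s$ contains $|R|$ consecutive integers, so the BCH bound (see, e.g., \cite{HufPless-2003,KT69}) yields $d\ge |R|+1=2^h-2^{m-2h}$.

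The step to be careful about is the bookkeeping verifying that $R$ misses every hole deleted from $B$ in Lemma \ref{lem21}: one has to check that each deleted offset is $0$ or a power of two not exceeding $2^{m-2h}$, and, when $m$ is even, that the element $2^{m/2-h}+2^{m-h}$, which is \emph{retained} rather than deleted, nonetheless has offset $2^{m/2-h}\le 2^{m-2h}$ (since $h\le(m-2)/2$) and so also lies outside $R$. Once this is settled the argument is uniform and no parity case split is needed, which is why a single bound $d\ge 2^h-2^{m-2h}$ suffices. Note that the further roots coming from the odd exponents $2j+1$ and from the factor $(x-1)^{\mathbb{N}_2(m)}$ only enlarge the defining set and thus cannot weaken the bound, while the hypothesis $5h<2m+3$ is used only indirectly, namely to make Lemma \ref{lem21} applicable and hence to fix the shape of $\mathbb{M}_s(x)$.
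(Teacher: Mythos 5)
Your proposal is correct and follows essentially the same route as the paper: the paper's proof likewise reads the dimension and generator polynomial off the lemma and exhibits exactly the run $\{2^{m-h}+2^{m-2h}+1,\ldots,2^{m-h}+2^h-1\}$ inside the defining set, then applies the BCH bound to get $d\geq 2^h-2^{m-2h}$. Your additional bookkeeping (that every deleted offset is $0$ or a power of two at most $2^{m-2h}$, uniformly in the parity of $h$ and $m$) is precisely the verification the paper leaves as ``not difficult to verify.''
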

\begin{proof}
The desired conclusions on the dimension of $\C_s$ follow from Lemma \ref{lem20}. It is not difficult to verify that $\{1+2^{m-2h}+2^{m-h},2+2^{m-2h}+2^{m-h},\ldots, 2^h-1+2^{m-h}\}$ is a subset of the defining set of $\C_s$, then the minimum distance $d\geq 2^h-2^{m-2h},$  according to the BCH bound.
\end{proof}
\begin{example}
Let $(m,h)=(5,2)$ and $\alpha$ be a generator of $\F_{2^m}$ with $\alpha^5+\alpha^2+1=0$. Then the generator polynomial of the code $\C_s$ is $\mathbb{M}_s(x)=x^{16}+x^{14}+x^{10}+x^9
+x^8+x^7
+x^5+x^4+x^3+x^2+x+1
$ and $\C_s$ is a $[31,15,8]$ binary cyclic code. Its dual $\C_s^{\perp}$ is a $[31,16,7]$ cyclic code. Both $\C_s$ and $\C_s^{\perp}$ are optimal according to the Database. It is worth mentioning that this numerical example is derived from \cite[Example 7]{DZ}.
\end{example}
\begin{remark}
If $h$ satisfies
$$2m+3\leq 5h\leq
  \left\{
    \begin{array}{ll}
      1\leq h\leq\frac{m-1}{2}, & \hbox{if $m\equiv1 ({\rm{mod~}}4)$,} \\
      1\leq h\leq\frac{m-2}{2}, & \hbox{if $m\equiv2 ({\rm{mod~}}4)$,} \\
      1\leq h\leq\frac{m-3}{2}, & \hbox{if $m\equiv3 ({\rm{mod~}}4)$,} \\
      1\leq h\leq\frac{m-4}{2}, & \hbox{if $m\equiv0 ({\rm{mod~}}4)$,}
    \end{array}
  \right.
$$
then there exist $i\in B$ and odd $j \in A\setminus\{1\}$ such that $C_{i+2^{m-h}}\cap C_j\neq \emptyset$. In this case, the linear span of $\mathbb{L}_s$ hard to determine. The readers are cordially invited to find a strategy to compute it.
\end{remark}
\subsection{Binary cyclic codes from the Bracken-Leander function $F_8(x)$}\label{Bracken-Leander}
Recall that the binary Bracken-Leander function $F_8(x)=x^{2^{2h}+2^{h}+1}$ has $\delta_{F_8}=4$ if  $m=4h$ and $h$ is odd. Let us observe that
\begin{equation*}
  {\rm{Tr}}(F_8(x+1)) = {\rm{Tr}}\left((x+1)^{2^{2h}+2^h+1}\right) = {\rm{Tr}}\left(x+x^
  {1+2^{2h}}+x^{1+2^h+2^{2h}}\right).
\end{equation*}
The sequence $s^{\infty}$ of (\ref{infty}) defined by the monomial $F_8(x)$ is then given by
\begin{equation}\label{st36}
  s_t={\rm{Tr}}\left(\alpha^t+(\alpha^t)^
  {1+2^{2h}}+(\alpha^t)^{1+2^h+2^{2h}}
  \right)
\end{equation} for all $t\geq 0$.
\begin{lemma}\label{lem27}
Let $s^{\infty}$ be the sequence of {\rm{(\ref{st36})}}. Then the linear span $\mathbb{L}_s$ of $s^{\infty}$ is $\frac{5m}{2}$. Moreover, we have
$$\mathbb{M}_s(x)=m_{\alpha^{-1}}(x)m_{\alpha^{-1-2^{2h}}}(x)m_{\alpha^{-1-2^{h}-2^{2h}}}(x).$$
\end{lemma}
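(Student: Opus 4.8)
The plan is to read the minimal polynomial off directly from the expansion of the defining sequence, i.e.\ to apply Lemma~\ref{lem1.0.1}. Recall that here $m=4h$ with $h$ odd, so $m$ is even, ${\rm{Tr}}(1)=0$, and (\ref{st36}) carries no constant term. First I would expand the absolute trace termwise via ${\rm{Tr}}(y)=\sum_{k=0}^{m-1}y^{2^k}$: substituting $y=(\alpha^{t})^{e}$ for each exponent $e\in\{1,\ 1+2^{2h},\ 1+2^h+2^{2h}\}$ gives
\[
s_t=\sum_{k=0}^{m-1}\big(\alpha^{2^k}\big)^{t}+\sum_{k=0}^{m-1}\big(\alpha^{(1+2^{2h})2^k}\big)^{t}+\sum_{k=0}^{m-1}\big(\alpha^{(1+2^h+2^{2h})2^k}\big)^{t},
\]
which is exactly of the form $s_t=\sum_i c_i\alpha^{it}$ required by Lemma~\ref{lem1.0.1}, with the exponents confined to the three $2$-cyclotomic cosets $C_1$, $C_{1+2^{2h}}$ and $C_{1+2^h+2^{2h}}$ modulo $n=2^m-1$.

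The next step is to identify these cosets and check that they are pairwise disjoint. The coset $C_1$ has size $m$. For $C_{1+2^{2h}}$ I would use $2^{4h}-1=(2^{2h}-1)(2^{2h}+1)$ together with $\gcd(2^{2h}-1,2^{2h}+1)=1$: the size of $C_{1+2^{2h}}$ is the least $\ell>0$ with $(2^{2h}+1)(2^{\ell}-1)\equiv0\pmod{2^{4h}-1}$, hence with $(2^{2h}-1)\mid(2^{\ell}-1)$, i.e.\ $\ell=2h=\frac{m}{2}$. For $C_{1+2^h+2^{2h}}$ I would pass to binary expansions on $m=4h$ bits: the exponent has its ones in positions $0,h,2h$, and a short computation using $m=4h$ shows that the only cyclic shift of bit-positions fixing $\{0,h,2h\}$ modulo $4h$ is the trivial one, so $|C_{1+2^h+2^{2h}}|=m$. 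Finally, the three cosets are disjoint because a $2$-cyclotomic coset consists of integers of a fixed Hamming weight, while the leaders $1$, $1+2^{2h}$, $1+2^h+2^{2h}$ have weights $1$, $2$, $3$ respectively. Putting this together, $I=C_1\cup C_{1+2^{2h}}\cup C_{1+2^h+2^{2h}}$, so Lemma~\ref{lem1.0.1} yields
\[
\mathbb{M}_s(x)=\prod_{i\in I}(1-\alpha^i x)=m_{\alpha^{-1}}(x)\,m_{\alpha^{-1-2^{2h}}}(x)\,m_{\alpha^{-1-2^h-2^{2h}}}(x),
\]
and $\mathbb{L}_s=|I|=m+\frac{m}{2}+m=\frac{5m}{2}$.

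The step I expect to be the real obstacle is not the coset bookkeeping but verifying that every exponent in each of the three cosets occurs with a \emph{nonzero} coefficient $c_i$ after the termwise expansion above, equivalently that the multiplicity with which it appears in the triple sum is odd. For $C_1$ and $C_{1+2^h+2^{2h}}$ this is immediate, since those cosets have full size $m$ and hence the $m$ Frobenius conjugates in the corresponding trace sum are distinct, so each exponent is hit exactly once. For $C_{1+2^{2h}}$, however, the $m=4h$ conjugates of $\alpha^{1+2^{2h}}$ range over a coset of size only $\frac{m}{2}$, so one must examine carefully how these conjugates coincide and confirm that the surviving coefficient on that coset is indeed nonzero; I would isolate this multiplicity analysis as a separate claim, as it is the one place where a careless count could derail the argument.
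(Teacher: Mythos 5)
Your coset bookkeeping is fine and matches the paper's proof ($|C_1|=|C_{1+2^h+2^{2h}}|=m$, $|C_{1+2^{2h}}|=\tfrac m2$, pairwise disjoint by the Hamming-weight argument), but the claim you deliberately left unproved at the end is exactly where the argument breaks -- and it cannot be repaired, because the needed nonvanishing is false. In the termwise expansion of ${\rm Tr}\big(x^{1+2^{2h}}\big)$ the exponents $(1+2^{2h})2^k$, $k=0,\dots,m-1$, run over the coset $C_{1+2^{2h}}$ of size $\tfrac m2$, so each element of that coset is hit exactly $m/(m/2)=2$ times, and over $\F_2$ all of these coefficients cancel. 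Equivalently, $x^{1+2^{2h}}=x\cdot x^{2^{2h}}$ is the norm of $x$ onto the index-two subfield $\F_{2^{2h}}$, and the absolute trace of any element $y$ of that subfield is ${\rm Tr}_{\F_{2^{2h}}/\F_2}\big(y+y^{2^{2h}}\big)={\rm Tr}_{\F_{2^{2h}}/\F_2}(2y)=0$; hence ${\rm Tr}\big((\alpha^t)^{1+2^{2h}}\big)$ vanishes identically. Consequently the index set of Lemma \ref{lem1.0.1} is $I=C_1\cup C_{1+2^h+2^{2h}}$, giving $\mathbb{L}_s=2m$ and $\mathbb{M}_s(x)=m_{\alpha^{-1}}(x)\,m_{\alpha^{-1-2^h-2^{2h}}}(x)$, not the statement of Lemma \ref{lem27}. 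So the lemma itself is incorrect, and the paper's own proof makes precisely the oversight you flagged: it adds up the three coset sizes without checking whether the corresponding coefficients $c_i$ survive in characteristic $2$.

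A hand check at $m=4$, $h=1$ (so $F_8(x)=x^7$, $\alpha^4=\alpha+1$) confirms this: the sequence $s_t={\rm Tr}\big((\alpha^t+1)^7\big)$ is $011010011001011$, which satisfies the order-$8$ recursion with characteristic polynomial $m_{\alpha^{-1}}(x)\,m_{\alpha^{-7}}(x)=(x^4+x^3+1)(x^4+x+1)$, so its linear span is at most $8=2m<10=\tfrac{5m}{2}$, and the factor $m_{\alpha^{-1-2^{2h}}}(x)=m_{\alpha^{-5}}(x)=x^2+x+1$ does not occur in the minimal polynomial (the paper's accompanying $[15,5,3]$ example inherits the same error). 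In short: your proposal is incomplete at the decisive multiplicity claim, your suspicion that this is the dangerous step was exactly right, and once that step is carried out honestly it refutes the stated lemma rather than proving it; the correct conclusion is $\mathbb{L}_s=2m$ with $\mathbb{M}_s(x)=m_{\alpha^{-1}}(x)\,m_{\alpha^{-1-2^h-2^{2h}}}(x)$, which would also force the parameters in Theorem \ref{thm24} for $F_8$ to be revised to dimension $2^m-1-2m$.
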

\begin{proof}
It is easy to check that for any $i,j \in \{1,1+2^{2h},1+2^h+2^{2h}\}$, $C_i\neq C_j$ if $i\neq j$. Since
$$C_{1+2^{2h}}=\{1+2^{2h},2+2^{2h+1},\cdots,2^{2h-1}+2^{4h-1}\}$$
and $\gcd(1+2^h+2^{2h},2^{4h}-1)=1$, then $|C_1|=|C_{1+2^h+2^{2h}}|=m$ and $|C_{1+2^h}|=\frac{m}{2}.$ Then the desired results follow from (\ref{st36}).
\end{proof}
The following theorem also provides interesting information on the cyclic code $\C_s$ from the Bracken-Leander function $F_8$.
\begin{theorem}\label{thm24}
The binary code $\C_s$ defined by the sequence of {\rm{(\ref{st36})}} has parameters $[2^m-1,2^m-1-\frac{5m}{2},3]$.
\end{theorem}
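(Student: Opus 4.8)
The plan is to establish the dimension and then bound the minimum distance from below and above. The dimension is immediate from Lemma~\ref{lem27}: the generator polynomial $\mathbb{M}_s(x)$ has degree equal to the linear span $\mathbb{L}_s=\frac{5m}{2}$, so $\dim\C_s=(2^m-1)-\frac{5m}{2}$ (in particular $m$ is even, consistent with $m=4h$). It then remains to prove that the minimum distance equals $3$.

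For $d\ge 3$ I would argue directly. A weight-$1$ codeword would be of the form $x^a$, a unit of $\F_2[x]/(x^n-1)$, which would force $\mathbb{M}_s(x)$ to be a unit, impossible since $\mathbb{M}_s(x)$ is a nontrivial divisor of $x^n-1$. A weight-$2$ codeword $x^a+x^b$ with $a\ne b$ would have to vanish at the root $\alpha^{-1}$ of $\mathbb{M}_s(x)$, yielding $\alpha^{-a}=\alpha^{-b}$ and hence $a\equiv b\pmod n$, a contradiction. Equivalently, $\C_s=\langle\mathbb{M}_s(x)\rangle\subseteq\langle m_{\alpha^{-1}}(x)\rangle$, and since $\alpha^{-1}$ is a primitive element the latter is equivalent to the binary Hamming code $\mathcal{H}_m$, so $d(\C_s)\ge 3$; one could also invoke the BCH bound, as $\alpha^{-1}$ and $\alpha^{-2}$ are both zeros of $\C_s$.

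For $d\le 3$ I would exhibit an explicit weight-$3$ codeword using the cube roots of unity. Since $m=4h$ is even, $\F_4\subseteq\F_{2^m}$; let $\omega\in\F_4$ be a primitive cube root of unity, so $1+\omega+\omega^2=0$, and choose $a,b$ with $\alpha^{-a}=\omega$ and $\alpha^{-b}=\omega^2$. Put $c(x)=1+x^a+x^b$; since $1,\omega,\omega^2$ are pairwise distinct this is a word of Hamming weight exactly $3$. To check $c(x)\in\C_s$ it suffices, because $c$ has coefficients in $\F_2$ and $\mathbb{M}_s(x)$ is separable, to verify that $c$ vanishes at the three Frobenius-orbit representatives $\alpha^{-1}$, $\alpha^{-(1+2^{2h})}$, $\alpha^{-(1+2^{h}+2^{2h})}$ of the roots of $\mathbb{M}_s(x)$. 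Each such evaluation has the shape $1+\omega^{e}+\omega^{2e}$, where $e$ is the corresponding exponent read modulo $3$; using $\omega^3=1$, $2^{2h}\equiv 1\pmod 3$, and, crucially, $2^h\equiv 2\pmod 3$ because $h$ is odd, the three exponents $1$, $1+2^{2h}$, $1+2^{h}+2^{2h}$ reduce to $1,2,1$ modulo $3$, so in every case $e\not\equiv 0\pmod 3$ and $1+\omega^{e}+\omega^{2e}=0$. Hence $c(x)\in\C_s$ and $d(\C_s)\le 3$, which together with the lower bound gives $d=3$ and the claimed parameters $[2^m-1,\,2^m-1-\frac{5m}{2},\,3]$.

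The only delicate part is the upper bound, and its content is the single observation that the three monomial exponents occurring in the Bracken-Leander sequence all reduce to $\pm1$ modulo $3$ once $h$ is odd, which is precisely what makes the triple $\{1,\omega,\omega^2\}\subseteq\F_4$ the support values of a codeword; were $h$ even, the exponent $1+2^{h}+2^{2h}$ would be divisible by $3$ (and indeed $\gcd(1+2^{h}+2^{2h},2^m-1)=3\ne 1$), so this construction, and the very setting of Lemma~\ref{lem27}, would no longer apply.
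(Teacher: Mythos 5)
Your proposal is correct and follows essentially the same route as the paper: dimension from Lemma~\ref{lem27}, the lower bound $d\ge 3$ via the BCH bound (equivalently containment in the Hamming code through the zero $\alpha^{-1}$), and the upper bound via a weight-$3$ codeword supported on the cube roots of unity $\{1,\omega,\omega^2\}\subseteq\F_4$, justified by $\gcd(3,1+2^{2h})=\gcd(3,1+2^h+2^{2h})=1$ for odd $h$. The paper's explicit word at positions $0,\tfrac{2^m-1}{3},\tfrac{2(2^m-1)}{3}$ is exactly your $c(x)$, so no further comparison is needed.
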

\begin{proof}
The dimension of $\C_s$ follows from Lemma \ref{lem27}. According to the BCH bound, we have $d(\C_s)\geq 3$. Since $4|m$, we have $3|2^m-1$. There exist $\alpha^{\frac{2^m-1}{3}},
\alpha^{\frac{2(2^m-1)}{3}}$ and
$1$ such that
\begin{equation*}
  \left\{
     \begin{array}{ll}
       \alpha^{\frac{2^m-1}{3}}+
\alpha^{\frac{2(2^m-1)}{3}}+1 = 0, \\
       (\alpha^{\frac{2^m-1}{3}})^{1+2^{2h}}+
(\alpha^{\frac{2(2^m-1)}{3}})^{1+2^{2h}}+1 = 0, \\
       (\alpha^{\frac{2^m-1}{3}})^{1+2^h+2^{2h}}+
(\alpha^{\frac{2(2^m-1)}{3}})^{1+2^h+2^{2h}}+1 = 0.
     \end{array}
   \right.
\end{equation*}
since $\gcd(3,2^{2h}+1)=1$ and $\gcd(3,2^{2h}+2^h+1)=1$ when $h$ is odd. Therefore, the minimum  Hamming distance of $\C_s$ equals $3$.
\end{proof}
\begin{example}
Let $m=4$ and $\alpha$ be a generator of $\F_{2^m}$ with $\alpha^4+\alpha+1=0$. Then the generator polynomial of the code $\C_s$ is given by $\mathbb{M}_s(x)=x^{10}+x^{5}+1
$ and $\C_s$ is a $[15,5,3]$ binary cyclic code.
\end{example}
 In \cite{DSIAM}, Ding considered $q$-ary cyclic codes from the monomial $f(x)=x^{\frac{q^h-1}{q-1}}$.  When $h=3$,  he gave the following result.
\begin{lemma}\label{lem29}\cite[Corollary 5.15]{DSIAM}
Let $m$ be a positive integer with
 $$3\leq \left\{
         \begin{array}{ll}
           \frac{{m}-1}{2} & \hbox{if $m$ is odd,} \\
           \frac{{m}}{2} & \hbox{if $m$ is even.}
         \end{array}
       \right.
$$
 Let $s^{\infty}$ be the sequence of {\rm{(\ref{infty})}}, where $f(x)=x^{q^2+q+1}$ over $\F_q$. Then the code $\C_s$ has parameters $[n,n-\mathbb{L}_s,d]$ and generator polynomial $\mathbb{M}_s(x)$ given by
$$
\mathbb{M}_s(x)=(x-1)^{\mathbb{N}_p(m)}m_{\alpha^{-1}}(x)m_{\alpha^{-1-q}}(x)m_{\alpha^{-1-q^2}}(x)
m_{\alpha^{-1-q-q^2}}(x)
$$
if $p\neq 3$, and
$$
\mathbb{M}_s(x)=(x-1)^{\mathbb{N}_p(m)}m_{\alpha^{-1-q}}(x)m_{\alpha^{-1-q^2}}(x)
m_{\alpha^{-1-q-q^2}}(x)
$$
if $p= 3$,
where $$\mathbb{L}_s=\left\{
                       \begin{array}{ll}
                         4m+\mathbb{N}_p(m) & \hbox{if $p \neq 3$,} \\
                         3m+\mathbb{N}_p(m) & \hbox{if $p =3$.}
                       \end{array}
                     \right.
$$
In addition,
$$
\left\{
  \begin{array}{ll}
    3\leq d\leq 8 & \hbox{if $p=3$ and $\mathbb{N}_p(m)=1$,} \\
    3\leq d\leq 6 & \hbox{if $p=3$ and $\mathbb{N}_p(m)=0$,} \\
    3\leq d\leq 8 & \hbox{if $p>3$.}
  \end{array}
\right.
$$
\end{lemma}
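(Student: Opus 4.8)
The plan is to compute $\mathbb{L}_s$ and $\mathbb{M}_s(x)$ from a short trace expansion of the defining sequence and then bound $d$. First I would expand $(x+1)^{q^2+q+1}=(x+1)(x^q+1)(x^{q^2}+1)$ in $\F_{q^m}[x]$ (valid since $q$ is a power of $p$), obtaining the eight monomials $x^{1+q+q^2},x^{q+q^2},x^{1+q^2},x^{q^2},x^{1+q},x^q,x,1$. Applying ${\rm{Tr}}_{q^m/q}$ and using its invariance under $y\mapsto y^q$, the monomials $x^q,x^{q^2}$ collapse onto $x$ and $x^{q+q^2}$ onto $x^{1+q}$, so that with $x=\alpha^t$
\begin{equation*}
s_t={\rm{Tr}}_{q^m/q}\big(\alpha^{(1+q+q^2)t}\big)+2\,{\rm{Tr}}_{q^m/q}\big(\alpha^{(1+q)t}\big)+{\rm{Tr}}_{q^m/q}\big(\alpha^{(1+q^2)t}\big)+3\,{\rm{Tr}}_{q^m/q}(\alpha^t)+{\rm{Tr}}_{q^m/q}(1),
\end{equation*}
the integer coefficients read modulo $p$ and ${\rm{Tr}}_{q^m/q}(1)=m\bmod p$.

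Next comes the cyclotomic-coset bookkeeping. The standing hypothesis on $h=3$ forces $m\ge 6$. Since multiplication by $q$ modulo $n$ acts as a cyclic shift of the $q$-ary digit string and hence preserves digit weight, the exponents $1$, $1+q$, $1+q^2$, $1+q+q^2$ have digit patterns of the following types: weight $1$; weight $2$ with the two nonzero digits adjacent; weight $2$ with a gap; weight $3$ with the nonzero digits adjacent --- each of these patterns being aperiodic once $m\ge 6$. Hence $C_1$, $C_{1+q}$, $C_{1+q^2}$, $C_{1+q+q^2}$ are pairwise disjoint, each of size $m$, and none contains $0$. Expanding each trace over its coset then shows that in the expansion $s_t=\sum_i c_i\alpha^{it}$ furnished by Lemma~\ref{lem1.0.1} one has $c_i\equiv 1$ on $C_{1+q+q^2}$ and on $C_{1+q^2}$, $c_i\equiv 2$ on $C_{1+q}$, $c_i\equiv 3$ on $C_1$, $c_0\equiv m$, and $c_i=0$ elsewhere (all congruences modulo $p$). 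So $I=\{i:c_i\ne 0\}$ always contains $C_{1+q+q^2}\cup C_{1+q^2}$, contains $C_{1+q}$ precisely when $p\ne 2$, contains $C_1$ precisely when $p\ne 3$, and contains $0$ precisely when $\mathbb{N}_p(m)=1$. Reading off $\mathbb{L}_s=|I|$ and $\mathbb{M}_s(x)=\prod_{i\in I}(1-\alpha^ix)$, together with the facts that $\prod_{i\in C_k}(1-\alpha^ix)$ is $m_{\alpha^{-k}}(x)$ up to a unit factor and that the factor at $i=0$ is $(x-1)$, produces exactly the two stated expressions for $\mathbb{L}_s$ and $\mathbb{M}_s(x)$ according to $p\ne 3$ or $p=3$ (the statement is understood for $p$ odd, so that the coefficient $2$ does not vanish and the $C_{1+q}$-part is present); the parameters $[n,n-\mathbb{L}_s,d]$ and generator polynomial $\mathbb{M}_s(x)$ of $\C_s$ then follow from the generic construction.

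For the minimum distance I would first observe that $\C_s$ has no codeword of weight $\le 1$, and that a weight-$2$ codeword $ax^{j_1}+bx^{j_2}$ would force $\alpha^{i(j_1-j_2)}$ to be constant as $i$ ranges over the defining set, hence $\gcd\big(n,\{\,i-i':i,i'\in I\,\}\big)>1$; exhibiting among the surviving cosets two exponents whose difference is coprime to $n$ --- for instance the differences $q-1$ and $q$ arising from $\{1,q,1+q\}\subseteq I$ when $p\ne 3$, or the differences $q$ and $q^2-1$ from $\{1+q^2,\,1+q+q^2,\,q+q^2\}\subseteq I$ when $p=3$ --- shows this gcd equals $1$, so $d\ge 3$. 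For the upper bound I would construct an explicit low-weight codeword: a codeword of weight $w$ corresponds to distinct nonzero $y_1,\dots,y_w\in\F_{q^m}$ and nonzero $c_\ell\in\F_q$ with $\sum_{\ell}c_\ell y_\ell^k=0$ for the exponents $k$ surviving in $I$, plus $\sum_{\ell}c_\ell=0$ when $\mathbb{N}_p(m)=1$. Choosing the $y_\ell$ inside a multiplicative subgroup of small order $t\mid n$, or inside the smallest subfield $\F_{q^d}\subseteq\F_{q^m}$ (so $d\mid m$) large enough to contain $w$ distinct nonzero elements, forces each power $y_\ell^k$ into a fixed small field --- often into $\F_q$ itself, since $y^{1+q}$ is a norm --- so the number of independent $\F_q$-linear conditions collapses to a bounded constant; taking $w$ slightly above that constant yields a nonzero solution, hence a codeword of weight $\le w$. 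This gives $d\le 8$ when $p\ne 3$ and, since the condition attached to $k=1$ is then absent, $d\le 6$ when $p=3$ with $\mathbb{N}_p(m)=0$ (and $d\le 8$ when $p=3$ with $\mathbb{N}_p(m)=1$).

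The part through the determination of $\mathbb{M}_s(x)$ is essentially routine once $m\ge 6$ is in force. I expect the genuine obstacle to be the upper bound on $d$: one must check that the collapsed power-sum system really is underdetermined for the claimed weights, and treat the smallest ground fields by hand --- when $q-1$ is too small to host enough support points, and when $m$ is prime so that $\F_q$ is the only proper subfield, one is forced to use roots of unity of small order $t\mid n$ chosen so that $t\nmid k$ for every surviving exponent $k$, recounting the surviving conditions case by case.
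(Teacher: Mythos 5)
Your determination of $\mathbb{L}_s$ and $\mathbb{M}_s(x)$ is correct and is essentially the intended argument: expand $(x+1)(x^q+1)(x^{q^2}+1)$, collapse Frobenius-conjugate monomials so that the exponents $1,\,1+q,\,1+q^2,\,1+q+q^2$ carry coefficients $3,2,1,1$ plus the constant ${\rm Tr}_{q^m/q}(1)=m\bmod p$, check via $q$-adic digit patterns that the four cosets are pairwise disjoint of size $m$ once $m\geq 6$, and read off $I$ through Lemma~\ref{lem1.0.1}. Your observation that the literal ``$p\neq 3$'' case breaks down at $p=2$ (the coefficient $2$ vanishes and $m_{\alpha^{-1-q}}$ drops out) matches exactly the correction this paper makes before Theorem~\ref{thm33}, so reading the lemma for odd $p$ is the right call. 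Your exclusion of weight-$2$ codewords via two exponents of $I$ with coprime difference is a valid, self-contained route to $d\geq 3$ (the standard route is the BCH bound using the consecutive exponents $q+q^2$ and $1+q+q^2$); note only that $q^2-1$ is not literally a difference of the three elements you list for $p=3$, though $1=(1+q+q^2)-(q+q^2)$ is, so the conclusion stands.

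The genuine gap is the upper bound on $d$. The intended proof is not constructive: it is the Sphere Packing bound applied with the already-computed redundancy $\mathbb{L}_s$ (exactly as this paper argues in Theorem~\ref{thm33}), which immediately gives $d\leq 8$ when $\mathbb{L}_s=4m+\mathbb{N}_p(m)$ or $3m+1$, and $d\leq 6$ when $\mathbb{L}_s=3m$, with no case analysis. Your subgroup/subfield collapsing scheme is only sketched, and in precisely the cases where the sharper bound is claimed it cannot reach it. Take $p=q=3$ with $3\mid m$ and $m$ odd, e.g.\ $m=9$, so $\mathbb{N}_3(m)=0$ and the claim is $d\leq 6$: the only proper subfield usable for your norm trick is $\F_{27}$, on which the surviving exponents act as $y\mapsto y^{4},\,y^{10},\,y^{13}$, giving $3+3+1=7$ independent $\F_3$-linear conditions and hence only a codeword of weight at most $8$; the multiplicative subgroups of $\F_{3^9}^{\star}$ of order $13$ or $26$ (the thirteenth and twenty-sixth roots of unity, again lying in $\F_{27}$) yield the same count of $7$ conditions, and the order-$757$ subgroup is worse since ${\rm ord}_{757}(3)=9$ splits each power-sum condition into $9$ conditions over $\F_3$. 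So the ``bounded constant'' you invoke is not bounded by the right constant exactly where $d\leq 6$ is asserted, and the method does not deliver the stated bounds uniformly in $q$ and $m$. Explicit codewords supported on small subgroups are indeed used in this paper, but only to refine the bounds under extra arithmetic hypotheses (such as a factor $\eta>3$ of $q-1$, or $3\mid q^m-1$); for the bounds of Lemma~\ref{lem29} as stated, you should replace your construction by the Sphere Packing estimate.
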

The following is an attractive open problem proposed in \cite{DSIAM} by Ding.
\begin{open}\label{open30}
For the code $\C_s$ of Lemma \ref{lem29}, do the following lower bounds hold?
\begin{equation*}
  d\geq\left\{
         \begin{array}{ll}
           5, & \hbox{where $p=3$ and $\mathbb{N}_p(m)=1$;} \\
           4, & \hbox{where $p=3$ and $\mathbb{N}_p(m)=0$;} \\
           6, & \hbox{where $p>3$ and $\mathbb{N}_p(m)=1$;} \\
           5, & \hbox{where $p>3$ and $\mathbb{N}_p(m)=0$.}
         \end{array}
       \right.
\end{equation*}
\end{open}
However, in  \cite{DSIAM} the  case $p=2$ was  missed. Next, we shall complete handling  that  remaining case and  also provide some answers to Open Problem \ref{open30}. To this end, we need the following fundamental result on elementary number theory.
\begin{lemma}\label{lem31}
Let $h\geq 1$ and let $a>1$ be an integer. Then
\begin{equation*}
  \gcd(a^l+1,a^h-1)=\left\{
                      \begin{array}{ll}
                        1, & \hbox{if $\frac{h}{\gcd(l,h)}$ is odd and $a$ is even;} \\
                        2, & \hbox{if $\frac{h}{\gcd(l,h)}$ is odd and $a$ is odd;} \\
                        a^{\gcd(l,h)}+1, & \hbox{if $\frac{h}{\gcd(l,h)}$ is even.}
                      \end{array}
                    \right.
\end{equation*}
\end{lemma}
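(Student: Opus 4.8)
The plan is to reduce everything to the standard identity $\gcd(a^u-1,a^v-1)=a^{\gcd(u,v)}-1$ and then to track a single parity carefully. Write $d=\gcd(l,h)$ and $l=dl'$, $h=dh'$ with $\gcd(l',h')=1$; the quantity whose parity governs the three cases is exactly $h'=h/\gcd(l,h)$.

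First I would establish the key reduction $\gcd(a^l+1,a^h-1)=\gcd\!\left(a^l+1,a^{\gcd(2l,h)}-1\right)$. Since $a^l+1$ divides $a^{2l}-1$, any common divisor $g$ of $a^l+1$ and $a^h-1$ divides $\gcd(a^{2l}-1,a^h-1)=a^{\gcd(2l,h)}-1$; conversely $\gcd\!\left(a^l+1,a^{\gcd(2l,h)}-1\right)$ divides $a^l+1$ and divides $a^{\gcd(2l,h)}-1$, which in turn divides $a^h-1$ because $\gcd(2l,h)\mid h$. Hence the two gcds coincide. Next I compute $\gcd(2l,h)=d\,\gcd(2l',h')=d\,\gcd(2,h')$, the last equality using $\gcd(l',h')=1$.

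Now I split according to the parity of $h'$. If $h'$ is odd, then $\gcd(2l,h)=d$, and since $d\mid l$ we have $a^l\equiv 1\pmod{a^d-1}$, so $a^l+1\equiv 2\pmod{a^d-1}$ and the gcd equals $\gcd(2,a^d-1)$; this is $1$ when $a$ is even (so $a^d-1$ is odd) and $2$ when $a$ is odd, giving the first two cases. If $h'$ is even, then $l'$ is odd because $\gcd(l',h')=1$, and $\gcd(2l,h)=2d$; working modulo $a^{2d}-1$ we have $(a^d)^2\equiv 1$, so $a^l=(a^d)^{l'}\equiv a^d$ since $l'$ is odd, whence $a^l+1\equiv a^d+1\pmod{a^{2d}-1}$. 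As $a^{2d}-1=(a^d-1)(a^d+1)$, the factor $a^d+1$ divides $a^{2d}-1$, so the gcd is $a^d+1=a^{\gcd(l,h)}+1$, the third case.

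The only steps requiring any care are the two reductions: the passage from $a^h-1$ to $a^{\gcd(2l,h)}-1$, and the observation that an odd exponent $l'$ forces $(a^d)^{l'}\equiv a^d$ modulo $a^{2d}-1$; everything else is mechanical. I would cite $\gcd(a^u-1,a^v-1)=a^{\gcd(u,v)}-1$ as a known fact rather than reprove it.
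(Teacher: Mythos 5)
Your proof is correct. Note, however, that the paper itself offers no proof of this lemma: it is stated without argument as a ``fundamental result on elementary number theory'' (a classical fact, invoked there to handle the $p=2$ case of Theorem~\ref{thm33} and the remark on \cite[Lemma 5.1]{DSIAM}), so there is no in-paper proof to compare yours against. Your derivation --- reducing $\gcd(a^l+1,a^h-1)$ to $\gcd\left(a^l+1,a^{\gcd(2l,h)}-1\right)$ via $a^l+1\mid a^{2l}-1$ together with the identity $\gcd(a^u-1,a^v-1)=a^{\gcd(u,v)}-1$, computing $\gcd(2l,h)=d\,\gcd(2,h')$ where $d=\gcd(l,h)$ and $h'=h/d$, and then evaluating $a^l+1$ modulo $a^d-1$ when $h'$ is odd, respectively modulo $a^{2d}-1$ when $h'$ is even (using that $l'$ is then forced to be odd so $(a^d)^{l'}\equiv a^d$) --- is the standard route, and every step checks out, including the parity bookkeeping $\gcd(2l',h')=\gcd(2,h')$ and the final split $\gcd(2,a^d-1)=1$ or $2$ according to the parity of $a$. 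So your write-up supplies a complete, self-contained justification for a statement the paper leaves to the reader.
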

\begin{theorem}\label{thm33}
 Let $q=p^s$ and let the code $\C_s$ be defined as the same as Lemma \ref{lem29}. Then $\C_s$ has parameters $[n,n-\mathbb{L}_s,d]$ and generator polynomial $M_s(x)$ given by
$$
\mathbb{M}_s(x)=(x-1)^{\mathbb{N}_p(m)}m_{\alpha^{-1}}(x)m_{\alpha^{-1-q}}(x)m_{\alpha^{-1-q^2}}(x)
m_{\alpha^{-1-q-q^2}}(x)
$$
if $p\notin\{2,3\}$,
$$
\mathbb{M}_s(x)=(x-1)^{\mathbb{N}_p(m)}m_{\alpha^{-1}}(x)m_{\alpha^{-1-q^2}}(x)
m_{\alpha^{-1-q-q^2}}(x)
$$
if $p=2$,
 and
$$
\mathbb{M}_s(x)=(x-1)^{\mathbb{N}_p(m)}m_{\alpha^{-1-q}}(x)m_{\alpha^{-1-q^2}}(x)
m_{\alpha^{-1-q-q^2}}(x)
$$
if $p= 3$,
where $$\mathbb{L}_s=\left\{
                       \begin{array}{ll}
                         4m+\mathbb{N}_p(m) & \hbox{if $p \notin \{2,3\}$,} \\
                         3m+\mathbb{N}_p(m) & \hbox{if $p \in \{2,3\}$.}
                       \end{array}
                     \right.
$$
In addition,
$$
\left\{
  \begin{array}{ll}
    d=3 & \hbox{if $p=2$, $s$ is odd, and $m$ is even;} \\
    3\leq d\leq 5 & \hbox{if $p=2$, $s\equiv2({\rm{mod~}}4)$, and $m$ is even;} \\
    3\leq d\leq 4 & \hbox{if $p=2$, $s\equiv0({\rm{mod~}}4)$, and $m$ is even;} \\
    3\leq d\leq 5 & \hbox{if $p=2$, $s$ is even, and $m$ is odd;} \\
    3\leq d\leq 5 & \hbox{if $p=2$, $s$ is odd, $m$ is odd, and}\\
     ~&\hbox{$q-1$ has a proper factor $\eta$ which is greater than $4$;} \\
    3\leq d\leq 8 & \hbox{if $p=2$, $s$ is odd, $m$ is odd, and}\\
     ~&\hbox{$q-1$ has no proper factor which is greater than $4$;} \\
    d=3 & \hbox{if $p=3$, $\mathbb{N}_p(m)=0$, and} \\
    ~&\hbox{$q-1$ has a proper factor $\eta$ which is greater than $3$;}\\
    3\leq d\leq 6 & \hbox{if $p=3$, $\mathbb{N}_p(m)=0$, and} \\
    ~&\hbox{$q-1$ has no proper factor which is greater than $3$;}\\
    3\leq d\leq 4 & \hbox{if $p=3$, $\mathbb{N}_p(m)=1$, and} \\
    ~&\hbox{$q-1$ has a proper factor $\eta$ which is greater than $3$;}\\
    3\leq d\leq 8 & \hbox{if $p=3$, $\mathbb{N}_p(m)=1$, and} \\
    ~&\hbox{$q-1$ has no proper factor which is greater than $3$;}\\
    3\leq d\leq 4 & \hbox{if $p=5$ and $\mathbb{N}_p(m)=0$;}\\
    3\leq d\leq 5 & \hbox{if $p=5$, $\mathbb{N}_p(m)=1$, and}\\
     ~&\hbox{$q-1$ has a proper factor $\eta$ which is greater than $4$;}\\
    3\leq d\leq 4 & \hbox{if $p=5$ and $\mathbb{N}_p(m)=1$, and}\\
    ~&\hbox{$q-1$ has no proper factor which is greater than $4$;}\\
    3\leq d\leq 4 & \hbox{if $p>5$ and $\mathbb{N}_p(m)=0$;}\\
    3\leq d\leq 5 & \hbox{if $p>5$ and $\mathbb{N}_p(m)=1$.}\\
  \end{array}
\right.
$$
\end{theorem}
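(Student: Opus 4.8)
The plan is to read all three assertions --- the generator polynomial, the dimension, and the distance estimates --- off the trace expansion of $f(x+1)=(x+1)^{q^2+q+1}$. In characteristic $p$ one has $(x+1)^{q^2+q+1}=(x^{q^2}+1)(x^{q}+1)(x+1)$, which expands into the eight monomials $x^{e}$ with $e\in\{q^2+q+1,\,q^2+q,\,q^2+1,\,q^2,\,q+1,\,q,\,1,\,0\}$. Applying ${\rm Tr}_{q^m/q}$ and using ${\rm Tr}_{q^m/q}(y^{q})={\rm Tr}_{q^m/q}(y)$ collapses $x^{q^2},x^{q},x$ onto the $q$-cyclotomic coset $C_1$ with total multiplicity $3$, collapses $x^{q^2+q}$ onto $C_{q+1}$ with multiplicity $2$, leaves $C_{q^2+1}$ and $C_{q^2+q+1}$ with multiplicity $1$, and turns the constant term into ${\rm Tr}_{q^m/q}(1)=m$. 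Reducing these multiplicities modulo $p$ is exactly the source of the trichotomy: $C_1$ disappears precisely when $p=3$, $C_{q+1}$ disappears precisely when $p=2$, and the factor $x-1$ is present precisely when $\mathbb{N}_p(m)=1$. Feeding the surviving index set into Lemma~\ref{lem1.0.1} gives $\mathbb{M}_s(x)$ in the three stated forms.

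For the dimension I would check the coset sizes. The size restriction inherited from Lemma~\ref{lem29} makes $m$ large enough that each of $1,q+1,q^2+1,q^2+q+1$ lies below $q^m-1$ and has a base-$q$ digit string of length $m$ consisting of a short block of $1$'s; such strings have no proper period and are pairwise non-rotations of one another (their digit weights are $1,2,2,3$, and the two weight-$2$ strings place their $1$'s at different cyclic distances), so $|C_1|=|C_{q+1}|=|C_{q^2+1}|=|C_{q^2+q+1}|=m$, these cosets are pairwise disjoint, and $|C_0|=1$. This gives $\mathbb{L}_s=4m+\mathbb{N}_p(m)$ when $p\notin\{2,3\}$ and $\mathbb{L}_s=3m+\mathbb{N}_p(m)$ when $p\in\{2,3\}$, which matches $\deg\mathbb{M}_s(x)$, so $\C_s$ has parameters $[n,n-\mathbb{L}_s,d]$. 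For $d\geq 3$ I would use the BCH bound: the defining set of $\C_s$ always contains two consecutive residues modulo $n$ --- for $p\neq 3$, $-q^2-1\in C_{-(q^2+1)}$ while $-q^2=-q\cdot q\in C_{-1}$, and for $p=3$, $-q^2-q-1\in C_{-(q^2+q+1)}$ while $-q^2-q=-q(q+1)\in C_{-(q+1)}$. (When $p\neq 3$ one may instead observe that $m_{\alpha^{-1}}(x)\mid\mathbb{M}_s(x)$ with $\alpha^{-1}$ primitive, so $\C_s$ is a subcode of a Hamming code; the gcd facts used to decide primitivity of such roots are precisely Lemma~\ref{lem31}.)

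The distance upper bounds, and the cases with $d=3$ exactly, all come from one device: restrict the parity checks of $\C_s$ to the codewords supported on a single multiplicative coset $y\Omega$, where $\Omega=\langle\xi\rangle\subseteq\F_{q^{\delta}}^{\star}\subseteq\F_{q^m}^{\star}$, $\xi$ has order $\eta$, and $\delta\in\{1,2,3\}$. A weight-$w$ codeword of $\C_s$ is a choice of distinct $y_j\in\F_{q^m}^{\star}$ and $\lambda_j\in\F_q^{\star}$ with $\sum_j\lambda_j y_j^{e}=0$ for every exponent $e$ in the surviving set $E$ (and $e=0$ when $\mathbb{N}_p(m)=1$), one root per $q$-coset sufficing because the $\lambda_j$ lie in $\F_q$. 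Since $\xi\in\F_{q^{\delta}}$, each $e$ acts on $\Omega$ as the scalar $\xi^{\,e\bmod\eta}$, so a codeword of $\C_s$ supported on $y\Omega$ is nothing but a codeword of the length-$\eta$ cyclic code $\C_\eta$ over $\F_q$ whose defining set is the union of the $q$-cyclotomic cosets modulo $\eta$ meeting $\{e\bmod\eta:e\in E\}$, and a minimum-weight word of $\C_\eta$ lifts back to a codeword of $\C_s$ of the same weight. The case hypotheses are tuned to make $\C_\eta$ small and explicit: choosing $\eta\mid q-1$ (possible when the parities of $s,m$ and the ``proper factor'' conditions allow) makes the cosets modulo $\eta$ singletons, so the reduced defining set is the interval $\{1,2,3\}$ or $\{0,1,2,3\}$, or the near-interval $\{0,2,3\}$, and the BCH and Singleton bounds make $\C_\eta$ a (generalized) Reed--Solomon code of minimum distance $4$, $5$ or $\leq 4$; choosing $\eta=3\mid q+1$ (so $q\equiv -1\pmod{3}$ and the coset through $\{1,2\}$ is all of $(\Z/3)^{\star}$) makes $\C_\eta$ the length-$3$ repetition code of distance $3$, which together with $d\geq 3$ pins down $d=3$. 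The two ``$d\leq 8$'' lines are the residual situations where $q$ is too small for any useful subfield ($q=2$, respectively $q=3$); there one falls back on a direct low-weight codeword, the $p=3$ instance appearing already in Lemma~\ref{lem29}.

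The first two steps are routine bookkeeping; the substantial work is the case analysis behind the upper bounds. For each residue of $p$ among $\{2,3,5,>5\}$, each parity pattern of $(s,m)$, and each divisibility hypothesis on $q-1$, one must pick the right $\xi$ (and the right $\delta$, checking $\F_{q^{\delta}}\subseteq\F_{q^m}$), verify $\xi\notin\F_q$ whenever a collision among the $e\bmod\eta$ is being exploited, and --- this is the main pitfall --- compute the \emph{conjugate-closed} defining set of $\C_\eta$ rather than the naive set $\{e\bmod\eta:e\in E\}$, since it is that closure which actually governs $d(\C_\eta)$. One then confirms $\C_\eta$ is nontrivial and that the $(x-1)^{\mathbb{N}_p(m)}$ check is among the constraints of $\C_\eta$, so that the lifted word genuinely lies in $\C_s$.
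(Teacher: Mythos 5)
Your first two steps and your lower bound coincide with the paper's proof: the expansion of ${\rm{Tr}}_{r/q}\bigl((x+1)^{q^2+q+1}\bigr)$ with coefficient $3$ on the coset of $1$ and coefficient $2$ on the coset of $q+1$ is exactly how the paper obtains the three shapes of $\mathbb{M}_s(x)$ and the factor $(x-1)^{\mathbb{N}_p(m)}$ via Lemma~\ref{lem1.0.1}; the dimension is read off the coset sizes as in Lemma~\ref{lem29}; and $d\geq 3$ is the BCH bound. Your subgroup device is also, in substance, the paper's method for the upper bounds: the paper exhibits short polynomials $g_1,\dots,g_7$ over $\F_q$ (tetranomials, pentanomials, the relation $\beta+\beta^2+1=0$ for $\beta$ of order $3$) vanishing at $\gamma^{1}$, $\gamma^{1+q^2}$, $\gamma^{1+q+q^2}$ (and at $1$ when $\mathbb{N}_p(m)=1$) for a root of unity $\gamma$ of small order $\eta$, with Lemma~\ref{lem31} supplying the coprimality facts; this is precisely your length-$\eta$ code $\C_\eta$ with $\eta\mid q-1$, or $\eta=3\mid q+1$ for the $d=3$ case. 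So the route is the same, merely repackaged in MDS language.

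The gap lies in the residual upper bounds and in the unexecuted case analysis. The lines $3\leq d\leq 8$ (for $p=2$, $s$ and $m$ odd, $q-1$ without a proper factor greater than $4$) and the analogous $p=3$ lines are not obtained from any explicit codeword: for $p=3$ they are inherited from Lemma~\ref{lem29}, and for $p=2$ the paper gets them from the Sphere Packing bound applied with the dimension $n-\mathbb{L}_s$. Your plan replaces this by ``fall back on a direct low-weight codeword'', but no such word is produced and none is readily available (for $q=2$ and $m$ odd the zeros are $1,\alpha^{-1},\alpha^{-5},\alpha^{-7}$ and an explicit word of weight at most $8$ is not at hand); note also that the hypothesis covers not only $q=2$ but $q=8,32,128,\dots$, so this is not a finite check. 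Without the Sphere Packing bound these lines remain unproven. Moreover, the fifteen-line case analysis is the actual content of the theorem and is only announced in your sketch: for each line one must choose $(\eta,\delta)$, verify $\F_{q^\delta}\subseteq\F_{q^m}$ (this is where the parities of $s$ and $m$ enter), and compute the conjugate-closed reduced defining set. Some lines are genuinely delicate: for $p=5$, $\mathbb{N}_p(m)=1$ and $q-1$ without a factor exceeding $4$, the naive choice $\eta=4$ degenerates, since $(x-1)g_6(x)=x^4-1$ reduces to the zero word modulo $x^{\eta}-1$ (the subcase the paper itself treats separately and only sketches); and the line ``$p=2$, $s$ even, $m$ odd'' resists your device when $s\equiv 2\ ({\rm mod~}4)$, because then $5\nmid q^m-1$ and no suitable $\eta$ dividing $q^{\delta}-1$ with $\delta\mid m$ is available, so that case needs a separate justification rather than the generic recipe.
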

\begin{proof}
Let us observe that
\begin{small}\begin{eqnarray*}
  {\rm{Tr}}_{r/q}\left(f(x+1)\right)  &=&
  {\rm{Tr}}_{r/q}\left(x^{q^{2}+q+1}+(x^{q^2+q}+x^{q+1})+x^{q^2+1}+(x^{q^2}+x^q+x)+1\right)  \\
  ~ &=& \left\{
          \begin{array}{ll}
            {\rm{Tr}}_{r/q}\left(x^{q^{2}+q+1}+(x^{q^2+q}+x^{q+1})+(x^{q^2}+x^q+x)+1\right), & \hbox{if $p\geq 5$;} \\
            {\rm{Tr}}_{r/q}\left(x^{q^{2}+q+1}+x^{q^2+1}+x+1\right), & \hbox{if $p=2$;} \\
            {\rm{Tr}}_{r/q}\left(x^{q^{2}+q+1}+x^{q^2+1}-x^{q+1}+1\right), & \hbox{if $p=3$.}
          \end{array}
        \right.
\end{eqnarray*}
\end{small}
Then the minimal polynomial $\mathbb{M}_s(x)$ is
$$(x-1)^{\mathbb{N}_p(m)}m_{\alpha^{-1}}(x)m_{\alpha^{-1-q^2}}(x)
m_{\alpha^{-1-q-q^2}}(x)$$
 when $p=2$.

We discuss the upper and lower bounds of the minimum distance of $\C_s$ according to the characteristic of $\F_q$.
\begin{enumerate}
  \item[(i)]$p=2$

  In this case, the parity-check matrix of $\C_s$ is
\begin{equation*}
  \left(
    \begin{array}{cccc}
      \alpha & \alpha^2 & \cdots & \alpha^{n-1} \\
      \alpha^{1+q^2} & (\alpha^2)^{1+q^2} & \cdots & (\alpha^{n-1})^{1+q^2} \\
      \alpha^{1+q+q^2} & (\alpha^2)^{1+q+q^2} & \cdots & (\alpha^{n-1})^{1+q+q^2}
    \end{array}
  \right)
\end{equation*}
if $\mathbb{N}_p(m)=0$ and is
\begin{equation*}
  \left(
    \begin{array}{cccc}
      1 & 1 & \cdots & 1 \\
      \alpha & \alpha^2 & \cdots & \alpha^{n-1} \\
      \alpha^{1+q^2} & (\alpha^2)^{1+q^2} & \cdots & (\alpha^{n-1})^{1+q^2} \\
      \alpha^{1+q+q^2} & (\alpha^2)^{1+q+q^2} & \cdots & (\alpha^{n-1})^{1+q+q^2}
    \end{array}
  \right)
\end{equation*}if $\mathbb{N}_p(m)=1$.
  According to the BCH bound and the Sphere Packing bound, we have
$$
\left\{
  \begin{array}{ll}
    3\leq d\leq 6 & \hbox{if  $m$ is even,} \\
    3\leq d\leq 8 & \hbox{if  $m$ is odd.} \\
  \end{array}
\right.
$$
If $s$ is odd and $m$ is even, then $3|q^m-1$. According to Lemma \ref{lem31}, one gets $\gcd(3,1+q^2)=1$ and $\gcd(3,1+q+q^2)=1$. Let $\beta=\alpha^{\frac{q^m-1}{3}}\in \F_{q^m}^*$. Then
  \begin{equation*}
    ~\left\{
       \begin{array}{ll}
         \beta+\beta^2+1=0, \\
         \beta^{1+q^2}+(\beta^2)^{1+q^2}+1=0,  \\
         \beta^{1+q+q^2}+(\beta^2)^{1+q+q^2}+1=0.
       \end{array}
     \right.
  \end{equation*}
  Therefore, the minimum  Hamming distance of $\C_s$ equals $3$ when $s$ is odd and $m$ is even.

  If $s$ is even and $m$ is even, then $5|q^m-1$. According to Lemma \ref{lem31}, one gets
  $\gcd(5,q^2+1)=1$ and $\gcd(5,1+q+q^2)=1$. Let $\gamma=\alpha^{\frac{q^m-1}{5}}$. There exist $\gamma,\gamma^2,\gamma^3,\gamma^4$ and $1\in \F_q$ such that
  \begin{equation*}
    \left\{
      \begin{array}{ll}
        \gamma+\gamma^2+\gamma^3+\gamma^4+1=0, \\
         \gamma^{1+q^2}+(\gamma^2)^{1+q^2}+(\gamma^3)^{1+q^2}
         +(\gamma^4)^{1+q^2}+1=0,  \\
         \gamma^{1+q+q^2}+(\gamma^2)^{1+q+q^2}+(\gamma^3)^{1+q+q^2}
         +(\gamma^4)^{1+q+q^2}+1=0.
      \end{array}
    \right.
  \end{equation*} Therefore, $3\leq d\leq 5$.
Moreover, if $s\equiv0({\rm{mod~}4})$ and $m$ is even, then $5|q-1$ and $\gamma\in\F_q^*$.
Let $g_1(x):=(x+\gamma)(x+\gamma^2)(x+\gamma^3)=x^3+(\gamma+\gamma^2+\gamma^3)x^2+
(1+\gamma+\gamma^2)\gamma^3x+\gamma$. Then
\begin{equation*}
  \left\{
    \begin{array}{ll}
      g_1(\gamma)=0, \\
      g_1(\gamma^{1+q^2})=g_1(\gamma^2)=0, \\
      g_1(\gamma^{1+q+q^2})=g_1(\gamma^3)=0.
    \end{array}
  \right.
\end{equation*}
Then $3\leq d\leq 4$ since $g_1(x)$ is a tetranomial over $\F_q$.

  If $m$ is odd and $s$ is even, then $5|q-1$ and the all-one vector is a codeword of $\C_s$. Recall that $\gamma=\alpha^{\frac{q^m-1}{5}}$. Let $\sigma_i(x_1,x_2,\ldots,x_l)$ denote the elementary symmetric polynomial of degree $i$ in $l$ variables  $x_1, x_2, \ldots,x_l$. Let
  \begin{eqnarray*}
    g_2(x): &=& (x+1)(x+\gamma)(x+\gamma^2)(x+\gamma^3) \\
    ~ &=& x^4+\sigma_1(1,\gamma,\gamma^2,\gamma^3)x^3
  +\sigma_2(1,\gamma,\gamma^2,\gamma^3)x^2+\sigma_3(1,\gamma,\gamma^2,\gamma^3)x\\
  ~&~&
  +\sigma_4(1,\gamma,\gamma^2,\gamma^3).
  \end{eqnarray*}
  It is easy to verify that $g_2(x)\in \F_q[x]$ and $\sigma_i(1,\gamma,\gamma^2,\gamma^3)\neq 0$ for all $i\in\{1,2,3,4\}.$ Hence, we obtain
  \begin{equation*}
    \left\{
      \begin{array}{ll}
        g_2(1)=0, \\
        g_2(\gamma)=0, \\
        g_2(\gamma^{1+q^2})=g_2(\gamma^2)=0,  \\
        g_2(\gamma^{1+q+q^2})=g_2(\gamma^3)=0.
      \end{array}
    \right.
  \end{equation*}
  Therefore, $3\leq d\leq 5$ since $g_2(x)$ is a pentanomial.

  If $m$ is odd, $s$ is odd, and there exists a proper factor $\eta$ of $q-1$ with $\eta>4$, then $\alpha^{\frac{q^m-1}{\eta}} \in\F_q^*$. Let $\gamma_1=\alpha^{\frac{q^m-1}{\eta}}$. Let
\begin{eqnarray*}
  g_3(x): &=& (x+1)(x+\gamma_1)(x+\gamma_1^2)(x+\gamma_1^3) \\
  ~ &=& x^4+\sigma_1(1,\gamma_1,\gamma_1^2,\gamma_1^3)x^3
  +\sigma_2(1,\gamma_1,\gamma_1^2,\gamma_1^3)x^2+\sigma_3(1,\gamma_1,\gamma_1^2,\gamma_1^3)x\\
  ~&~&
  +\sigma_4(1,\gamma_1,\gamma_1^2,\gamma_1^3).
\end{eqnarray*}
$\sigma_i(1,\gamma_1,\gamma_1^2,\gamma_1^3)\neq 0$ for all $i\in\{1,2,3,4\}$ since $\eta>4$.
Then $3\leq d\leq 5$ follows from the facts that $g_3(1)=g_3(\gamma_1)=g_3(\gamma_1^{1+q^2})
=g_3(\gamma_1^{1+q+q^2})=0$ and $g_3(x)$ is a pentanomial.
  \item[(ii)]$p=3$

  In this case, the parity-check matrix of $\C_s$ is
\begin{equation*}
  \left(
    \begin{array}{cccc}
      (\alpha)^{q+1} & (\alpha^2)^{q+1} & \cdots & (\alpha^{n-1})^{q+1} \\
      \alpha^{1+q^2} & (\alpha^2)^{1+q^2} & \cdots & (\alpha^{n-1})^{1+q^2} \\
      \alpha^{1+q+q^2} & (\alpha^2)^{1+q+q^2} & \cdots & (\alpha^{n-1})^{1+q+q^2}
    \end{array}
  \right)
\end{equation*}
if $\mathbb{N}_p(m)=0$ and is
\begin{equation*}
  \left(
    \begin{array}{cccc}
      1 & 1 & \cdots & 1 \\
      (\alpha)^{q+1} & (\alpha^2)^{q+1} & \cdots & (\alpha^{n-1})^{q+1} \\
      \alpha^{1+q^2} & (\alpha^2)^{1+q^2} & \cdots & (\alpha^{n-1})^{1+q^2} \\
      \alpha^{1+q+q^2} & (\alpha^2)^{1+q+q^2} & \cdots & (\alpha^{n-1})^{1+q+q^2}
    \end{array}
  \right)
\end{equation*}
if $\mathbb{N}_p(m)=1$.
From Lemma \ref{lem29}, we know $d\geq 3$. If there exists an integer $\eta_1>3$ satisfying $\eta_1|q-1$ and $\eta_1\neq q-1$.
Let $\gamma_2=\alpha^{\frac{q^m-1}{\eta_1}}$. Then $\gamma_2\in\F_q^*$,  $\gamma_2^{q+1}=\gamma_2^{1+q^2}=\gamma_2^2$,
and $\gamma_2^{1+q+q^2}=\gamma_2^3$. Let
\begin{eqnarray*}
  g_4(x): &=& (x-\gamma_2^2)(x-\gamma_2^{3}) \\
  ~ &=&  x^2-\sigma_1(\gamma_2^2,\gamma_2^3)x
+\sigma_2(\gamma_2^2,\gamma_2^3)
\end{eqnarray*}
and let
\begin{eqnarray*}
  g_5(x): &=& (x-1)(x-\gamma_2^2)(x-\gamma_2^{3}) \\
  ~ &=&  x^3-\sigma_1(1,\gamma_2^2,\gamma_2^3)x^2
+\sigma_2(1,\gamma_2^2,\gamma_2^3)x-
  \sigma_3(1,\gamma_2^2,\gamma_2^3).
\end{eqnarray*}

If $\mathbb{N}_p(m)=0$, from the facts that
$g_4(\gamma_2^2)
=g_4(\gamma_2^3)=0$ and $\sigma_{i}(\gamma_2^2,\gamma_2^3)\neq0$ for $i\in\{1,2\}$, then $d=3$.
If $\mathbb{N}_p(m)=1$, from the facts that
$g_5(1)=g_5(\gamma_2^2)
=g_5(\gamma_2^3)=0$ and $\sigma_{i}(1,\gamma_2^2,\gamma_2^3)\neq0$ for $i\in\{1,2,3\}$, then $3\leq d\leq 4$.
  \item[(iii)]$p>3$

  In this case, the parity-check matrix of $\C_s$ is
\begin{equation*}
  \left(
    \begin{array}{cccc}
    \alpha & \alpha^2 & \cdots & \alpha^{n-1} \\
      (\alpha)^{q+1} & (\alpha^2)^{q+1} & \cdots & (\alpha^{n-1})^{q+1} \\
      \alpha^{1+q^2} & (\alpha^2)^{1+q^2} & \cdots & (\alpha^{n-1})^{1+q^2} \\
      \alpha^{1+q+q^2} & (\alpha^2)^{1+q+q^2} & \cdots & (\alpha^{n-1})^{1+q+q^2}
    \end{array}
  \right)
\end{equation*}
if $\mathbb{N}_p(m)=0$ and is
\begin{equation*}
  \left(
    \begin{array}{cccc}
      1 & 1 & \cdots & 1 \\
      \alpha & \alpha^2 & \cdots & \alpha^{n-1} \\
      (\alpha)^{q+1} & (\alpha^2)^{q+1} & \cdots & (\alpha^{n-1})^{q+1} \\
      \alpha^{1+q^2} & (\alpha^2)^{1+q^2} & \cdots & (\alpha^{n-1})^{1+q^2} \\
      \alpha^{1+q+q^2} & (\alpha^2)^{1+q+q^2} & \cdots & (\alpha^{n-1})^{1+q+q^2}
    \end{array}
  \right)
\end{equation*}
if $\mathbb{N}_p(m)=1$.
\begin{itemize}
  \item If $p=5$, then $4|q-1$ and $\alpha^{\frac{q^m-1}{4}}\in\F_q^*$. Let $\gamma_3=\alpha^{\frac{q^m-1}{4}}$ and let $g_6(x)=(x-\gamma_3)(x-\gamma_3^2)(x-\gamma_3^3)$.
  It is easy to check that
   $g_6(x)\in \F_q[x]$ and $g_6(x)$ is a tetranomial. Therefore, $3\leq d\leq 4$ if $\mathbb{N}_p(m)=0$. Note that $(x-1)g_6(x)$ is a binomial instead of a pentanomial. Therefore, when $\mathbb{N}_p(m)=1$, we still need to discuss case by case. We shall omit the proofs which can be  performed similarly.
  \item If $p>5$, then $p-1>4$. Let $\gamma_4=\alpha^{\frac{q^m-1}{p-1}}$ and let $$g_7(x)=(x-1)(x-\gamma_4)
      (x-\gamma_4^2)(x-\gamma_4^{3}).$$ By imitating the above-proof steps, we can get the desired conclusion.
\end{itemize}
\end{enumerate}
This completes the proof.
\end{proof}
Below we present some explicit computational examples.
\begin{example}
\begin{itemize}
\item Let $(m,h,q)=(6,3,2)$ and $\alpha$ be a generator of $\F_{2^m}$ with $\alpha^6 + \alpha^4 + \alpha^3 + \alpha + 1=0$. Then the generator polynomial of the  binary code $\C_s$ is $\mathbb{M}_s(x)=x^{18} + x^{16} + x^8 + x^7 + x^5 + x^2 + 1
$ and $\C_s$ is a $[63,45,3]$ binary cyclic code.
\item Let $(m,h,q)=(6,3,4)$ and $\alpha$ be a generator of $\F_{4^m}$ with $\alpha^6 + \alpha^5 + w\alpha^4 + w^2\alpha^3 + \alpha^2 + \alpha + w=0$, where $w$ is a primitive element of $\F_4$. Then the generator polynomial of the code $\C_s$ is $\mathbb{M}_s(x)=x^{18} + w^2x^{17} + wx^{16} + wx^{15} + x^{12} + wx^{11} + w^2x^{10} + w^2x^9 + w^2x^6 + x^5 + wx^4 + w^2x + 1
$ and $\C_s$ is a $[4095,4077,4]$ cyclic code over $\F_4$.
\item Let $(m,h,q)=(7,3,4)$ and $\alpha$ be a generator of $\F_{4^m}$ with $\alpha^7 + \alpha^4 + \alpha^2 + w\alpha + w=0$, where $w$ is a primitive element of $\F_4$. Then the generator polynomial of the code $\C_s$ is $\mathbb{M}_s(x)=x^{22} + w^2x^{21} + w^2x^{19} + x^{18} + x^{16} + w^2x^{13} + wx^{12} + wx^{11} + w^2x^{10} + x^8 + wx^7 + x^6 + x^5 + wx^3 + x +1
$ and $\C_s$ is a $[16383,16361,d]$ cyclic code over $\F_4$.
\end{itemize}
\end{example}
\begin{remark}
When $m>6$ and $p>2$, the parameter of $\C_s$ is very large. Considering the huge amount of computation, it is difficult for us to use a Magma program to verify the minimum distance of $\C_s$.
\end{remark}
Inspired by Theorem \ref{thm33}, we give a upper bound for the minimum distance of the cyclic code $\C_s$ in
\cite[Corollary 5.24]{DSIAM} if $m$ is even. The following result provides a partial answer to the Open problem 5.16 proposed in \cite{DSIAM}.
Meanwhile, we  provide the  correct parameters of the codes considered in  \cite[Corollary 5.24]{DSIAM}.
\begin{theorem}
Let $h=3$ and $m$ satisfy $\gcd(3,m)=1$ and $m\geq 7$.
Let $s^{\infty}$ be the sequence of {\rm{(\ref{infty})}}, where $f(x)=x^{\frac{3^h+1}{2}}$. If $m$ is even, then the ternary code $\C_s$ has parameters $[3^m-1,3^m-2-7m,d]$ and generator polynomial
$$\mathbb{M}_s(x)=
(x-1)m_{\alpha^{-1}}(x)m_{\alpha^{-2}}(x)
m_{\alpha^{-5}}(x)m_{\alpha^{-10}}(x)m_{\alpha^{-11}}(x)m_{\alpha^{-13}}(x)m_{\alpha^{-14}}(x).$$
In addition,
$5\leq d\leq 8$.
\end{theorem}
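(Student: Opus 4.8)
The plan is to carry out the standard analysis for these trace‑sequence codes and then pin down the distance with an explicit low‑weight codeword. \emph{First}, I would expand $f(x+1)$ for $f(x)=x^{(3^{3}+1)/2}=x^{14}$ over $\F_{3}$. Since $14=2+1\cdot 3+1\cdot 3^{2}$, one has $(x+1)^{14}=(x+1)^{2}(x+1)^{3}(x+1)^{9}=(x^{2}+2x+1)(x^{3}+1)(x^{9}+1)$, that is,
\[
(x+1)^{14}=x^{14}+2x^{13}+x^{12}+x^{11}+2x^{10}+x^{9}+x^{5}+2x^{4}+x^{3}+x^{2}+2x+1 .
\]
Applying $\mathrm{Tr}_{3^{m}/3}$ and using $\mathrm{Tr}_{3^{m}/3}(\beta^{3})=\mathrm{Tr}_{3^{m}/3}(\beta)$, the exponents $9,3,1$ merge into $C_{1}$ with total coefficient $1+1+2\equiv 1$; the exponents $12,4$ merge into $C_{4}$ with total coefficient $1+2\equiv 0$, so $C_{4}$ contributes nothing; the exponents $2,5,10,11,13,14$ survive with nonzero coefficients; and $\mathrm{Tr}_{3^{m}/3}(1)=m\not\equiv 0\pmod 3$ because $\gcd(3,m)=1$. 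By Lemma~\ref{lem1.0.1} the index set is $I=\{0\}\cup C_{1}\cup C_{2}\cup C_{5}\cup C_{10}\cup C_{11}\cup C_{13}\cup C_{14}$, and since these seven cosets are pairwise disjoint this produces the stated $\mathbb{M}_{s}(x)$, with $(x-1)^{\mathbb{N}_{3}(m)}=(x-1)$, and $\mathbb{L}_{s}=|I|$.

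\emph{Second}, I would verify $|C_{j}|=m$ for each $j\in\{1,2,5,10,11,13,14\}$. Writing $N_{j}=(3^{m}-1)/\gcd(3^{m}-1,j)$, one has $|C_{j}|=\mathrm{ord}_{N_{j}}(3)$, a divisor of $m$; since $\gcd(3^{m}-1,j)\le j\le 14$ and $m\ge 8$, we get $N_{j}\ge(3^{m}-1)/14>3^{m/2}-1\ge 3^{d}-1$ for every proper divisor $d$ of $m$, hence $N_{j}\nmid 3^{d}-1$ and so $\mathrm{ord}_{N_{j}}(3)=m$. Therefore $\mathbb{L}_{s}=1+7m$, so $\dim\C_{s}=(3^{m}-1)-(7m+1)=3^{m}-2-7m$, which together with the first step gives the claimed parameters and generator polynomial.

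\emph{Third}, for the lower bound I would invoke the BCH bound. The exponents of the zeros of $\mathbb{M}_{s}(x)$ contain $\{0\}\cup C_{-1}\cup C_{-2}$, hence the four consecutive integers $-3,-2,-1,0$ (note $-3=3\cdot(-1)\in C_{-1}$), whereas $-4$ and $1$ are not among these exponents (the $3$‑weights of $4$ and of $3^{m}-2$ do not match those of any of $1,2,5,10,11,13,14$); so $\{-3,-2,-1,0\}$ is a maximal run and $d\ge 5$.

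\emph{Finally}, for the upper bound I would exhibit a codeword of weight $8$, and \emph{this is the step I expect to be the crux}. Using that $m$ is even, so $8\mid 3^{m}-1$, put $\zeta=\alpha^{(3^{m}-1)/8}$, a primitive $8$‑th root of unity, and $G(x)=\frac{x^{8}-1}{x+1}=x^{7}+2x^{6}+x^{5}+2x^{4}+x^{3}+2x^{2}+x+2\in\F_{3}[x]$, whose roots are exactly the $8$‑th roots of unity other than $-1=\zeta^{4}$. Then $g(x)=G\!\big(x^{(3^{m}-1)/8}\big)$ has exactly $8$ nonzero terms and degree $7(3^{m}-1)/8<3^{m}-1$; for every exponent $i$ in the zero set of $\mathbb{M}_{s}(x)$ one has $i\bmod 8\in\{0,1,2,3,5,6,7\}$ (the representatives $0,-1,-2,-5,-10,-11,-13,-14$ reduce to $0,7,6,3,6,5,3,2$, and multiplication by $3$ modulo $8$ keeps these residues in $\{0,1,2,3,5,6,7\}$), so $g(\alpha^{i})=G(\zeta^{\,i\bmod 8})=0$; since $g\in\F_{3}[x]$, this gives $\mathbb{M}_{s}(x)\mid g(x)$, so $g$ is a nonzero codeword of Hamming weight $8$ and $d\le 8$. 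Hence $5\le d\le 8$. The real obstacle is recognizing that the modulus $8$ works: one needs a divisor $\ell\mid 3^{m}-1$ for which the eight ``forbidden'' exponents $0,1,2,5,10,11,13,14$ collapse modulo $\ell$ onto a set missing at least one residue class, so that a sparse polynomial built from $(x^{\ell}-1)$ annihilates all prescribed zeros; for $m$ even, $\ell=8$ (with missed class $4$) does the job, and this is exactly where the parity hypothesis is used.
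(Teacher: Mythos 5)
Your proposal is correct, and its crux coincides with the paper's own argument: the paper also proves $d\le 8$ by taking the eight-term polynomial $\frac{x^{8}-1}{x+1}=\sum_{i=0}^{7}(-1)^{i+1}x^{i}$ over $\F_3$, evaluating it at powers of $\gamma=\alpha^{(3^m-1)/8}$ (which exists because $m$ even gives $8\mid 3^m-1$), and using that none of the exponents $0,1,2,5,10,11,13,14$ (nor their conjugates) is $\equiv 4 \pmod 8$, which is exactly your codeword $G\bigl(x^{(3^m-1)/8}\bigr)$. The only difference is one of packaging: you derive the generator polynomial, the dimension $3^m-2-7m$, and the BCH lower bound $d\ge 5$ from scratch via the expansion of $(x+1)^{14}$ and coset-size arguments, whereas the paper imports those facts from Ding's earlier results (Lemma 5.22/Corollary 5.24 of the cited SIAM paper) and only supplies the new upper bound.
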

\begin{proof}
According to \cite[Lemma 5.22]{DSIAM}, $\mathbb{N}_3(m)=1$ if $h=3$.
Then we only need to prove that $d\leq 8$. Since $m$ is even, $8|3^m-1$. Let $\gamma_5=\alpha^{\frac{3^m-1}{8}}$ and let
$g_8(x):=\sum_{i=0}^7(-1)^{i+1}x^i$. It is not difficult to check that $g_8(x)$ is a factor of $x^8-1$ over $\F_3$. It then follows that
\begin{equation*}
  \left\{
     \begin{array}{ll}
       g_8(1)=1-1+1-1+1-1+1-1=0, \\
       g_8(\gamma_5)=\sum_{i=0}^7(-1)^{i+1}\gamma_5^i=0, \\
       g_8(\gamma_5^2)=2\sum_{i=0}^3(-1)^{i+1}(\gamma_5^2)^i=0, \\
       g_8(\gamma_5^5)=g_8(\gamma_5)=0, \\
       g_8(\gamma_5^{10})=g_8(\gamma_5^2)=0, \\
       g_8(\gamma_5^{11})=g_8(\gamma_5)=0, \\
       g_8(\gamma_5^{13})=g_8(\gamma_5)=0, \\
       g_8(\gamma_5^{14})=g_8(\gamma_5^2)=0.
     \end{array}
   \right.
\end{equation*}
 Since $g_8(x)$ is an eight-term polynomial, $d\leq 8$.
\end{proof}
\begin{example}
Let $(m,h,q)=(8,3,3)$ and $\alpha$ be a generator of $\F_{4^m}$ with $\alpha^8 + 2\alpha^5 + \alpha^4 + 2\alpha^2 + 2\alpha + 2=0$. Then the generator polynomial of the code $\C_s$ is $\mathbb{M}_s(x)=x^{57} + 2x^{56} + x^{55} + 2x^{53} + 2x^{52} + 2x^{51} + x^{50} + 2x^{49} + 2x^{48} + x^{47}
    + x^{41} + 2x^{40} + 2x^{39} + 2x^{36} + 2x^{35} + 2x^{33} + x^{32} + 2x^{31} + 2x^{30}
    + x^{26} + x^{25} + x^{24} + x^{22} + x^{21} + x^{20} + x^{19} + x^{17} + 2x^{15} + 2x^{14} +
    x^{13} + x^{12} + 2x^{10} + x^9 + 2x^7 + 2x^6 + 2x^4 + 2x^3 + x + 2
$ and $\C_s$ is a $[6560,
6503,d]$ cyclic code over $\F_2$.
\end{example}

\section{Conclusion and outline of the contribution}\label{Conclusion}

This paper deals with cyclic codes from functions. It is based on fascinating results from Ding and a joint work of Ding and Zhou, emphasizing the role of some cryptographic functions in designing attractive binary codes.

The main contributions of this paper are listed below.
\begin{enumerate}
 \item [(1)] We complemented some results on the cyclic codes derived from the Gold,  Kasami, and Bracken-Leander functions. We focused on these functions since they have exciting characteristics: planar function, APN function, or $4$-uniform DDT function (under some restrictions on  some settings; with the notation used in the paper: $\{q,m,h\}$).
 \item [(2)] We gave partial answers to three open problems that arose in two articles  \cite{DZ} and  \cite{DSIAM} (precisely Open problem 1 raised in \cite{DZ}; while  Open problem 5.16, and Open problem 5.25  raised in  \cite{DSIAM}).
 \item [(3)] We fixed some minor errors in \cite{DSIAM,DZ} and provided correct proofs and statements.
\end{enumerate}
For \cite[Open problem 1]{DZ}, \cite[Open problems 5,16 and 5.25]{DSIAM}, we only gave partial answers.  Readers interested in working on this topic could tackle the remaining cases.

\section*{Acknowledgement}
This research is supported by National Natural Science Foundation of China (12071001). This research is supported by China Scholarship Council. The
authors would like to thank Prof. Cunsheng Ding for helpful discussions.

\end{document}